\documentclass[letterpaper,11pt]{article}
\usepackage[top=1in,bottom=1in,left=1in,right=1in]{geometry}
\usepackage[utf8]{inputenc}  

\usepackage[T1]{fontenc}
\usepackage[colorinlistoftodos,bordercolor=orange,backgroundcolor=orange!20,linecolor=orange,textsize=normalsize]{todonotes}
\usepackage{amsmath}
\usepackage{amssymb}
\usepackage{amsthm}
\usepackage{xcolor}
\usepackage{bbm}
\usepackage{accents}
\usepackage{complexity}
\usepackage{booktabs}
\usepackage{bm}
\usepackage{paralist}
\usepackage{fixmath}
\usepackage{tcolorbox}
\usepackage{csquotes}
\usepackage{hyperref}
\usepackage[capitalize]{cleveref}
\usepackage{thm-restate}
\usepackage{enumitem}

\urldef{\marcinhomepage}\url{https://www.mimuw.edu.pl/~malcin/}
\urldef{\szymonhomepage}\url{https://www.mimuw.edu.pl/~szymtor/}

\definecolor{amber}{rgb}{1.0, 0.75, 0.0}

\theoremstyle{plain}
\newtheorem{theorem}{Theorem}
\newtheorem{lemma}[theorem]{Lemma}
\newtheorem{corollary}[theorem]{Corollary}
\newtheorem{claim}[theorem]{Claim}
\newenvironment{claimproof}
  {\begin{proof}[Proof of claim]}
  {\end{proof}}

\makeatletter

\usepackage{xspace}
\usepackage{tikz}
\usepackage[ruled,vlined,linesnumbered]{algorithm2e}

\usetikzlibrary{fit}
\usetikzlibrary{arrows,arrows.meta}
\usetikzlibrary{patterns}
\usetikzlibrary{calc}
\usetikzlibrary{shapes}
\usetikzlibrary{positioning}
\usetikzlibrary{math}
\usetikzlibrary{shapes.symbols}
\usetikzlibrary{decorations.pathreplacing,calligraphy}
\usetikzlibrary{decorations.pathmorphing}
\usepackage[scr=boondox,scrscaled=1.05]{mathalfa}
\usetikzlibrary{shapes.geometric}

\newcommand{\sapsp}{\textsc{APSP}\xspace}

\renewcommand{\geq}{\geqslant}
\renewcommand{\leq}{\leqslant}
\renewcommand{\preceq}{\preccurlyeq}

\renewcommand{\le}{\leq}
\renewcommand{\ge}{\geq}
\renewcommand{\cal}{\mathcal}
\newcommand{\set}[1]{\left\{#1\right\}}

\newcommand{\eps}{\varepsilon}

\crefname{claim}{Claim}{Claims}

\newcommand\abs[1]{\lvert #1\rvert}

\newcommand\sd{\text{sd}}

\newcommand{\floor}[1]{\lfloor #1 \rfloor}

\newcommand{\Nn}{\mathbb{N}}

\newcommand{\Cc}{\mathcal{C}}

\newcommand{\Oh}{\mathcal{O}}

\newcommand{\mw}{\mathsf{mw}}

\newcommand{\res}{\mathbin{\downharpoonright}_}

\newcommand{\trF}{ {\ensuremath{\widehat{F}}} }
\newcommand{\pairs}{ {\ensuremath{\mathcal{P}}} }

\title{Time-Optimal APSP and Matrix Multiplication\\in Classes of Linear Neighborhood Complexity}
\author{
\'{E}douard Bonnet\thanks{CNRS, ENS de Lyon, Université Claude Bernard Lyon 1, LIP UMR 5668, 69342 Lyon, France. Homepage: \url{http://perso.ens-lyon.fr/edouard.bonnet}. Email: \texttt{edouard.bonnet@ens-lyon.fr}.}
\and
Julien Duron\thanks{Institute of Informatics, University of Warsaw, Poland. Email: \texttt{j.duron@uw.edu.pl}}
\and
Marcin Pilipczuk\thanks{Institute of Informatics, University of Warsaw, Poland. Homepage: \marcinhomepage.
Email: \texttt{m.pilipczuk@uw.edu.pl}.}
\and
Marek Sokołowski\thanks{Max Planck Institute for Informatics, Saarland Informatics Campus, Saarbr\"{u}cken, Germany. Homepage: \url{https://mnbvmar.github.io}. Supported by the Deutsche Forschungsgemeinschaft (DFG, German Research Foundation) grant number 559177164. Email: \texttt{msokolow@mpi-inf.mpg.de}.}
\and
Szymon Toruńczyk\thanks{Institute of Informatics, University of Warsaw, Poland. Homepage: \szymonhomepage. Email: \texttt{szymtor@uw.edu.pl}.}
}
\date{\today}

\begin{document}
\begin{titlepage}
\maketitle

\begin{abstract}
  The notion of \emph{linear neighborhood complexity} is a~very general 
  structural assumption on a~graph class, covering most classes of sparse graphs
  such as planar graphs, graphs excluding a~fixed (topological) minor, or bounded expansion graphs,
  as well as many structured classes of dense graphs, such as graphs of bounded clique-width,
  twin-width, merge-width, or flip-width. 

  In this work, we present $\Oh(n^2)$-time optimal algorithms for $n$-vertex graphs 
  coming from a~class of linear neighborhood complexity for the following problems:
  \begin{itemize}
  \item \textsc{All-Pairs Shortest Paths},
  \item the multiplication of the adjacency matrix $M$ of the input graph with any $n \times n$ matrix.
  More specifically, after a quadratic preprocessing, we can multiply~$M$ with any $n$-vector in $\Oh(n)$ time.
  \end{itemize}

  This solves several questions raised in [Bonnet, Kim, Geniet, Moon; ICALP '26], and improves and generalizes results in several other recent papers [Bonnet, Giocanti, Ossona de Mendez, Thomassé; STACS '23], [Bannach, Marwitz, Tantau; STACS '24], [Anand, van den Brand, McCarty; NeurIPS~'26], [Kozma, Opler '26], and [Cardinal, McCarty, Yuditsky '26].
  We also extend our results to classes of bounded VC density. 

  In classes of linear neighborhood complexity, we also give a~triangle-detection algorithm in randomized linear time $\Oh(n+m)$ in $n$-vertex $m$-edge graphs, a~$K_4$-detection algorithm in randomized $\Oh(n \log^5 n + m \log n)$ or deterministic $\Oh(n^2)$ time, and a~$K_5$-detection algorithm in randomized $\Oh(n \log^9 n + m \log^5 n)$ time.  
\end{abstract}

\vfill

	\thispagestyle{empty}
\end{titlepage}

\tableofcontents
\thispagestyle{empty}
\newpage\setcounter{page}{1}

\section{Introduction}\label{sec:intro}

In recent years, we have witnessed tremendous progress in understanding
very general structured graph classes and their algorithmic properties.

Classic studies of structured graph classes tackle topologically constrained
graphs such as planar graphs, or, more generally, graph classes excluding a~minor~\cite{Lovasz06}.
This is probably the limit of improved tractability
for some algorithmic problems, e.g., network design problems, where topological restrictions
on the input graph seem necessary.
However, for problems
that treat unweighted graphs and interactions between vertices
only within a bounded distance, we expect to find efficient algorithms in more general graph classes.
Problems of this type include NP-hard problems such as \textsc{Minimum Dominating Set}
or \textsc{Maximum Independent Set} (where one can hope for
approximation or parameterized algorithms that are believed to be impossible for general graphs)
and polynomial-time solvable problems
such as multiple shortest path computations
(where one can hope to beat known lower bounds that hold in general graphs).

Two decades ago, Ne\v set\v ril and Ossona de Mendez~\cite{NesetrilOdM08a,NesetrilOdM08b,NesetrilOdM10,NesetrilOdM11} initiated the study of \emph{bounded expansion} and \emph{nowhere dense} graph classes.
Subsequent research confirmed that the latter are the limit of tractability of first-order model
checking (a~meta-problem capturing most problems that concern bounded-distance interactions between vertices)
among subgraph-closed graph classes~\cite{GroheKS17}.
However, the assumption of being subgraph-closed does not allow this theory
to capture \emph{dense} structured graph classes.
In attempts to understand the limit of tractability of first-order model checking on graphs
in full generality,
width parameters twin-width~\cite{twin-width1}, flip-width~\cite{Torunczyk23}, and merge-width~\cite{merge-width} were introduced.
Boundedness of the last two is conjectured to be equivalent;  furthermore, tractability of the first-order model checking problem
in hereditary graph classes
is conjectured to be delimited by the slightly more general notions of \emph{almost bounded flip-width} or \emph{almost bounded merge-width} \cite{merge-width}.
Research on algorithms for classes of bounded twin-width/merge-width/flip-width
includes not only parameterized algorithms for first-order model checking~\cite{twin-width1,GajarskyPPT22,merge-width}, 
but also parameterized algorithms for specific problems~\cite{twin-width3,twin-width5,GanianPSSS22,BalabanMR25}, approximation algorithms \cite{twin-width3,BergeBDW23,Drabik26}, polynomial kernels~\cite{BonnetKRTW22}, and polynomial-time algorithms~\cite{twin-width3,KratschNS22,GroheN26,BonnetKGM26,DreierK26}.

However, most of the above-mentioned algorithms require a~witness of low twin-width or merge-width to be given as part of the input.
The existence of polynomial-time or parameterized algorithms providing appropriate witnesses remains a central open question in the area.

First-order formulas are able to speak about any fixed distance between vertices, 
but some problems are concerned only with direct adjacencies.
This makes them potentially efficiently solvable on even larger classes of graphs.
A promising concept in this direction is the notion of \emph{linear neighborhood complexity}~\cite{Gajarsky17}.
We say that a graph class $\mathcal{C}$ has \emph{linear neighborhood complexity}
if there exists a constant $c > 0$ such that for every $G \in \mathcal{C}$ and $\emptyset \neq A \subseteq V(G)$
the number of neighborhood traces on $A$, i.e., $\abs{\{N(v) \cap A~|~v \in V(G)\}}$, is at most $c|A|$.
We also say that $\mathcal{C}$ has VC density~$\rho$ if there exists a constant $c > 0$ such that for every $G \in \mathcal{C}$ and $\emptyset \neq A \subseteq V(G)$, $\abs{\{N(v) \cap A~|~v \in V(G)\}} \leqslant c|A|^\rho$.

Linear neighborhood complexity not only covers all classes of bounded twin-width, flip-width, and merge-width but also, for example, 
classes with bounded \emph{weak $2$-coloring number}, such as the class of 2-subdivisions of all graphs.
An even more general notion---extending graph \emph{degeneracy} to the dense setting---is bounded \emph{symmetric-difference degeneracy (sd-degeneracy)}~\cite{BonnetDSZ24}.
For a graph $G$, it is the least number $c$ such that $G$ can be reduced into a single vertex
by iteratively finding two distinct vertices $u, v$ with 
$|N(v) \triangle N(u)| \leq c$ and removing one of them from the graph.
(This notion of a decomposition is called an \emph{sd-degeneracy sequence}.
See~\cref{sec:prelim} for precise definitions.)
We refer to \cref{fig:diag-classes} for an overview of the properties of graph classes mentioned here.

\begin{figure}[!t]
  \centering
  \begin{minipage}[t]{.48\textwidth}
    \centering
    \includegraphics[width=\linewidth,height=.38\textheight,keepaspectratio]{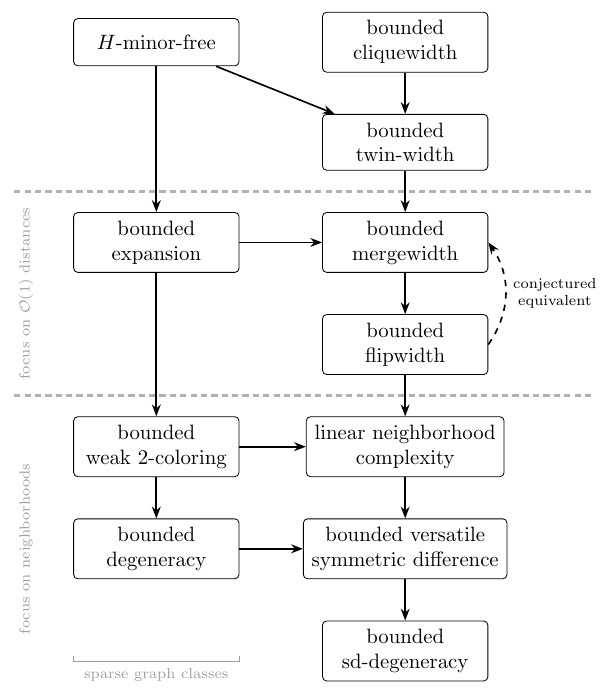}
    \caption{Properties of hereditary graph classes and implications among them.}
    \label{fig:diag-classes}
  \end{minipage}\hfill
  \begin{minipage}[t]{.48\textwidth}
    \centering
    \resizebox{\linewidth}{!}{\begin{tikzpicture}[
    node distance=11mm,
    pipeline/.style={
      draw,
      rounded corners=2pt,
      align=center,
      inner sep=4pt,
      minimum height=8mm,
      font=\small
    },
    implication/.style={-{Stealth[length=2mm]}, thick},
    edge label/.style={font=\small, align=left, text width=60mm}
  ]
  \node[pipeline] (lin) {graph with linear neighborhood complexity};
  \node[pipeline, below=of lin] (sdd) {sd-degeneracy sequence of width $\mathcal{O}(1)$};
  \node[pipeline, below=of sdd] (stm) {signed tree model with $\mathcal{O}(n)$ transversal pairs};
  \node[pipeline, below=of stm] (ibp) {interval biclique partition with $\mathcal{O}(n)$ bicliques};
  \node[pipeline, below=13mm of ibp, xshift=-19mm] (sssp) {SSSP};
  \node[pipeline, below=13mm of ibp, xshift=19mm] (matmult) {matrix multiplication};

  \draw[implication] (lin) -- node[right, edge label]
    {\cref{thm:sdd-seq}, \cref{thm:LV-algo}} (sdd);
  \draw[implication] (sdd) -- node[right, edge label]
    {Lemma 17 of~\cite{BonnetKGM26}, Lemma 3.1 of~\cite{BonnetDSZ24}} (stm);
  \draw[implication] (stm) -- node[right, edge label]
    {Item 1 of Theorem 1 of~\cite{BonnetKGM26}} (ibp);
  \draw[implication] (ibp) -- node[pos=0.55, left, xshift=-1.5mm, font=\small]
    {\cref{thm:sssp-ibp}} (sssp);
  \draw[implication] (ibp) -- node[pos=0.55, right, xshift=1.5mm, font=\small]
    {\cref{lem:matmult-ibp}} (matmult);
\end{tikzpicture}}
    \caption{Diagram of the algorithmic pipeline.}
    \label{fig:diag-pipeline}
  \end{minipage}
\end{figure}

The first algorithmic applications of the linear neighborhood complexity assumption stem from the seminal work of Welzl \cite{Welzl88}, which implies that for a~given $n$-vertex graph from such a class, one can order its vertices in polynomial time so that every neighborhood is a~union of $\Oh(\log n)$ many intervals.
The runtime was improved to $\Oh((m+n)\log n)$ in very recent work with $\Oh(\log^2 n)$ intervals for each vertex~\cite{DreierK26} instead of~$\Oh(\log n)$. 
Prior works \cite{Chan25,Duraj24} achieve $\Oh(\log^2n)$ intervals on average per vertex in the same running time. 
In turn, such a~representation can be transformed into an sd-degeneracy sequence of width $\Oh(\log^2n)$~\cite{BonnetKGM26}.

Furthermore, \cite{BonnetDSZ24} introduced a representation of the edge set of the graph, called a \emph{signed tree model}.
A signed tree model of a graph $G$ is a binary tree with leaves labeled with $V(G)$
and additional transversal edges which can be positive or negative, and which fully determine the adjacency in $G$ (see \Cref{sec:prelim} for a precise definition). 
Subsequent work~\cite{BonnetKGM26} transformed signed tree models into \emph{interval biclique partitions},
first defined in~\cite{twin-width3}.
An interval biclique partition into $b$ bicliques
of a~graph $G$ is a labeling of $V(G)$ with $\{1,2,\ldots,n\}$ together with a partition of $E(G)$
into $b$ bicliques whose sides are intervals of $\{1,2,\ldots,n\}$. 
By pipelining the results of~\cite{BonnetKGM26,BonnetDSZ24}, we can first turn an \mbox{sd-degeneracy}
sequence of width $d$ into a signed tree model with $\Oh(d n)$ transversal pairs in $\Oh(d n+m)$ time,
and then into an interval biclique partition with $\Oh(d n)$ bicliques in $\Oh(d n \log n + m)$ time. 

While sd-degeneracy sequences, sparse signed tree models, and interval biclique partitions are relatively simple and often efficiently computable, they bypass the need for involved decomposition notions in classes of bounded twin-width/merge-width/flip-width, when the problem at hand lends itself to their algorithmic uses.

\subsection{Our results} 

Our first contributions are two time-optimal algorithms for computing sd-degeneracy
sequences of bounded width in classes of linear neighborhood complexity.

\begin{restatable}{theorem}{thmsddseq}\label{thm:sdd-seq}
  For every graph class $\mathcal C$ of linear neighborhood complexity, there is an~$\Oh_{\mathcal C}(n^2)$-time algorithm that, given an $n$-vertex graph $G \in \mathcal C$, outputs an sd-degeneracy sequence of~$G$ of width~$\Oh_{\mathcal C}(1)$.
\end{restatable}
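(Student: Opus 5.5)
The plan is to build the sequence greedily: repeatedly find two distinct vertices $u,v$ of the current graph with $\abs{N(u)\triangle N(v)}\le d$ for a constant $d=d(\mathcal C)$, delete one, and continue. Two things are needed: an existence statement ensuring such a pair is always available, and a data structure that finds one in amortized $\Oh(n)$ time per deletion, for $\Oh(n^2)$ total.

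\textbf{Existence.} Classes of linear neighborhood complexity are not hereditary in general, but the trace bound is inherited by induced subgraphs: traces of $H=G[S]$ on $A\subseteq S$ are a subset of the traces of $G$ on $A$. So every intermediate graph satisfies the bound with the same constant $c$. I would then show that every graph $H$ on $k\ge 2$ vertices satisfying the bound has two vertices with symmetric difference at most $d=\Oh(c)$. A short proof uses the packing lemma of Haussler type for set systems of linear shatter function. Alternatively, apply it directly: suppose all pairwise symmetric differences of $\{N(v):v\in V(H)\}$ exceed $d$, i.e., the neighborhoods form a $d$-separated family of $k$ sets on a $k$-element ground set. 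The packing lemma for a system with primal shatter function $\pi(m)\le cm$ bounds such a family by $\Oh(c\,k/d)$, which is less than $k$ once $d$ is a large enough constant. This is a contradiction.

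This gives only $\Oh(c)$ in expectation-style bounds from the averaging in Haussler's argument. I would instead use the chaining-style argument, with a random sample $A$ of size about $k/d$ and the pigeonhole principle on traces, so that two vertices share a trace on $A$ while differing in few places. This gives the explicit constant.

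\textbf{Algorithm.} Naively testing all pairs costs $\Oh(n^3)$ per step. The target is $\Oh(n^2)$ total, so I would make the existence argument algorithmic and amortized. Maintain the current graph as an adjacency matrix with bitsets. In each round, sample a random set $A$ of $\Theta(k/d)$ vertices. Bucket the vertices by their trace on $A$ via hashing/radix sort in $\Oh(k\abs{A})=\Oh(k^2/d)$ time. Within buckets, test candidate pairs, each test taking $\Oh(k)$ time. The key saving is to remove not one vertex per round but $\Omega(k)$ vertices. The packing argument applied repeatedly shows that as long as many vertices remain, a constant fraction of vertices have a close partner within their bucket. We then delete a matching-like set of such vertices, deleting $v$ only with a surviving witness $u$. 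Each deletion must be checked against the graph at the moment of deletion, and symmetric differences only shrink as vertices are removed, so a pair certified at the start of the round remains valid provided the witness $u$ has not itself been deleted. Choosing a forest of witnesses and deleting non-roots in leaf-to-root order handles this. Then $k$ drops geometrically, and the total time is $\sum \Oh(k^2)=\Oh(n^2)$.

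\textbf{Main obstacle.} The hardest step is the quantitative structural claim that a constant fraction of vertices, not just one pair, have a $d$-close partner. Equivalently, a $d$-separated subfamily has size at most $k/2$. This is exactly what the packing lemma gives: a maximal $d$-separated subfamily has size $\Oh(ck/d)\le k/2$, and every other vertex is within $d$ of a member. It also requires finding these witnesses within budget $\Oh(k^2)$ per round. I would first try to compute a maximal $d$-separated set greedily, but comparing against all chosen members is too costly. So I would route comparisons through the sampled-trace buckets and accept a Las Vegas algorithm, then derandomize or argue expected $\Oh(n^2)$, which the statement's $\Oh_{\mathcal C}(n^2)$ may allow given the separate mention of \cref{thm:LV-algo}.
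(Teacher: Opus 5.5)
\textbf{The gap is determinism.} Your procedure gives at best a Las Vegas algorithm with \emph{expected} running time $\Oh(n^2)$, and ``then derandomize'' comes with no method. The theorem is meant as a deterministic worst-case bound. Randomized computation is the subject of the separate \cref{thm:LV-algo}, which already achieves expected $\Oh(n+m)$ time using essentially your sample-and-bucket procedure (\cref{lem:sampling_pairs}). The paper uses \cref{thm:sdd-seq} precisely to obtain \emph{deterministic} $\Oh(n^2)$ bounds, e.g.\ for $K_4$ detection in \cref{thm:k4-detection}.

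Two further points about your randomized version. The claim that a constant fraction of same-bucket pairs are close does not follow from the packing lemma alone; it is a learning-type statement, which the paper takes from \cref{thm:learning}. And within each bucket you must test only a pairing, not all pairs, to stay within $\Oh(k^2)$.

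\textbf{The missing idea} is a near-twin test whose cost depends on $d$ rather than on $k$. This removes exactly the obstacle you identify, that comparing against all chosen members is too costly.
In each round, concatenate the adjacency rows of the current $k$-vertex graph into a binary word of length $k^2$. Build a longest-common-extension structure on it in $\Oh(k^2)$ time (\cref{thm:lce}).
To decide whether $\sd(u,v)\le d$, walk along the two rows, with each LCE query jumping to the next mismatch. After at most $d+1$ queries you know the answer, so one test costs $\Oh(d)$ (\cref{thm:near-twin-ds}). Hence all pairs of $d$-near-twins can be listed in $\Oh(dk^2)$ time (\cref{cor:near-twin-listing}).

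Your packing-lemma argument, in its versatile form, guarantees $\lfloor k/8\rfloor$ disjoint such pairs (\cref{thm:lnc-implies-bvsd}). A greedy selection of disjoint pairs from the list therefore has size at least $\frac{1}{2}\lfloor k/8\rfloor$. Deleting the first vertex of each selected pair is safe, because the pairs are disjoint and symmetric differences only shrink under vertex deletion. The vertex count then drops geometrically, giving a deterministic total of $\sum_i \Oh(d n_i^2)=\Oh_{\mathcal C}(n^2)$.

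With this $\Oh(d)$-time test, your own maximal-$d$-separated-set idea also works deterministically. Since $|S|=\Oh_c(k/d)$, the greedy construction of $S$ costs $\Oh(k\cdot|S|\cdot d)=\Oh_c(k^2)$ per round. Deleting all non-members, each witnessed by a member of $S$, then at least halves the graph once $d$ is a large enough constant.
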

Note that the running time of~\cref{thm:sdd-seq} is optimal for graphs with $\Theta(n^2)$ edges.
This resolves an open problem of~\cite{BonnetKGM26}.
For graphs with a~subquadratic number of edges, we provide a linear-time randomized algorithm.

\begin{restatable}{theorem}{thmlvalgo}\label{thm:LV-algo}
  For every graph class $\mathcal C$ of linear neighborhood complexity, there is a randomized algorithm that, given a~graph $G \in \mathcal C$, outputs an sd-degeneracy sequence of~$G$ of width~$\Oh_{\mathcal C}(1)$ in expected time $\Oh_{\mathcal{C}}\big(|V(G)| + |E(G)|\big)$.
\end{restatable}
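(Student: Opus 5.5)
The plan is to exploit a standard density consequence of linear neighborhood complexity. Pick a constant $c'$ depending on $\mathcal C$. Every induced subgraph $H$ of $G$ with $k \ge 2$ vertices has a pair of distinct vertices $u,v$ with $|N_H(u) \triangle N_H(v)| \le c'$. This follows from a Haussler-type packing argument, since the neighborhood set system has VC-dimension bounded in terms of $c$. Alternatively, one can use the argument behind the (deterministic) \cref{thm:sdd-seq}. The important feature is stronger than the existence of one pair: a constant fraction of vertices $u$ each have a partner $v$ within symmetric difference $c'$. Concretely, apply the packing bound in a greedy way. A maximal $c'$-separated set of neighborhoods has size at most $k/2$, so at least $k/2$ vertices lie within distance $c'$ of some other vertex.

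The algorithm processes $G$ in rounds. In each round we want to find a pair with small symmetric difference, with expected cost about $O(\deg(u)+\deg(v))$ amortized, and then delete one of the two vertices. The tool is min-wise or $\ell$-wise independent hashing of neighborhoods into sketches (MinHash or random-sign fingerprints) that detect symmetric difference at most $c'$. Since $c'$ is constant, we can use a sketch with $O(1)$ random subsets. Choose a random subset $S\subseteq V$ with sampling rate $p$, and bucket the vertices $v$ by the hash of $N(v)\cap S$. Two vertices at distance $\le c'$ collide with probability at least $(1-p)^{c'}$. Two vertices at distance $\ge D$ collide with probability at most $(1-p)^{D}$, up to hash-collision noise. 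We then verify each candidate pair in time $O(\deg u+\deg v)$, using sorted adjacency lists or a marking array. To make verification cheap, we prefer pairs in which the vertex to be deleted has low degree, and charge the verification cost to the edges removed.

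To bring the total cost down to linear, we proceed in phases, halving the vertex count each time. In a phase on $k$ vertices with $m'$ edges, computing all sketches costs $O(k+m')$. We then extract, by greedy matching inside buckets, a set of disjoint pairs covering $\Omega(k)$ vertices, each pair of distance at most $c''$. Removing one vertex from each pair in some order requires the distances to remain small after earlier removals. They do, because deleting vertices only shrinks symmetric differences. So all the removals of a phase are valid simultaneously, and the width stays $c''$. If the vertex count drops geometrically, the total is $O(n+m)$ only if the edge counts also drop. They do not need to: the total is $\sum_i O(k_i + m_i)$, and $m_i$ could stay near $m$ for $\log n$ phases. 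To fix this, use degeneracy. Linear neighborhood complexity does not bound degeneracy, so instead we keep twins-up-to-$c''$ merged and charge each phase's $O(m_i)$ cost to the removed vertices' incident edges. This works only when the removed vertices carry a constant fraction of the edges. We therefore choose, within each pair, the vertex to delete by degree balance, and restrict the phase to vertices of degree at most twice the average. A Markov argument shows that a constant fraction of such low-degree vertices still have close partners.

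I expect the main obstacle to be this last step. We must guarantee that every phase removes a constant fraction of the current edge mass, or else find a different accounting, such as recomputing sketches incrementally with a cost of $O(\deg)$ per deleted vertex instead of rebuilding them. The fallback I would try first is the incremental approach. Maintain random-sign sketches $\sum_{w\in N(v)} h(w)$ over $O(1)$ independent hash functions, with a dictionary keyed by sketch value. Deleting a vertex $x$ updates the sketches of its neighbors in $O(\deg x)$ total, so the total update cost is $O(m)$. Each round then extracts a colliding pair from the dictionary, with verification cost charged against successful deletions, for expected linear time overall.
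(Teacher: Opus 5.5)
Your architecture matches the paper's. You bucket vertices by their trace on a random sample, pair inside buckets, verify, and delete one vertex per verified pair in a batch, which is valid since deletions only shrink symmetric differences. However, the two steps that carry the proof are missing.

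First, you assert without proof that greedy in-bucket pairing yields $\Omega(k)$ disjoint pairs of symmetric difference $\Oh(1)$, and your collision estimates cannot give this. An arbitrary pairing inside a bucket is mostly good only if buckets rarely mix far-apart vertices. Bounding the expected number of vertices at distance $\ge D$ from $u$ that land in $u$'s bucket by $k(1-p)^D$ is useful only for $D=\Theta(\log k)$, i.e.\ width $\Oh(\log n)$. Two ingredients are needed. (i) Linear neighborhood complexity bounds the number of buckets by $c\abs{S}\le\abs{U}/2$, which guarantees $\abs{U}/4$ in-bucket pairs. (ii) You need a uniform VC-learning statement such as \cref{thm:learning}: there is a map $f$ with $\abs{F\triangle f(S,F\cap S)}=\Oh(dn/s)$ with probability $\ge 1-\eps$ for each fixed $F$. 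By Markov's inequality, with probability at least $1/2$ at most $\abs{U}/8$ sets are bad. Two good sets in the same bucket are then $\Oh(1)$-close, since they share the value $f(S,\cdot)$ and $s=\Theta(n)$. Your incremental fallback does not avoid this. Additive sketches $\sum_{w\in N(v)}h(w)$ with hash values from a large range coincide (w.h.p.) only for exact twins: for $0<\abs{N(u)\triangle N(v)}\le c'$ they differ by a nonzero signed sum of hash values. A dictionary keyed by sketch value therefore never finds proper near-twins, and graphs of linear neighborhood complexity (paths, say) need not have twins.

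Second, the running-time accounting, which you flag yourself, is not resolved, and the proposed repair points the wrong way. Restricting to vertices of at most twice the average degree upper-bounds rather than lower-bounds the edge mass deleted, so a phase cost of $\Theta(m_i)$ cannot be charged to it. Take a clique on $\sqrt n$ vertices, each carrying $\sqrt n$ private pendant leaves. Clique vertices have no $\Oh(1)$-near-twins until their leaf sets are tiny. So when the leaf sets shrink evenly, for $\Theta(\log n)$ phases only leaves are deleted while $m_i=\Theta(n)$, and sketching all vertices costs $\Theta(m\log n)$ in total.

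The paper's fix is to take $U_i$ to be the lower half of $V(G_i)$ by current degree. Traces are computed and pairs verified only from the adjacency lists of $U_i$, and pairs are sought only inside $U_i$; the learning argument still applies to $\{N(u):u\in U_i\}$ since $\abs{U_i}\ge n_i/2$. Phase $i$ then costs $\Oh(n_i+\sum_{u\in U_i}\deg_{G_i}(u))$. A layer-cake argument bounds the total. Let $S_r=\{(i,u):u\in U_i,\ \deg_{G_i}(u)\ge r\}$, and let $i_r$ be the first phase whose lower half has a vertex of degree $\ge r$. Every vertex outside $U_{i_r}$ then has degree $\ge r$ in $G$, so $\abs{U_{i_r}}\le\abs{A_r}+1$ for $A_r=\{v:\deg_G(v)\ge r\}$. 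The geometric decrease gives $\abs{S_r}=\Oh(\abs{U_{i_r}})$, and $\sum_r\abs{A_r}=2m$. This relies on the threshold being the median; you give no bound on $\sum_i\sum_{u\in U_i}\deg_{G_i}(u)$ for an average-based threshold.
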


Previously available algorithms would either proceed by naively constructing an sd-degeneracy sequence of width $\Oh(1)$, which takes time $\Oh(n^4)$ using a~direct implementation (or $\Oh(n^3)$ after realizing that $\mathcal C$ has bounded versatile symmetric difference, see definition below), or by following Welzl's approach, which currently yields an sd-degeneracy sequence of width $\Oh(\log^2 n)$ in time $\Oh((n+m)\log n)$~\cite{DreierK26,BonnetKGM26}, where $n=|V(G)|$ and $m=|E(G)|$. 

\medskip

To prove \Cref{thm:sdd-seq}, we introduce the notion of \emph{bounded versatile symmetric difference},
where we require that every induced subgraph with $n$ vertices contains $\Theta(n)$ disjoint vertex pairs $u,v$ satisfying $|N(u)\triangle N(v)| \leq c$ for some constant $c$.

On one hand, a simple application of Haussler's packing lemma 
(cf. \cref{thm:lnc-implies-bvsd}) proves that linear neighborhood complexity
implies bounded versatile symmetric difference. 
On the other hand, we show how to use the ``versatility'' strengthening,
together with a~known data structure for longest common extension
of~\cite{LandauV88}, to compute sd-degeneracy sequences
in $\Oh(n^2)$ time.

\medskip

We will actually prove generalizations of~\cref{thm:sdd-seq,thm:LV-algo} to classes of VC density~$\rho$.
\begin{restatable}{theorem}{thmsddseq-gen}\label{thm:sdd-seq-gen}
  For every real number $\rho \geqslant 1$ and every graph class $\mathcal C$ of VC density~$\rho$, there is an~$\Oh_{\mathcal C}(n^{3-\frac{1}{\rho}})$-time algorithm that, given an $n$-vertex graph $G \in \mathcal C$, outputs an sd-degeneracy sequence of~$G$ of width~$\Oh_{\mathcal C}(n^{1-\frac{1}{\rho}})$.
\end{restatable}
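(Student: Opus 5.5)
The plan has two parts: a combinatorial packing statement, and an algorithm that exploits it in rounds of geometrically shrinking size.

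\textbf{Packing statement.} Let $H$ be an induced subgraph of $G$ with $k$ vertices. Any subset of $V(H)$ also traces on $V(G)$, so bounded VC density gives at most $c|A|^\rho$ traces on any $A \subseteq V(H)$. Haussler's packing lemma in the form for shatter function $\Oh(m^\rho)$ then says the following. A family of $k$ sets on a $k$-element ground set that is $\delta$-separated in symmetric difference has at most $C(k/\delta)^\rho$ members. Here the family is the neighborhoods $N_H(v)$, $v \in V(H)$, counted with multiplicity; twins are trivially close, so they can be paired first.

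Choose $\delta = C' k^{1-1/\rho}$ with $C'$ large enough that $C(k/\delta)^\rho \le k/4$. A greedy argument then works: repeatedly take a maximal $\delta$-separated subfamily among the remaining vertices. It has at most $k/4$ members, so every other remaining vertex is $\delta$-close to one of them, and close pairs can be matched off. This yields $\Omega(k)$ disjoint pairs $u,v$ with $|N_H(u)\triangle N_H(v)| \le \delta$. This is the VC-density-$\rho$ analogue of \cref{thm:lnc-implies-bvsd}, the versatile symmetric difference property, with threshold $\Oh(k^{1-1/\rho}) \le \Oh(n^{1-1/\rho})$.

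\textbf{Algorithm.} Work in rounds on the current graph $H$ with $k$ vertices. In each round, compute a set of $\Omega(k)$ disjoint pairs of $H$ with small symmetric difference, then delete one vertex of each pair, in an arbitrary order. Deleting a vertex $w$ only changes $|N(u)\triangle N(v)|$ by removing $w$, so pair distances never increase. Every deletion in the round is therefore justified with width $\Oh(n^{1-1/\rho})$, and the resulting sequence of deletions is a valid sd-degeneracy sequence. Since the number of vertices shrinks geometrically, the total cost is dominated by the first round, provided one round costs $\Oh(k^{3-1/\rho})$.

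\textbf{Finding the pairs in one round.} Fix a vertex ordering and write each vertex's adjacency row as a bit string of length $k$. Testing whether $|N(u)\triangle N(v)| \le \delta$ amounts to checking that the Hamming distance of the two rows is at most $\delta$. With a Landau--Vishkin longest common extension structure built over all rows, in $\Oh(k^2)$ preprocessing time, this check takes $\Oh(\delta)$ time: jump over maximal equal stretches and stop after $\delta+1$ mismatches. To find pairs, pick a vertex $u$ and test it against all other remaining vertices, for $\Oh(k\delta)$ time per $u$. If a partner is found, match the two and remove both. If not, $u$ belongs to the separated family and can be discarded. By the packing bound, at most $k/4$ vertices are ever discarded, so $\Omega(k)$ pairs result. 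Every tested $u$ is either matched or discarded, so there are at most $k$ tests, each costing $\Oh(k\delta)$. The total is $\Oh(k^2\delta) = \Oh(k^{3-1/\rho})$, as required.

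\textbf{Main obstacle.} The step I expect to need the most care is the exact form of Haussler's lemma that gives $(k/\delta)^\rho$ rather than $(k/\delta)^{d}$ with $d$ the VC dimension. The packing bound in terms of the shatter-function exponent (Chazelle's or Matoušek's version) should suffice. A second point is that the edges inside the current $H$ must be restricted to $V(H)$. This is handled by rebuilding the rows and the LCE structure at the start of each round, at $\Oh(k^2)$ cost, which the geometric decrease absorbs.
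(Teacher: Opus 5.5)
Your proposal is correct and follows essentially the paper's route. It uses Haussler's packing lemma in shatter-function form (\cref{thm:Haussler}, as in \cref{thm:lnc-implies-bvsd}) for $\Omega(k)$ disjoint pairs of $\Oh(k^{1-1/\rho})$-near-twins, the LCE structure on concatenated adjacency rows for $\Oh(\delta)$-time near-twin tests (\cref{thm:near-twin-ds}), and geometrically shrinking rounds of cost $\Oh(k^2\delta)$ (\cref{thm:sd-degen-seq}); the only difference is that you bound the unmatched vertices of your greedy matching directly, as a $\delta$-separated family, rather than listing all near-twin pairs and comparing a maximal matching against the guaranteed $\lfloor n/d\rfloor$ disjoint pairs.
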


\begin{restatable}{theorem}{thmlvalgo-gen}\label{thm:LV-algo-gen}
  For every real number $\rho \geqslant 1$ and every graph class $\mathcal C$ of VC density~$\rho$, there is a randomized algorithm that, given a~graph $G \in \mathcal C$, outputs an sd-degeneracy sequence of~$G$ of width~$\Oh_{\mathcal C}(n^{1-\frac{1}{\rho}})$ in expected time $\Oh_{\mathcal{C}}\big(|V(G)| + |E(G)|\big)$.
\end{restatable}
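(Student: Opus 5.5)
We prove \cref{thm:LV-algo-gen} by a randomized round-based contraction scheme in which each round costs time linear in the size of the current graph. The two ingredients are the VC-density analogue of bounded versatile symmetric difference and random sampling to test near-twinness cheaply.

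\textbf{Versatile near-twins.} Let $k = c' n^{1-1/\rho}$. The generalization of \cref{thm:lnc-implies-bvsd} via Haussler's packing lemma gives the following. In every induced subgraph $H$ of $G$ on $n' \le n$ vertices, there are $\Omega(n')$ disjoint pairs $u,v$ with $|N(u)\triangle N(v)| \le c'\, n'^{1-1/\rho} \le k$. Indeed, the set system $\{N(v)\cap V(H)\}$ has primal shatter function $O(m^\rho)$. Packing then yields at most $O((n'/\delta)^\rho)$ pairwise $\delta$-separated sets. Choosing $\delta \approx n'^{1-1/\rho}$ makes this number at most $n'/2$, so greedily extracting close pairs leaves $\Omega(n')$ disjoint ones.

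\textbf{One round.} In each round we find $\Omega(n')$ disjoint pairs of symmetric difference $O(k)$ and delete one vertex of each pair. The remaining vertices keep small symmetric differences within the induced subgraph, since deletions only shrink neighborhoods. Appending each deleted vertex, paired with its partner, to the sequence is valid: the partner is still present at that moment, and the recorded width is $O(k)$. Geometrically many rounds then sum to linear total time, provided one round runs in expected $O(n'+m')$ time.

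\textbf{Finding pairs in linear time.} This is the main obstacle. I would hash neighborhoods by random restriction. Sample a random subset $S$ of vertices, each included with probability $p \approx 1/k$, and bucket vertices by $N(v)\cap S$ using a random fingerprint. Computing the fingerprints costs $O(n'+m')$. A pair with $|N(u)\triangle N(v)| \le k$ collides with constant probability $(1-p)^k \approx e^{-1}$. A pair with symmetric difference at least $Ck$ collides with probability at most $e^{-C}$. Within each bucket, pair up vertices greedily, then verify each candidate pair exactly. For pair $u,v$ the cost is $O(\deg u + \deg v)$ with a timestamped marking array. Since the pairs are disjoint, verification sums to $O(m')$. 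We keep only pairs with symmetric difference at most $Ck$.

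\textbf{Ensuring enough pairs survive.} Buckets can contain many far vertices mixed with close ones, so a constant fraction of the $\Omega(n')$ good pairs must be recovered. My first attempt is to repeat the sampling a constant number of times per round, removing matched vertices each time. If a round yields fewer than $\varepsilon n'$ verified pairs, we resample. The expected number of trials should be $O(1)$ because each good pair collides with constant probability, but far pairs may capture close vertices inside buckets. To handle this, I would pair within buckets by sorting on a second independent fingerprint, which spreads vertices apart, or bound the damage by a Markov argument on the expected number of far collisions, which is $O(n'^2 e^{-C})$ only if buckets are small. Bucket size therefore needs control. I would try to enforce it by a multi-level fingerprint, using several independent samples $S_1,\dots,S_t$ and matching on all of them. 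Proving that a constant fraction of good pairs survives is the delicate step I expect to occupy most of the proof.

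Once a round is shown to produce $\Omega(n')$ verified pairs in expected $O(n'+m')$ time, the theorem follows by summing a geometric series.
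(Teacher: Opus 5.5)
Your proposal has two genuine gaps. The first is the step you flag as delicate, which is the heart of the argument, and pairwise collision estimates cannot settle it. Knowing that close pairs collide with probability about $e^{-1}$ and $Ck$-far pairs with probability at most $e^{-C}$ says nothing about an arbitrary greedy matching inside a bucket. There may be $\Theta((n')^2)$ far pairs, and the $\Omega(n')$ close pairs from the packing argument form a different matching from the one you build. What is missing is a \emph{per-vertex} guarantee that controls all pairs of a bucket at once. The paper gets it from \cref{thm:learning}. For an $s$-sample $S$ with $s\approx(|U|/2c)^{1/\rho}$ (the same order as your sample), there is a fixed decoder $f$ such that each $N(u)$ is within symmetric difference $\Oh(n^{1-1/\rho})$ of $f(S,N(u)\cap S)$, except with probability $\eps=1/16$. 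Two such good vertices in the same bucket are then $\Oh(n^{1-1/\rho})$-close by the triangle inequality through the common set $f(S,\cdot)$, so every far within-bucket pair contains a bad vertex. The shatter function bounds the number of buckets by $c|\widehat S|^\rho\le|U|/2$; this bound is absent from your write-up, and it is what forces at least $|U|/4$ within-bucket pairs. Markov's inequality then yields at least $|U|/8$ good pairs with probability at least $1/2$. No second fingerprint or multi-level hashing is needed.

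Your running-time accounting is also wrong. Rounds shrink the vertex count geometrically but not the edge count, so a per-round cost of $\Oh(n'+m')$ does not sum to $\Oh(n+m)$. Take a clique on $s=n^{2/3}$ vertices, each carrying $n^{1/3}$ private pendant leaves; such graphs have linear neighborhood complexity. Two clique vertices whose leaf sets currently have size $D_i$ are at symmetric difference $2D_i+2$. So in a run that halves every leaf set per round, clique vertices have no admissible partner for $\Theta(\log n)$ rounds. Yet each round recomputes fingerprints over all $\Theta(n^{4/3})=\Theta(m)$ clique adjacencies, giving $\Theta(m\log n)$ in total. The paper avoids this by searching for pairs only inside $U_i$, the $\lceil n_i/2\rceil$ vertices of smallest current degree, so a round costs $\Oh(n_i+\sum_{u\in U_i}\deg_{G_i}(u))$. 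This is why \cref{lem:sampling_pairs} is stated for a linear-size subset $U$, with symmetric differences still measured in the whole graph. The paper then shows $\sum_i\sum_{u\in U_i}\deg_{G_i}(u)=\Oh(n+m)$ by charging, for each $r$, the terms with $\deg_{G_i}(u)\ge r$ to $A_r=\{v:\deg_G(v)\ge r\}$. Once some $U_i$ contains a vertex of degree at least $r$, at least $\lfloor n_i/2\rfloor$ current vertices lie in $A_r$. By the geometric decrease of $(n_i)$ there are then only $\Oh(|A_r|+1)$ such terms overall, and $\sum_r|A_r|=2m$.
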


\medskip
With the aforementioned results of~\cite{BonnetKGM26,BonnetDSZ24},
\cref{thm:sdd-seq} allows us to compute in $\Oh(n^2)$ time a signed tree model with $\Oh(n)$ transversal pairs
and an interval biclique partition with $\Oh(n)$ bicliques. 
Our second contribution is an efficient algorithm using the latter for shortest-path computation.
\begin{restatable}{theorem}{thmsssp}\label{thm:sssp-ibp}
There is an~$\Oh(n+b)$-time algorithm that, given an interval biclique partition of an $n$-vertex graph $G$ with $b$ bicliques and $s \in V(G)$, outputs a~shortest-path tree from vertex~$s$ in~$G$. 
\end{restatable}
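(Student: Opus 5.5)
Since $G$ is unweighted, I will run a BFS from $s$ in which edges are never examined one at a time; instead, each biclique $(I,J)$ of the partition, with $I$ and $J$ intervals of $\{1,\dots,n\}$ under the given labeling, is processed at most twice, once per side. Each processing costs $\Oh(1)$ plus the number of vertices it newly discovers. The key observation is this. Suppose the first vertex of $I$ is dequeued at BFS level $d$. Then every still-undiscovered vertex of $J$ is adjacent to it, so it gets distance $d+1$. After that, the biclique is useless on the $I$-side, since all of $J$ is discovered. Symmetrically for the $J$-side. Hence each (biclique, side) pair is \emph{triggered} once and can then be discarded.

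\textbf{Step 1 (triggering).} For every vertex $v$, I need the list of bicliques in which $v$ lies on some side, but materializing these lists could cost $\sum_i(|I_i|+|J_i|)$, which is not $\Oh(n+b)$. To avoid this, I keep, for each side, the set of \emph{untriggered} interval-sides and query it with a point: report all not-yet-triggered intervals containing the label of $v$, then delete them. I would use a static structure built in $\Oh(n+b)$ time. Bucket the intervals by left endpoint (counting sort, since endpoints lie in $[n]$), and keep a segment tree or stabbing structure on the right endpoints. A simpler option is: for each position $p$, the untriggered intervals with left endpoint $\le p$ are reported in order of decreasing right endpoint while that endpoint is $\ge p$. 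This can be done with a range-maximum structure over the left endpoints, with entries set to $-\infty$ when deleted. I would implement it via a Cartesian tree / RMQ with linear preprocessing (Gabow--Bentley--Tarjan / Harel--Tarjan). Each query then costs $\Oh(1+\#\text{reported})$. Deletion makes an RMQ dynamic, but the standard trick applies: run the recursion ``find max in $[1,p]$, if $\ge p$ report and recurse on both sides''. Because each triggered interval is removed permanently, I charge repeated visits to it by marking it. Concretely, a union-find over left endpoints, in the style of the ``decremental interval stabbing'' technique, gives $\Oh(\alpha)$ or, with Gabow--Tarjan's static-tree union-find, $\Oh(1)$ amortized per operation.

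\textbf{Step 2 (labeling the other side).} When an interval-side is triggered at level $d$, I must enqueue the undiscovered vertices of the opposite interval $J$ in $\Oh(1+\#\text{new})$ time. I maintain the undiscovered vertices as a ``next undiscovered label'' pointer structure over $[n]$. Deleting a discovered vertex is a union with its successor, and since all unions are between adjacent positions on a fixed path, Gabow--Tarjan gives linear total time. Walking from $\min J$ through the successor pointers enumerates exactly the new vertices of $J$, each of which receives distance $d+1$ and parent equal to the dequeued vertex. This yields the shortest-path tree.

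\textbf{Correctness and cost.} Correctness is standard BFS: a vertex $u\in J$ is adjacent to $v\in I$ exactly when the edge lies in that biclique, and when $v$ is the first dequeued vertex of $I$, the level of $v$ is minimal among $I$'s vertices. Hence $u$ receives the correct distance unless it was discovered earlier. The total work is $\Oh(n+b)$ plus the cost of the data structures. The main obstacle is Step 1: making decremental stabbing queries over $\Oh(b)$ intervals run in $\Oh(1)$ amortized time per report without a logarithmic factor. My first attempt would be the sorted-by-left-endpoint and RMQ scheme above with lazy deletion. If a certain $\Oh(1)$ bound is hard to obtain, I would fall back on exploiting the BFS order: process intervals offline, sorted by endpoints via radix sort, together with the path-union-find trick.
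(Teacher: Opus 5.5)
Your BFS scheme matches the paper's: one decremental structure over the biclique sides (the paper's $D_{\mathrm{scatter}}$, stabbed with $[v,v]$ when $v$ is dequeued) and one over the undiscovered vertices (the paper's $D_{\mathrm{gather}}$). Your Step~2 is fine. Decremental successor on the path $1,\dots,n+1$ via Gabow--Tarjan union-find costs $\Oh(n+b)$ in total, and it legitimately replaces $D_{\mathrm{gather}}$.

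The gap is Step~1, which is the nontrivial content of the theorem. You never exhibit a decremental point-stabbing structure with $\Oh(1+\#\text{reported})$ amortized query time.

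\textbf{Lazy deletion in a static RMQ fails.} A deleted interval stays in the static Cartesian tree. Every later query point it contains finds it again as a range maximum above the threshold and must recurse past it. ``Marking'' does not prevent this, and these visits cannot be charged to the single deletion. For example, take the $k=n/4$ nested intervals $[i,n+1-i]$, $i\in[k]$. A first query at $n/2$ deletes all of them. After that, each of the $\Theta(n)$ queries $p\in(k,n+1-k)$ walks the entire path-shaped Cartesian tree of these dead intervals. That is $\Theta(n^2)$ work with $b=\Theta(n)$.

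\textbf{Union-find over left endpoints does not repair this.} Union-find can only contract positions that become useless forever. Here, whether a left-endpoint bucket may be skipped depends on the current threshold $p$. Put a singleton $[i,i]$ at every position and query in the order $n,n-1,\dots,1$. Any scan over nonempty buckets is then quadratic, because you cannot stop at the first bucket whose head is below $p$: a bucket further left may hold a long interval.

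\textbf{The offline fallback is unavailable.} The query sequence is the BFS order, which depends on the answers to earlier queries.

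The missing ingredient is what the paper proves as \cref{lem:intv-ds-linear}, in two stages.
\begin{itemize}
\item It first builds an $\Oh(N\log N+|A|+q)$ structure: a layered DAG on dyadic-length intervals in which $u_J$ reaches label $a$ if and only if $J\cap I_a\neq\emptyset$. Queries are graph searches that share global visited marks. This is sound because everything reachable from an already visited node has already been reported and deactivated, so each node is traversed once over all queries.
\item It then removes the $\log$ factor by the Four Russians method. Blocks of size $\log N$ are used with a core/fringe decomposition, applied twice. This bottoms out at universes of size $\lceil\log\log N\rceil$, which are handled by a precomputed transition table.
\end{itemize}
The readily available alternatives, such as a centered interval tree with linked lists, give only $\Oh(n\log n+b)$, not the claimed $\Oh(n+b)$.
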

Note that the running time of \cref{thm:sssp-ibp} is independent of $m$, the number of edges
of $G$, and in fact is usually sublinear in~$m$.
\Cref{thm:sssp-ibp} bypasses the use of the DAG-compression technique~\cite{Bannach24}, which is a~way to solve shortest-path problems from interval biclique partitions that incurs, in the current state of knowledge, extra superconstant factors.  

By applying \cref{thm:sssp-ibp} to every vertex of the graph and pipelining it with \cref{thm:sdd-seq} and the aforementioned results of~\cite{BonnetKGM26,BonnetDSZ24},
we obtain a time-optimal algorithm for \textsc{All-Pairs Shortest Path} (\sapsp for short) in classes of linear neighborhood complexity.
\begin{restatable}{theorem}{thmlncapsp}\label{thm:lnc-apsp}
  For every graph class $\mathcal C$ of linear neighborhood complexity, \textsc{All-Pairs Shortest Path} can be solved in $\Oh_{\mathcal C}(n^2)$ time on $n$-vertex graphs of~$\mathcal C$.
\end{restatable}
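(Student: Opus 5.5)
The plan is a straightforward pipeline, as in \cref{fig:diag-pipeline}. Given an $n$-vertex graph $G \in \mathcal C$, we first run the algorithm of \cref{thm:sdd-seq}. In $\Oh_{\mathcal C}(n^2)$ time it yields an sd-degeneracy sequence of $G$ of width $d = \Oh_{\mathcal C}(1)$.

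Next we convert this sequence into a signed tree model with $\Oh(dn) = \Oh_{\mathcal C}(n)$ transversal pairs, using Lemma~17 of~\cite{BonnetKGM26} (building on Lemma~3.1 of~\cite{BonnetDSZ24}). This takes $\Oh(dn + m) = \Oh_{\mathcal C}(n^2)$ time, since $m \le n^2$. Item~1 of Theorem~1 of~\cite{BonnetKGM26} then turns the signed tree model into an interval biclique partition of $G$ with $b = \Oh_{\mathcal C}(n)$ bicliques, in time $\Oh(dn\log n + m) = \Oh_{\mathcal C}(n^2)$. The $n\log n$ term is dominated by $n^2$.

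Finally, for each source $s \in V(G)$ we invoke \cref{thm:sssp-ibp} on this fixed partition. Each call returns a shortest-path tree from $s$ in $\Oh(n + b) = \Oh_{\mathcal C}(n)$ time. From that tree we read off the distance row of $s$ by a traversal, in $\Oh(n)$ further time; alternatively we output the tree itself as the successor/predecessor information. Over all $n$ sources this costs $\Oh_{\mathcal C}(n^2)$. Together with the preprocessing, the total is $\Oh_{\mathcal C}(n^2)$, which is optimal since the output has size $\Theta(n^2)$.

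The mathematical substance lies entirely in \cref{thm:sdd-seq} and \cref{thm:sssp-ibp}, which we may assume, so no single step is a real obstacle here. The point needing most care is bookkeeping. First, the constants hidden in the cited conversions must depend only on $d$, and hence only on $\mathcal C$. Second, the interval biclique partition must be computed once and reused for all $n$ sources, rather than recomputed per source. Third, if the input is an adjacency matrix, reading it already costs $\Theta(n^2)$, so $m$ may be treated as at most $n^2$ throughout. Disconnected graphs need no special treatment, since unreachable vertices simply receive distance $\infty$ in each SSSP call.
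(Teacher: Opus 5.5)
Your proposal is correct and follows essentially the same pipeline as the paper: compute a constant-width sd-degeneracy sequence in $\Oh_{\mathcal C}(n^2)$ time, convert it once into an interval biclique partition with $\Oh(n)$ bicliques via \cref{lem:sdd-seq-to-ibp}, and then run \cref{thm:sssp-ibp} from every source. The only cosmetic difference is that the paper obtains the sequence by combining \cref{thm:lnc-implies-bvsd} with \cref{thm:sd-degen-seq-cor} rather than citing \cref{thm:sdd-seq} directly, and these amount to the same thing.
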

Previous work on \sapsp includes an $\Oh(n^2 \log^2 n)$-time algorithm in classes of bounded twin-width, and in time $\Oh(n^2 \log n)$ on graphs of bounded radius-1 merge-width given with a~witness~\cite{BonnetKGM26}.
More recently, the $\Oh(n^2 \log^2 n)$-time \sapsp algorithm was extended to classes of linear neighborhood complexity~\cite{Cardinal26}.
In~\cite{twin-width3}, an $\Oh(n^2 \log n)$-time \sapsp algorithm was given on graphs of bounded twin-width given with their contraction sequence, which was then improved to $\Oh(n^2)$-time in the paper introducing DAG-compressions~\cite{Bannach24}.

Note that \cref{thm:LV-algo,thm:sssp-ibp}, together with~\cite{BonnetKGM26,BonnetDSZ24}, also give a~randomized $\Oh(kn+m)$-time algorithm for \textsc{Multi-Source Shortest Path} from $k$~sources on $n$-vertex $m$-edge graphs.
Another consequence of~\cref{thm:sdd-seq} and results in~\cite{BonnetKGM26} (see~\cref{lem:matmult-ibp}) is:
\begin{restatable}{theorem}{thmlncmatmult}\label{thm:lnc-matmult}
  For every graph class $\mathcal C$ of linear neighborhood complexity, there is an $\Oh_{\mathcal C}(n^2)$-time algorithm that, given any adjacency matrix $M$ of an $n$-vertex graph of~$\mathcal C$, yields a~data structure that computes the matrix-vector product $MX$ in time $\Oh_{\mathcal C}(n)$ for any $n$-vector $X$.

  In particular, the product $MN$ can be computed in $\Oh_{\mathcal C}(n^2)$ time for \emph{any} $n \times n$ matrix $N$. 
\end{restatable}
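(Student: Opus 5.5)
The plan is to run the pipeline of \cref{fig:diag-pipeline} and then multiply a vector using the resulting interval biclique partition. The adjacency matrix $M$ fixes a vertex ordering. We first apply \cref{thm:sdd-seq} to get an sd-degeneracy sequence of width $d=\Oh_{\mathcal C}(1)$ in $\Oh_{\mathcal C}(n^2)$ time. We then convert it, via Lemma~17 of~\cite{BonnetKGM26} (or Lemma~3.1 of~\cite{BonnetDSZ24}), into a signed tree model with $\Oh(dn)$ transversal pairs in $\Oh(dn+m)=\Oh(n^2)$ time. Finally, Item~1 of Theorem~1 of~\cite{BonnetKGM26} turns this into an interval biclique partition with $b=\Oh(dn)=\Oh_{\mathcal C}(n)$ bicliques in $\Oh(dn\log n+m)=\Oh(n^2)$ time. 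The partition comes with a bijection $\pi\colon V(G)\to\{1,\dots,n\}$, which we store together with the list of bicliques $(I_j,J_j)$, each given by interval endpoints. This is the whole preprocessing, and it costs $\Oh_{\mathcal C}(n^2)$.

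For a query vector $X$, note that $(MX)_u=\sum_{v\in N(u)}X_v$. Because $E(G)$ is partitioned into bicliques, every edge is counted exactly once, so we get $(MX)_u=\sum_{j:\,\pi(u)\in I_j}\sum_{\pi(v)\in J_j}X_v+\sum_{j:\,\pi(u)\in J_j}\sum_{\pi(v)\in I_j}X_v$. We compute the query in three steps.
\begin{enumerate}
\item Permute $X$ according to $\pi$ and compute prefix sums, in $\Oh(n)$ time. Each interval sum $S_j=\sum_{\pi(v)\in J_j}X_v$ (and symmetrically $S'_j$ over $I_j$) is then available in $\Oh(1)$ time.
\item For every biclique, add $S_j$ on the interval $I_j$ and $S'_j$ on the interval $J_j$ of an output array. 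We do this with a difference array: add the value at the left endpoint and subtract it just after the right endpoint. This takes $\Oh(1)$ per biclique, so $\Oh(b)$ in total.
\item Take one final prefix-sum pass and undo the permutation, in $\Oh(n)$ time.
\end{enumerate}
The query therefore runs in $\Oh(n+b)=\Oh_{\mathcal C}(n)$. Interval sums need additive inverses, so we assume entries lie in a ring. If only a semiring were required, we would replace prefix sums with a sparse-table or segment-tree style decomposition, but the statement concerns ordinary matrix products.

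Loops need a little care. The adjacency matrix of a simple graph has a zero diagonal. If the partition ever contains a biclique with $I_j=J_j$ representing a clique interval, its contribution is the interval sum minus the vertex's own entry, which is again $\Oh(1)$ per biclique plus a correction handled in the same difference-array pass. I expect this detail about the exact form of bicliques in~\cite{BonnetKGM26} to be the only delicate point. The main work, quadratic preprocessing with $b=\Oh(n)$, is supplied by \cref{thm:sdd-seq}. This matches \cref{lem:matmult-ibp}, which I would state as: an interval biclique partition with $b$ bicliques supports matrix-vector products in $\Oh(n+b)$ time.

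For the second claim, write $N$ column by column, $N=[X_1\,\cdots\,X_n]$, so that $MN=[MX_1\,\cdots\,MX_n]$. One preprocessing of cost $\Oh_{\mathcal C}(n^2)$ followed by $n$ queries of cost $\Oh_{\mathcal C}(n)$ each gives $\Oh_{\mathcal C}(n^2)$ in total.
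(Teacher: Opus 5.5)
Your proposal is correct and follows the paper's route. The paper builds the interval biclique partition with $\Oh(n)$ bicliques in $\Oh(n^2)$ time exactly as in the APSP proof and then invokes \cref{lem:matmult-ibp}, whose short proof is essentially your prefix-sum/difference-array encoding of $M$ by a sparse $\{-1,0,1\}$-matrix. Your worry about bicliques with $I_j=J_j$ is unnecessary, since the two sides of a biclique in an edge partition of a simple graph are disjoint; also, only an abelian group, not a ring, is needed for the entries of $X$.
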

We refer to \cref{fig:diag-pipeline} for a diagram of the algorithmic pipeline discussed in the previous few paragraphs.

\cref{thm:lnc-apsp} answers \cite[Question 2]{BonnetKGM26} positively, far beyond classes of bounded twin-width (for which the question was asked). 
\cref{thm:lnc-apsp,thm:lnc-matmult} improve \sapsp and matrix-multiplication algorithms in~\cite{twin-width5,Anand26,BonnetKGM26,Kozma26,Cardinal26}, all of which have extra polylog factors, and extend results in~\cite{KratschN23,Bannach24}, which work on graphs of bounded clique-width and on graphs of bounded twin-width given with contraction sequences, respectively.

By~using \cref{thm:sdd-seq-gen} rather than \cref{thm:sdd-seq}, one can solve \textsc{All-Pairs Shortest Path} in graphs of VC density~$\rho$ (or multiply the adjacency matrix of graphs of VC density~$\rho$ with any matrix) in time $\Oh(n^{3-1/\rho})$.
However, we observe that fast matrix multiplication and Seidel's algorithm~\cite{Seidel95} have a~better running time in general instances as soon as $\rho > 1.591$.

\medskip
We also obtain a~time-optimal triangle-detection algorithm in classes of linear neighborhood complexity, which leverages an sd-degeneracy sequence of constant width.

\begin{restatable}{theorem}{triangle-detection}\label{thm:triangle-detection}
For every graph class $\mathcal C$ of linear neighborhood complexity, \textsc{Triangle Detection} can be solved in expected time $\Oh_{\mathcal C}(n+m)$ on $n$-vertex $m$-edge graphs of $\mathcal C$.
\end{restatable}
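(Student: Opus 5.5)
We first run the randomized algorithm of \cref{thm:LV-algo} on $G$. In expected time $\Oh_{\mathcal C}(n+m)$ it returns an sd-degeneracy sequence $v_1,\dots,v_n$ of width $c=\Oh_{\mathcal C}(1)$. This means that when $v_i$ is removed, it has a partner $u_i$ that is still present, i.e.\ $u_i=v_j$ with $j>i$. Let $G_i=G[\{v_i,\dots,v_n\}]$. Then $|N_{G_i}(v_i)\triangle N_{G_i}(u_i)|\le c$. We process the sequence backwards, from $G_n$ up to $G_1=G$. For each $i$ we decide whether $G_i$ contains a triangle that uses $v_i$. Any triangle of $G$ is detected at the step where its earliest-removed vertex is added back, so this covers everything.

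A triangle through $v_i$ in $G_i$ is an edge $xy$ with $x,y\in N_{G_i}(v_i)$. Split $N_{G_i}(v_i)=S\cup D$, where $S=N_{G_i}(v_i)\cap N_{G_i}(u_i)$ and $D=N_{G_i}(v_i)\setminus N_{G_i}(u_i)$, so $|D|\le c$. There are three cases for the edge $xy$.
\begin{itemize}
\item \textbf{At least one endpoint in $D$.} Fix $x\in D$. For each $y\in N(x)$, test whether $y\in N(v_i)$ and $y$ is still alive. A hash set of $N(v_i)$ makes this $\Oh(\deg x)$ time. A single vertex $x$ may lie in $D$ for many $i$, so charging has to be done with care, as explained below.
\item \textbf{Both endpoints in $S$, and $u_i\in N(v_i)$.} Then $x,y,u_i$ is already a triangle in $G_{i+1}$, which was checked earlier. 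This case can be skipped.
\item \textbf{Both endpoints in $S$, and $u_i\notin N(v_i)$.} This case needs a real argument.
\end{itemize}

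For the last case, the plan is to carry along, for each vertex $w$, the quantity $e(w)$, the number of edges inside $N_{G_j}(w)$ in the current graph. If $S$ spans an edge, then $u_i$ lies in a triangle in $G_{i+1}$, again already detected. So once we know $G_{i+1}$ is triangle-free, $S$ spans no edge automatically. Since we stop at the first triangle found, both $S$-cases cost nothing. The only work left at step $i$ is to look for edges incident to $D$ inside $N_{G_i}(v_i)$, where $|D|\le c$.

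The main obstacle is bounding the total cost of scanning neighborhoods of $D$-vertices. There are two plans.
\begin{itemize}
\item \textbf{Primary plan.} While the graph is triangle-free, for $x\in D$ we test $N(x)\cap N(v_i)$ by iterating over the smaller of the two lists. Each step then costs $\Oh(c\cdot\min(\deg x,\deg v_i))$. Summed naively this is not linear. The intended fix is to use triangle-freeness of $G_{i+1}$ as well: all of $N(x)\cap N_{G_{i+1}}$ avoids $S$. It therefore suffices to check whether $x$ has a neighbor in $D$, which is $\Oh(c^2)$ adjacency tests via hashing, or a neighbor in $S$. For the $S$-part, note that $x\notin N(u_i)$. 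The statement ``$x$ is adjacent to some $y\in S\subseteq N(u_i)$'' means $x$ and $u_i$ have a common neighbor $y$ that is also adjacent to $v_i$.
\item \textbf{Fallback.} Randomly sample the vertex order, in the style of Chiba--Nishizeki, and use the bound $\sum_x \deg(x)\cdot|\{i: x\in D_i\}|$. Relate this to the width so that every edge $xy$ is charged $\Oh(c)$ times, with each charge to the step where $x$ or $y$ enters a symmetric difference. Then verify the expected $\Oh(c\,m)$ total using that $\sum_i |D_i|\le cn$ and degrees of neighbors.
\end{itemize}
Completing this charging argument is the step I expect to require the most care.
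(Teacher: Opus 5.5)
\textbf{There is a genuine gap: the running-time bound, which is the entire content of the theorem, is left open.} Your structural reduction matches the paper's. With $D=N_{G_i}(v_i)\setminus N_{G_i}(u_i)$ of size at most $c$, any triangle $v_ixy$ with $x,y\in S$ yields the triangle $u_ixy$ in $G_{i+1}$. This holds whether or not $u_iv_i$ is an edge, so that case split and the bookkeeping of $e(w)$ are superfluous. Your backward induction (no triangle found so far implies $G_{i+1}$ is triangle-free) is a valid substitute for the paper's forward formulation. But you only offer two unfinished plans for the running time. Moreover, the reason you give for abandoning the direct bound is false.

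\textbf{The missing idea is to charge step $i$ to the current neighborhood of the removed vertex $v_i$, not to the neighborhoods of the private neighbors $x\in D$.} Every edge $v_iv_j$ with $i<j$ is counted exactly once, at step $i$, so $\sum_i |N_{G_i}(v_i)| = m$. The fix then runs as follows:
\begin{itemize}
\item filter $N_G(v_i)$ by aliveness to get $N_{G_i}(v_i)$;
\item find $D$ with $|N_{G_i}(v_i)|$ adjacency lookups against $u_i$;
\item for each of the at most $c$ vertices $x\in D$ and each $y\in N_{G_i}(v_i)$, test $xy\in E(G)$ in expected $\Oh(1)$ time using a static edge dictionary built in expected $\Oh(n+m)$ time.
\end{itemize}
This costs $\Oh((c+1)(\deg_G(v_i)+1))$ per step, hence $\Oh_{\mathcal C}(n+m)$ overall, and it is exactly the paper's proof.

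In particular, your ``smaller of the two lists'' cost $\Oh(c\cdot\min(\deg x,\deg v_i))$ is at most $\Oh(c\deg_G(v_i))$. It therefore sums to $\Oh(cm)$ and \emph{is} linear, contrary to your claim.

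What does fail is case~1 as you wrote it, scanning all of $N(x)$. A high-degree $x$ can be a private neighbor for $\Theta(n)$ different $i$. For example, take a star with $\Theta(n)$ leaves plus as many isolated vertices, and pair each removed leaf with an isolated vertex; this is a width-$1$ sequence. Then $\sum_i\sum_{x\in D_i}\deg(x)=\Theta(n^2)$ while $m=\Theta(n)$.

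Your primary plan (reducing to whether $x$ and $u_i$ have a common neighbor inside $N(v_i)$) and your Chiba--Nishizeki-style fallback are unnecessary. As stated, neither yields a bound.
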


Let us recall that on general graphs, the fastest known \textsc{Triangle Detection} algorithm runs in time $\Oh(\min(n^\omega,m^{2\omega/(\omega+1)}))$~\cite{Alon97} where $\omega$ is the exponent of square matrix multiplication, which currently implies $\Oh(\min(n^{2.372},m^{1.407}))$ time.

Utilizing both an sd-degeneracy sequence and a~Welzl order of the input graph, we obtain randomized almost-linear algorithms for finding 4-vertex and 5-vertex cliques, and a~deterministic $\Oh(n^2)$-time algorithm for $K_4$ detection (time-optimal within dense graphs).    

\begin{restatable}{theorem}{k4-detection}\label{thm:k4-detection}
For every graph class $\mathcal C$ of linear neighborhood complexity, a~$K_4$ subgraph can be found in expected $\Oh_{\mathcal C}(n \log^5 n + m \log n)$ or in deterministic $\Oh_{\mathcal C}(n^2)$ time on $n$-vertex $m$-edge graphs of~$\mathcal C$.
\end{restatable}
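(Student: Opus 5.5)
We prove the two bounds of \cref{thm:k4-detection} separately.

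\textbf{Deterministic $\Oh_{\mathcal C}(n^2)$ bound.} A $K_4$ on $\{a,b,c,d\}$ contains the edge $ab$ together with an edge $cd$ of the common neighborhood $N(a)\cap N(b)$. Using \cref{thm:lnc-matmult}, we multiply the adjacency matrix $M$ by itself in $\Oh_{\mathcal C}(n^2)$ time; this gives $|N(a)\cap N(b)|$ for every edge $ab$, but only counts. We therefore use the richer structure of an sd-degeneracy sequence of constant width, computed by \cref{thm:sdd-seq}.

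The idea is to process the sequence backwards, re-adding vertices one at a time. Suppose $u$ is removed because it is almost a twin of $v$, i.e.\ $|N(u)\triangle N(v)|\le c$. We maintain, for the current graph, whether it contains a $K_4$, together with auxiliary information: for every pair $(x,y)$, whether the pair $x,y$ has an edge inside its common neighborhood (a \enquote{triangle-extension} bit). When $u$ is added, any $K_4$ through $u$ either avoids $N(u)\triangle N(v)$ apart from $u$, or uses some of these $\le c$ vertices.
\begin{itemize}
\item If $uv$ is a non-edge and the $K_4$ avoids the exceptional vertices, then replacing $u$ by $v$ yields a $K_4$ already present in the current graph.
\item If $uv$ is an edge, the $K_4$ is $\{u,v\}$ plus an edge in $N(u)\cap N(v)$. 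This is detected by a constant number of queries to the auxiliary bits.
\item Otherwise the $K_4$ contains one of the $c$ exceptional vertices $w$. We then need a triangle in $N(u)\cap N(w)$, which is again answered via the pair bits.
\end{itemize}
Updating the pair bits when $u$ is added costs $\Oh(n)$ per pair involving $u$, i.e.\ $\Oh(n)$ pairs at amortized $\Oh(1)$ each, by the same twin-substitution argument one level down. The total is $\Oh(n)\cdot\Oh(n)=\Oh(n^2)$.

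\textbf{Randomized $\Oh(n\log^5 n+m\log n)$ bound.} Here we use \cref{thm:LV-algo} together with a Welzl order, in which every neighborhood is a union of $\Oh(\log^2 n)$ intervals and which is computable in $\Oh((n+m)\log n)$ time~\cite{DreierK26}.

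For each edge $ab$ (processed through the twin structure, so that only $\Oh(n)$ \enquote{essential} pairs arise), the common neighborhood $N(a)\cap N(b)$ is a union of $\Oh(\log^2 n)$ intervals. Each vertex $x$'s neighborhood is likewise a union of $\Oh(\log^2 n)$ intervals. We must decide whether some $x$ in the interval union $I$ has a neighbor in $I$; that is, whether some pair of intervals (one containing $x$, one from $N(x)$) hits $I\times I$. This becomes orthogonal range emptiness over $\Oh(m\log^2 n)$ rectangles, or better $\Oh(n\log^2 n)$ rectangles after charging. Each query involves $\Oh(\log^4 n)$ interval pairs, each answered in $\Oh(\log n)$ time with a range tree. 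This gives the $n\log^5 n$ term, while the $m\log n$ term comes from computing the Welzl order.

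\textbf{Main obstacle.} The delicate step is the amortization showing that only $\Oh(n)$ pairs, rather than $m$, need queries. This is where the triangle-detection machinery of \cref{thm:triangle-detection} must be reused: non-essential edges are certified through their near-twin partner.
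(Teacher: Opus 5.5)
\textbf{The deterministic half has a genuine gap.} You never show that the ``triangle-extension'' bits can be maintained in $\Oh(n)$ time per insertion, and the sketch does not work.
\begin{itemize}
\item Inserting $u$ can flip the bits of old pairs $x,y\in N(u)$: $u$ joins $N(x)\cap N(y)$, so any edge from $u$ into that set becomes a new witness. Your proposal does not address these changes.
\item For a new pair $(u,y)$, the bit of $(v,y)$ does not determine its value. The sets $N(u)\cap N(y)$ and $N(v)\cap N(y)$ differ on up to $d$ vertices (your $c$), and every witness edge may be incident to them. Repairing this by scanning those vertices costs $\Theta(n)$ per pair, hence $\Theta(n^2)$ per insertion.
\item The phrase ``$\Oh(n)$ per pair \dots\ amortized $\Oh(1)$ each'' is self-contradictory.
\end{itemize}
No global table is needed. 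Argue as in \cref{thm:triangle-detection}: a $K_4$ of $G_i$ through $u_i$ that avoids $A_i=N_{G_i}(u_i)\setminus N_{G_i}(v_i)$ becomes a $K_4$ of $G_{i+1}$ once $u_i$ is replaced by $v_i$. Note that $v_i\in A_i$ whenever $u_iv_i$ is an edge. So your separate case for that situation is unnecessary, and it is also wrong as stated, since it assumes the $K_4$ contains $v$. Hence, at each step you only need to decide, for each of the at most $d$ vertices $a\in A_i$, whether $G_i[N_{G_i}(u_i)\cap N_{G_i}(a)]$ contains an \emph{edge} (not a triangle, as you wrote).

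The missing idea is to answer each such query in $\Oh(n)$ time with \cref{thm:lnc-matmult}. Multiply $M$ by the indicator vector $z$ of this set, and look for a coordinate $j$ with $z_j\neq 0$ and $(Mz)_j\neq 0$. You set $M$ aside after noting that $M\cdot M$ only yields counts. Multiplying $M$ by these specific indicator vectors is exactly what gives $\Oh(dn)$ per step and $\Oh(n^2)$ overall.

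\textbf{The randomized half follows the paper's route.} The step you call the main obstacle is immediate from the same forward-elimination argument: only the $\Oh(dn)$ pairs $(u_i,a)$ with $a\in A_i$ are ever queried. Two details need fixing.
\begin{itemize}
\item Intersect neighborhoods in $G$, i.e.\ use $N_G(u_i)\cap N_G(a)\supseteq N_{G_i}(u_i)\cap N_{G_i}(a)$. Then the Welzl intervals of $G$ and a static structure apply, and an edge found there still yields a $K_4$ of $G$.
\item The range structure should store the $2m$ points $(\pi(x),\pi(y))$ for $xy\in E(G)$. Query it with the rectangles $J_j\times J_k$ for pairs of intervals of that intersection. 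This costs $\Oh(\log n)$ per query after $\Oh(m\log n)$ preprocessing. Storing neighborhoods as intervals, as you describe, turns each query into a segment-versus-rectangle problem, which is a higher-dimensional range query without the $\Oh(\log n)$ bound you use.
\end{itemize}
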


\begin{restatable}{theorem}{k5-detection}\label{thm:k5-detection}
For every graph class $\mathcal C$ of linear neighborhood complexity, a~$K_5$ subgraph can be found in expected time $\Oh_{\mathcal C}(n \log^9 n + m \log^5 n)$ on $n$-vertex $m$-edge graphs of $\mathcal C$.
\end{restatable}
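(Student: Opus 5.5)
We need to prove the $K_5$ detection claim: expected time $\Oh(n\log^9 n + m\log^5 n)$. The plan is to build on the $K_4$ algorithm (\cref{thm:k4-detection}) together with a Welzl order.

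First, compute a Welzl order of $G$ using the algorithm of~\cite{DreierK26}. This takes $\Oh((n+m)\log n)$ time, and in it every neighborhood is a union of $\Oh(\log^2 n)$ intervals. Linear neighborhood complexity is hereditary and, by the stated bounds, implies degeneracy $\Oh(1)$ on $K_t$-free subgraphs. More usefully, classes of linear neighborhood complexity have bounded VC dimension.

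The reduction is the standard one: a $K_5$ exists iff for some vertex $v$ the graph $G[N(v)]$ contains a $K_4$. Summing $|N(v)|$ over all $v$ gives $2m$, but the edge counts of the $G[N(v)]$ do not sum nicely. So I would orient edges along a low-degeneracy-like order instead. I would exploit the interval structure of the Welzl order: $N(v)$ is a union of $\Oh(\log^2 n)$ intervals $I_1,\dots,I_k$.

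The plan is to run the $K_4$ detection not on $G[N(v)]$ explicitly, but via a divide-and-conquer over the Welzl order, in the style of a segment tree. Build a balanced binary tree over positions $1,\dots,n$. Each neighborhood decomposes into $\Oh(\log^3 n)$ canonical nodes. Every $K_5$ with apex $v$ (say the first vertex in some fixed order) has its other four vertices spread among canonical nodes of $N(v)$. We guess the apex together with how the four remaining vertices are distributed, which gives polylog many combinations.

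For each canonical node $x$, let $U_x$ be the set of vertices whose canonical decomposition includes $x$. One then needs to detect a $K_4$ inside the block of $x$ and check adjacency to some $u\in U_x$. This reduces to $K_4$-detection in an induced subgraph with one extra apex class. The $K_4$ algorithm itself presumably uses the analogous reduction from $K_3$ with a $\log^4$ overhead on $n$ and $\log$ on $m$. Applying it once more costs an additional $\log^4 n$ factor on the $n$-term ($\log^3$ for the canonical decomposition and $\log$ for tree depth) and $\log^4$ on the $m$-term. That yields $\Oh(n\log^9 n + m\log^5 n)$.

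The main obstacle is making the total size of all subinstances $\tilde\Oh(n+m)$. Vertices of the block of $x$ appear once per tree level, so the total vertex count is $\Oh(n\log n)$. Edges within blocks likewise total $\Oh(m\log n)$. The problem is the apex set $U_x$: each vertex lies in $\Oh(\log^3 n)$ of these sets. To handle the apex class, I would replace $U_x$ by a single aggregated ``type'' condition, testing only whether some $u\in U_x$ is complete to a found $K_4$. This can be checked within the $K_4$ algorithm, which is randomized via the linear-time sd-degeneracy sequence of \cref{thm:LV-algo}, by carrying the extra class through its recursion.

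Expected running time follows by linearity of expectation over all subinstances, since each is an induced subgraph and hence remains in $\mathcal C$.
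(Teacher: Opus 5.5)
Your proposal has a genuine gap. After fixing one apex $v$ you need a $K_4$ inside $N(v)$, but your segment-tree step only tests $K_4$'s contained in a \emph{single} canonical block $x$. For that case $U_x$ needs no special handling: every $u\in U_x$ is complete to block $x$ by definition. A $K_4$ in $N(v)$ generally has its four vertices in different canonical nodes of $N(v)$, and its lowest common block need not lie in $N(v)$. ``Guessing the distribution'' produces $\Theta(\log^{12} n)$ block tuples per apex, and you give no subroutine that decides in polylogarithmic time whether four given blocks span a $K_4$. ``Carrying an extra class through the recursion'' of the $K_4$ algorithm does not supply one either. The fallback of orienting along a low-degeneracy order is not available: linear neighborhood complexity does not bound degeneracy, since the class of all cliques has it. Your remark about $K_t$-free subgraphs is also false, as $K_{n,n}$ shows. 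Finally, the exponents $\log^9 n$ and $\log^5 n$ come from multiplying guessed overheads, not from an analysis.

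The missing idea is the transfer argument from the sd-degeneracy sequence, which fixes \emph{two} vertices of the clique with only $\Oh(n)$ candidate pairs. Take $(u_i,v_i)$ of width $d=\Oh(1)$ from \cref{thm:LV-algo}. A $K_5$ of $G_i$ through $u_i$ either contains some $a\in A_i=N_{G_i}(u_i)\setminus N_{G_i}(v_i)$, where $|A_i|\le d$, or lies in $\{u_i\}\cup X_i$. In the second case, swapping $u_i$ for $v_i$ yields a $K_5$ in $G_{i+1}$. Hence it suffices to answer $\Oh(n)$ queries of the form ``does $N_G(u_i)\cap N_G(a)$ contain a triangle?'', so only cross-interval \emph{triangle} detection remains, not $K_4$ detection. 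Write this set as $\Oh(\log^2 n)$ Welzl intervals $J_1,\dots,J_p$. For each triple $J_h,J_j,J_k=[\alpha_k,\beta_k]$, query the box $J_h\times J_j\times(-\infty,\beta_k]\times(-\infty,-\alpha_k]$ in a $4$-dimensional orthogonal range structure. That structure stores the points $(u,v,\ell_t,-r_t)$, one for each edge $uv$ and each Welzl interval $[\ell_t,r_t]$ of $N_G(u)\cap N_G(v)$. This encoding handles a triangle spread over three different intervals, which is exactly what your block approach cannot do. There are $\Oh(m\log^2 n)$ points, giving $\Oh(m\log^5 n)$ preprocessing. The $(\log^2 n)^3$ triples times $\Oh(\log^3 n)$ per query give the $n\log^9 n$ term.
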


Prior to our work, no linear-time algorithm was known for \textsc{Triangle Detection} in the much more restricted classes of bounded clique-width, nor was an almost-linear algorithm known in classes of bounded twin-width (where, for both, the input is a~mere graph without a~witness of~low width).

\section{Preliminaries}\label{sec:prelim}

We index the word positions from 1.
In particular, $w = w[1]w[2] \cdots w[s]$ for every word $w$ of length~$s$.
We let $\log x$ denote the base-$2$ logarithm of $x$.
For integers $\ell \leq r$, let $[\ell, r]$ be the integer interval spanning all values from $\ell$ to $r$, inclusively, and for integer $n$, let $[n] = [1, n]$.
Graphs are simple (without loops or parallel edges), undirected, and finite.
For a graph $G$, we denote its vertex- and edge-set as $V(G)$ and $E(G)$, respectively.

\subsection{Neighborhood complexity}
For a graph $G$, let $\pi_G(m)$ be the maximum number of distinct neighborhoods of vertices in~$G$ restricted to a~set of at~most $m$ vertices, i.e.,
\[\pi_G(m) := \max_{X \subseteq V(G), |X| \le m} \abs{\{N_G(v) \cap X : v \in V(G)\}}.\]
We say that $G$ has \emph{$c$-linear neighborhood complexity} (or $c$-LNC) if $\pi_G(m) \le c\cdot m$ for all positive $m \in \Nn$.
Class $\Cc$ has \emph{linear neighborhood complexity} if there exists a~constant~$c$ such that all graphs $G \in \Cc$ have $c$-LNC.
We say that $\Cc$ has \emph{VC density $\rho$} if there exists a~constant~$c$ such that for all $G \in \Cc$, $\pi_G(m) \leqslant cm^\rho$ for all positive $m$.
Finally, we say that $G$ has \emph{$(c,\rho)$-polynomial neighborhood complexity} (or $(c,\rho)$-PNC) if $\pi_G(m) \le c\cdot m^\rho$ for all positive $m \in \Nn$.

\subsection{Symmetric difference and sd-degeneracy} 

In this paper, we define the \emph{symmetric difference} of two vertices $u, v$ in a~graph $G$ as
\[\sd_G(u,v) := \abs{N_G(u)\,\triangle\,N_G(v)},\]
where $\triangle$ denotes the symmetric difference of two sets: $X \triangle Y := (X \setminus Y) \cup (Y \setminus X)$.
There are slight variations in defining the symmetric difference of $u, v$ (depending on whether or not $u, v$ themselves count), but they only differ by a~constant additive term.
We say that two distinct vertices $u, v$ are \emph{$d$-near-twins} in~$G$ if $\sd_G(u,v) \leqslant d$,
and are \emph{twins} if $\sd_G(u,v) = 0$.

The \emph{symmetric difference} $\sd(G)$ of a~graph $G$ is the least nonnegative integer~$d$ such that for every induced subgraph $H$ of~$G$ with at least two vertices, there are two $d$-near-twins in~$H$.

An~\emph{sd-degeneracy sequence} of an $n$-vertex graph $G$ is a~list $(u_1, v_1), \ldots, (u_{n-1},v_{n-1})$ of pairs of vertices of $G$ such that
\medskip
\begin{compactitem}
\item $u_i \neq v_i$ for every $i \in [n-1]$,
\item $V(G) = \{u_1, u_2, \ldots, u_{n-2},$ $u_{n-1}, v_{n-1}\}$, and
\item there is no $i < j$ such that $u_i \in \{u_j, v_j\}$.
\end{compactitem}
\medskip
Informally, it is an elimination sequence where only the first element of the pair is removed at each step. 
The \emph{width} of the sd-degeneracy sequence $(u_1, v_1), \ldots, (u_{n-1},v_{n-1})$ is defined as \[\max_{i \in [n-1]} \sd_{G - \{u_1, u_2, \ldots, u_{i-1}\}}(u_i,v_i).\]
The \emph{sd-degeneracy} of~$G$ (where 'sd' stands for symmetric difference) is the least integer~$d$ such that $G$ admits an sd-degeneracy sequence of width~$d$.

Equivalently, the class of graphs of sd-degeneracy at most $d$ can be defined by induction on the number of vertices: a graph $G$ has sd-degeneracy at most $d$ if either $|V(G)| \leq 1$, or there is a pair of distinct vertices $u, v \in V(G)$ such that $\sd_G(u,v) \leq d$ and the graph $G - u$ has sd-degeneracy at most $d$.

We say that a~hereditary graph class $\mathcal C$ has \emph{bounded versatile symmetric difference} if there is an integer $d := d_{\mathcal C}>0$ such that for every $n$-vertex graph $G \in \mathcal C$, there are at~least $\lfloor n/d \rfloor$ disjoint pairs of vertices $u, v$ satisfying $\sd_G(u,v) \leqslant d$.
For $\alpha \in [0,1)$, we say that $\mathcal C$ has \emph{versatile symmetric difference of exponent~$\alpha$} if there is a~constant~$d:= d_{\mathcal C} > 0$ such that for every $n$-vertex graph $G \in \mathcal C$, there are $\lfloor n/d \rfloor$ disjoint pairs of vertices $u, v$ in~$G$ satisfying $\sd_G(u,v) \leqslant d n^\alpha$.
Thus \emph{versatile symmetric difference of exponent~0} coincides with \emph{bounded versatile symmetric difference}.

\subsection{Signed tree models and interval biclique partitions}

We recall the definition of signed tree models for the sake of self-containedness.
However, we will not use them in the rest of the paper.
Thus, from this subsection, one can only read the definition of interval biclique partitions (given just after~\cref{lem:sdd-seq-to-stm}) and \cref{lem:sdd-seq-to-ibp}.

\emph{Signed tree models}, originally defined in~\cite{BonnetDSZ24}, generalize the tree models introduced in the context of twin-width~\cite{BonnetNMST24,twin-width3}.
In a rooted tree $T$ and $u,v\in V(T)$, by $u \preceq_T v$ we denote that $u$ is an ancestor of $v$.
A~pair of vertices of a~rooted tree $T$ is a~\emph{transversal pair} of~$T$ if it is not between an ancestor and descendant.
We say that two transversal pairs $u_1v_1$ and $u_2v_2$ \emph{cross} if one of $u_1, v_1$ is a~strict ancestor of one of $u_2, v_2$, and conversely, one of $u_2, v_2$ is a~strict ancestor of one of $u_1, v_1$.

A~\emph{full} binary tree is a~rooted binary tree where every internal node has exactly two children. 
A~\emph{signed tree model} is a~triple $\mathcal T=(T, A, B)$ where $T$ is a~full binary tree, $A$ and $B$ are two disjoint sets of transversal pairs of $T$, and $A \cup B$ does not contain any crossing pair.
The transversal pairs of $A$ and $B$ are called \emph{negative edges} and \emph{positive edges}, respectively.

We say that a~transversal pair $u'v'$ \emph{covers} the pair $u,v \in V(T)$ if $u'\preceq_T u$ and $v'\preceq_T v$ and there is no transversal pair $u''v'' \neq u'v'$ in $A\cup B$ with $u'\preceq_T u''\preceq_T u$ and $v'\preceq_T v'' \preceq_T v$.
We denote by $L(T)$ the set of leaves of~$T$.
The graph $G_{\mathcal T}$ represented by the signed tree model $\mathcal T$ has vertex set $L(T)$, and $u,v \in L(T)$ are adjacent in~$G_{\mathcal T}$ if and only if the pair $u,v$ is covered by a~positive transversal pair in~$\mathcal T$.

\begin{figure}[!ht]
  \centering
  \begin{tikzpicture}[%
    scale=1.3,
    level distance=10mm,
    level 1/.style={sibling distance=32mm},
    level 2/.style={sibling distance=16mm},
    level 3/.style={sibling distance=8mm},
    level 4/.style={sibling distance=4mm},
    every node/.style={draw, circle, inner sep=2pt, font=\footnotesize},
    edge from parent path={(\tikzparentnode) -- (\tikzchildnode)}
  ]
  \node (root) {}
    child {node (a) {}
      child {node (b) {}
        child {node (c) {}
          child {node (c1) [label=below:{1}] {}}
          child {node (c2) [label=below:{2}] {}}
        }
        child {node (d) [label=below:{3}] {}}
      }
      child {node (e) {}
        child {node (f) {}
          child {node (f1) [label=below:{4}] {}}
          child {node (f2) [label=below:{5}] {}}
        }
        child {node (g) {}
          child {node (g1) [label=below:{6}] {}}
          child {node (g2) [label=below:{7}] {}}
        }
      }
    }
    child {node (h) {}
      child {node (i) {}
        child {node (j) {}
          child {node (j1) [label=below:{8}] {}}
          child {node (j2) [label=below:{9}] {}}
        }
        child {node (k) [label=below:{10}] {}}
      }
      child {node (l) {}
        child {node (m) {}
          child {node (m1) [label=below:{11}] {}}
          child {node (m2) [label=below:{12}] {}}
        }
        child {node (n) {}
          child {node (n1) [label=below:{13}] {}}
          child {node (n2) [label=below:{14}] {}}
        }
      }
    };

    \foreach \i/\j/\b in {a/n/15, c2/k/0, c2/d/0}{
      \draw[very thick, amber] (\i) to [bend left=\b] (\j);
    }

    \draw[very thick, amber] (i) to node[midway, above, inner sep=1pt, draw=none] {$e_2$} (e);
    
    \foreach \i/\j/\b in {e/k/0, g1/g2/0, m1/m2/0, m2/n/0, f1/g/0, n/i/0, c/d/0}{
       \draw[very thick, blue] (\i) to [bend left=\b] (\j);
     }

\draw[very thick, blue] (f) to [bend left=28] node[midway, above, inner sep=1pt, draw=none] {$e_1$} (j);
\draw[very thick, blue] (a) to [bend left=0] node[midway, above, inner sep=1pt, draw=none] {$e_3$} (h);

    \draw[very thin, dashed, red] (b) -- (h) ; 
\end{tikzpicture}
  \caption{A~signed tree model of a~14-vertex graph, with~$A$ in amber and~$B$ in blue.
    The topmost amber edge and the dashed red edge cross (which is not in $A \cup B$).
    Vertex 8 is adjacent to 4 because 4,8 is covered by the positive edge $e_1$, and to 2 because 2,8 is covered by the positive edge $e_3$, but 8 is not adjacent to 7 because 7,8 is covered by the negative edge $e_2$.}
\label{fig:signed-tree-model}
\end{figure}

A~graph class $\mathcal C$ has \emph{sparse signed tree models} if there is a~constant $c$ such that every $G \in \mathcal C$ admits a~signed tree model $(T,A,B)$ with $|A \cup B| \leqslant c \cdot \abs{V(T)}$.
There is an efficient algorithm that turns sd-degeneracy sequences of bounded width into sparse signed tree models.  

\begin{lemma}[Lemma 18 of~\cite{BonnetKGM26}, Lemma 3.1 of~\cite{BonnetDSZ24}]\label{lem:sdd-seq-to-stm}
  There is an~$\Oh(dn+m)$-time algorithm that, given an sd-degeneracy sequence of an $n$-vertex $m$-edge graph $G$ of width~$d$, outputs a~signed tree model of~$G$ with $\Oh(dn)$ transversal pairs. 
\end{lemma}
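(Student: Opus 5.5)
We build the signed tree model bottom-up, reversing the sd-degeneracy sequence $(u_1,v_1),\ldots,(u_{n-1},v_{n-1})$ of width~$d$. Let $G_i := G - \{u_1,\ldots,u_{i-1}\}$, so $G_n$ is the single vertex $v_{n-1}$.

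\textbf{Tree shape.} We read the sequence backwards. Each time the vertex $u_i$ is reinserted next to its partner $v_i$, the current leaf representing $v_i$ becomes an internal node. This node gets two new children, which are leaves for $u_i$ and $v_i$. The final full binary tree $T$ has leaves $V(G)$ and $2n-1$ nodes. It is computed in $\Oh(n)$ time by keeping, for each vertex, a pointer to the current leaf holding it.

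\textbf{Transversal pairs.} We maintain the invariant that after reinserting $u_{n-1},\ldots,u_i$ we have a signed tree model $(T_i,A_i,B_i)$ of $G_i$ with $\Oh(d(n-i))$ transversal pairs. When $u_i$ is added as a sibling of $v_i$, the new leaf $u_i$ automatically inherits all the adjacencies of $v_i$. This holds because every pair covering $(v_i,x)$ uses an ancestor of the new parent node, and that node is also an ancestor of $u_i$. The inherited relation is wrong only on $N_{G_i}(u_i)\triangle N_{G_i}(v_i)$, which has at most $d$ vertices, and possibly on the pair $u_iv_i$ itself. For each such $x$ we add a transversal pair between the leaf $u_i$ and the leaf $x$, positive or negative according to the true adjacency in $G_i$. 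We also add one pair for $u_iv_i$. A pair with an endpoint at a leaf $u_i$ is the deepest possible on that side, so it overrides exactly the single pair $(u_i,x)$ and nothing else.

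\textbf{Main obstacle: non-crossing and later correctness.} Two things must survive later operations.

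First, the non-crossing condition. A pair $u_i x$ could cross an older pair $yz$ only if $u_i$ were a strict ancestor of $y$ or $z$. This cannot happen at the moment of insertion, since $u_i$ is a leaf. But $x$ may later be split, i.e.\ become internal with new children. Then $x$ is a strict ancestor of later leaves, while $u_i$ may also be split later. I expect the main case analysis here, and I would try to handle it as follows.

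\begin{itemize}
\item When a leaf $v$ is split, all transversal pairs attached to it stay on the internal node. Fixes for the new child $u$ are always attached at leaves. So every pair is attached at nodes that were leaves at its creation time.
\item Take two pairs, with the one created later called $e'$. Both endpoints of $e'$ were leaves at its creation. Hence neither endpoint of $e'$ can be a strict ancestor of an endpoint of the earlier pair $e$, because those endpoints already existed at that time and a leaf is an ancestor of no other existing node.
\item Nodes created later are not endpoints of $e$. So crossing is impossible.
\end{itemize}

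Second, correctness for later pairs. A fix pair at $(u_i,x)$ also governs descendants of $u_i$ and of $x$ created afterwards. This is exactly the desired inheritance of the adjacency between $u_i$ and $x$ by their future twins. Any discrepancy among those descendants is corrected at their own reinsertion step, since adjacencies are always measured in the current graph $G_j$.

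\textbf{Counting and running time.} The total number of pairs is at most $(d+1)(n-1)=\Oh(dn)$. To find the symmetric difference $N_{G_i}(u_i)\triangle N_{G_i}(v_i)$ at each step, we run a marking scan over the adjacency lists of $u_i$ and $v_i$, restricted to vertices of $G_i$. Each of these steps takes time proportional to $\deg(u_i)+\deg(v_i)$. Since each $u_i$ is removed exactly once, charging via $\deg(u_i)$ gives $\Oh(m)$. However, the cost of scanning $\deg(v_i)$ is not controlled this way, and handling it is a second hurdle. I would avoid rescanning $N(v_i)$ by first marking $N(u_i)$ and comparing it against the current neighbourhood of $v_i$. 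Alternatively, I would note that $|\deg(v_i)-\deg(u_i)|\le d$ in $G_i$, so that
\[
\deg_{G_i}(v_i)\le \deg_{G_i}(u_i)+d,
\]
which gives a total of $\Oh(dn+m)$.
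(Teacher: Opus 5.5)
Your proposal is correct and follows the standard argument behind this lemma, which the paper does not reprove but imports from \cite{BonnetDSZ24} and \cite{BonnetKGM26}: reverse the sequence, turn the leaf of $v_i$ into a cherry with leaves $u_i$ and $v_i$, and attach at most $d+1$ signed pairs at the new leaf $u_i$; non-crossing holds because every new pair joins two current leaves, and ancestry among existing nodes never changes. For the $\Oh(dn+m)$ bound your second option is the right one, but it assumes you explicitly maintain adjacency lists of the current graph $G_i$ (e.g., doubly linked lists with cross pointers, updated in time proportional to the degree of $u_i$ at each deletion or reinsertion), so that $N_{G_i}(v_i)$ is enumerated in $\Oh(\deg_{G_i}(v_i)+1)=\Oh(\deg_{G_i}(u_i)+d+1)$ time; your first option on its own would still scan all of $N(v_i)$.
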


Assume that the graph $G$ comes with a~total ordering of its vertices; for convenience, assume that the vertex set of $G$ is $[n]$, ordered naturally.
We then define the \emph{interval biclique partition} of $G$ as the partitioning of the edge set of $G$ into bicliques, where each side of each biclique is an~integer subinterval of $[n]$.
The following lemma implies that a~small-size signed tree model can be efficiently turned into a~small-size interval biclique partition.

\begin{lemma}[Item 1 of Theorem 1 of~\cite{BonnetKGM26}]\label{lem:stm-to-ibp}
  There is an~$\Oh(p \log n)$-time algorithm that, given a~signed tree model of an $n$-vertex graph $G$ with $p$ transversal pairs, outputs an interval biclique partition of~$G$ with $\Oh(p)$ bicliques.
\end{lemma}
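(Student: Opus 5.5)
This lemma is quoted from Theorem~1 of~\cite{BonnetKGM26}, so we only sketch how we would reprove it. The plan has three steps. First, order the leaves. Second, charge every edge of $G$ to the unique transversal pair covering it. Third, for each pair $u'v'$, split the edges it covers into a bounded number of interval bicliques on average.

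For the ordering, we fix a left-to-right DFS order of the leaves of $T$ and use it as the vertex order $[n]$. Then every subtree $T_x$ has a leaf set $L(T_x)$ that is an interval $I_x$.

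For the decomposition, consider a positive pair $u'v' \in B$. It covers exactly the pairs $(a,b) \in I_{u'} \times I_{v'}$ except those lying below some lower pair $u''v''$ with $u' \preceq u''$ and $v' \preceq v''$. Call the maximal such lower pairs the \emph{children} of $u'v'$. These children form a forest structure on $A\cup B$; this is where the non-crossing condition is used.

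Write the region owned by $u'v'$ as the rectangle $I_{u'} \times I_{v'}$ minus the disjoint rectangles $I_{u''}\times I_{v''}$ of its children. The children's rectangles are pairwise disjoint and nested along tree paths. A rectangle with $k$ rectangular holes removed can be partitioned into $\Oh(k+1)$ sub-rectangles; in general this uses a vertical-decomposition style sweep.

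Here the tree structure makes the partition easy to do recursively, by descending from $u'$ toward the endpoints $u''$ of the children. Along the way we peel off, at each branching node of the union of root-to-$u''$ paths, the sibling subtree interval times the current $I_{v}$ interval. Each branching node is charged to a child pair or to a node on a path. To keep the count at $\Oh(1)$ amortized per transversal pair, we need the non-crossing property, together with a \emph{compression} of the paths: only nodes where some child pair branches contribute. We use a virtual-tree or LCA structure to enumerate these nodes in $\Oh(\log n)$ time each, which gives the stated $\Oh(p\log n)$ running time. Summing over all pairs, the total number of rectangles is $\Oh(p)$.

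Each resulting rectangle $J_1\times J_2$ with $J_1,J_2$ intervals is a biclique in $G$, and together these bicliques partition $E(G)$. Negative pairs contribute no bicliques but still act as holes in their parent's region.

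The main obstacle is showing that the rectangle count stays linear in $p$ and not $\Oh(p\log n)$. Naively peeling along root-to-leaf paths costs the depth of $T$. We would first try charging each peeled rectangle to a pair in $A\cup B$ or to a branching node of the compressed union of paths. The non-crossing condition should make the number of such branching nodes at most the number of children. If this fails, the fallback is to first rebalance the model, or to accept $\Oh(p\log n)$ bicliques.
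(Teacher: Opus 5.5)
The paper does not prove this lemma; it imports it from \cite{BonnetKGM26}, so your proposal has to stand on its own. Its skeleton is sound. With a DFS leaf order, every subtree is an interval $I_x$. Orient each transversal pair $xy$ so that $I_x$ lies left of $I_y$. Then two rectangles $I_x\times I_y$ and $I_{x'}\times I_{y'}$ that meet force $x,x'$ and $y,y'$ to be comparable, and the only non-nested possibility is a crossing. So the rectangles of $A\cup B$ form a laminar family. The edges covered by $u'v'\in B$ are exactly $I_{u'}\times I_{v'}$ minus the pairwise disjoint rectangles of its maximal strict sub-pairs. Any partition of these regions into rectangles is therefore an interval biclique partition of $G$.

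The gap is that you never establish the $\Oh(p)$ bound. You call it the main obstacle, leave it at ``should'', and offer $\Oh(p\log n)$ bicliques as a fallback, which is not the lemma. The tree-peeling procedure you elaborate is also not correct as described. ``The current $I_v$ interval'' is in general not an interval: children with equal or nested $u$-endpoints have disjoint rectangles, but their row intervals need not be disjoint, so they cut separate pieces out of the column range. Charging only to branching nodes of the $u$-side paths does not count these pieces.

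The obstacle is illusory, because the rectangle-with-holes fact you quoted already settles it once two points are made explicit.
\begin{itemize}
\item Each pair has at most one parent. Two minimal enclosing rectangles both contain the child's rectangle, hence meet, hence are nested. So $\sum_{u'v'\in B}(k_{u'v'}+1)\le 2p$, where $k_{u'v'}$ is the number of children.
\item Sweep the rows of $I_{u'}$, maintaining the maximal free column segments and processing hole ends before hole starts. A starting hole lies inside one free segment, closing it and opening at most two; an ending hole opens one. So at most $3k_{u'v'}+1$ rectangles arise, in $\Oh((k_{u'v'}+1)\log(k_{u'v'}+1))$ time with a balanced search tree.
\end{itemize}
Parents can be found by a DFS of $T$ that keeps a stack of active pairs at each column node and answers one deepest-marked-ancestor query per pair in $\Oh(\log n)$ time. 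Since $p=\Oh(n^2)$, this gives $\Oh(p)$ bicliques in $\Oh(p\log n)$ time. No peeling, path compression, or rebalancing is needed.
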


We will rely on the following consequence of \cref{lem:sdd-seq-to-stm,lem:stm-to-ibp}.

\begin{lemma}\label{lem:sdd-seq-to-ibp}
   There is an~$\Oh(d n \log n + m)$-time algorithm that, given an sd-degeneracy sequence of an $n$-vertex $m$-edge graph $G$ of width~$d$, outputs an interval biclique partition of~$G$ with $\Oh(d n)$ bicliques.
\end{lemma}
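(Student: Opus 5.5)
The plan is to compose \cref{lem:sdd-seq-to-stm} with \cref{lem:stm-to-ibp} and add up the running times.

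\textbf{Step 1.} Feed the given sd-degeneracy sequence of width~$d$ to the algorithm of \cref{lem:sdd-seq-to-stm}. In $\Oh(dn+m)$ time it returns a signed tree model $\mathcal T=(T,A,B)$ of~$G$ with $p=|A\cup B|=\Oh(dn)$ transversal pairs.

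\textbf{Step 2.} Feed $\mathcal T$ to the algorithm of \cref{lem:stm-to-ibp}. It returns an interval biclique partition of $G_{\mathcal T}=G$ with $\Oh(p)=\Oh(dn)$ bicliques. The running time is $\Oh(p\log n)=\Oh(dn\log n)$.

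\textbf{Step 3.} The total time is $\Oh(dn+m)+\Oh(dn\log n)=\Oh(dn\log n+m)$, and the bound on the number of bicliques is inherited from Step~2.

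Two bookkeeping points need checking.

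\emph{What $n$ means in \cref{lem:stm-to-ibp}.} There $n$ is the number of vertices of the represented graph, which is $|L(T)|=|V(G)|=n$. It is not the number of nodes of~$T$. Since $T$ is a full binary tree with $n$ leaves, $|V(T)|=2n-1$, so $\log|V(T)|=\Oh(\log n)$ in either reading.

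\emph{The vertex labelling.} The interval biclique partition is defined with respect to a vertex ordering. Its output labelling of $V(G)$ by $[n]$, presumably a leaf order of~$T$, is part of the output of \cref{lem:stm-to-ibp}, so nothing further is needed. If one insists on a fixed labelling of~$G$, the stated lemma does not give it, so the statement should be read as outputting the ordering too. This matches how interval biclique partitions are used later in the paper.

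I expect no real obstacle. The only care needed is the edge case $d=0$, where $dn$ could vanish. To handle it, replace $d$ by $\max(d,1)$ in all bounds, which changes nothing asymptotically.
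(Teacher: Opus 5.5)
Your proposal is correct and matches the paper, which presents this lemma as the direct composition of \cref{lem:sdd-seq-to-stm} and \cref{lem:stm-to-ibp}, with running times $\Oh(dn+m)$ and $\Oh(dn\log n)$ added exactly as you do. Your remarks on the meaning of $n$, on the vertex ordering being part of the output, and on the degenerate case $d=0$ are harmless and fit the intended reading.
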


\subsection{Matrix multiplication}

An interval biclique partition with $\Oh(n)$ bicliques of an $n$-vertex graph $G$ allows one to multiply any adjacency matrix of~$G$ with any $n$-vector in $\Oh(n)$ time.
More precisely:

\begin{lemma}[Theorem 33 in~\cite{BonnetKGM26}]\label{lem:matmult-ibp}
  For every abelian group $\mathcal A$, there exists an algorithm
  that takes on input an interval biclique partition with $b$ bicliques
  of an $n$-vertex graph $G$ with adjacency matrix $M$
  and an $n$-vector $X \in \mathcal A^n$ and computes the product $MX$,
  in time $\Oh(n+b)$ and $\Oh(n+b)$ group operations. 
\end{lemma}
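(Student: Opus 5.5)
The natural approach is to reduce the product $MX$ to prefix sums and a difference array. By definition of an interval biclique partition, $E(G)$ is the disjoint union of bicliques $I_j \times J_j$ for $j \in [b]$, where $I_j=[a_j,a_j']$ and $J_j=[c_j,c_j']$ are integer intervals of $[n]$. For a vertex $u$, the entry $(MX)_u$ is the sum of $X_v$ over neighbors $v$ of $u$. Since every edge lies in exactly one biclique, we can write
\[
(MX)_u \;=\; \sum_{j \,:\, u \in I_j} \sigma(J_j) \;+\; \sum_{j \,:\, u \in J_j} \sigma(I_j),
\]
where $\sigma([c,c']) := \sum_{v=c}^{c'} X_v$. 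An edge $uv$ in biclique $j$ is counted once, from the side containing $u$. The sides $I_j$ and $J_j$ are disjoint, since the graph is simple and edges are unordered pairs. If the convention allows the two sides to overlap, I would first normalize by splitting each biclique into $\Oh(1)$ pieces with disjoint sides.

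\textbf{Step 1: interval sums of $X$.} Compute the prefix sums $P_0=0$ and $P_k = P_{k-1}+X_k$ using $n$ group operations. Then each $\sigma([c,c']) = P_{c'} - P_{c-1}$ costs one group operation, because $\mathcal A$ is abelian and therefore has inverses. So all $2b$ interval sums are obtained with $\Oh(b)$ operations.

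\textbf{Step 2: distributing a value over an interval.} For each biclique, we must add $\sigma(J_j)$ to every $u\in I_j$ and $\sigma(I_j)$ to every $u \in J_j$. We use a difference array $D\in\mathcal A^{n+1}$ initialized to $0$. Adding a value $g$ on $[a,a']$ is recorded as $D_a \mathrel{+}= g$ and $D_{a'+1} \mathrel{-}= g$. Every biclique contributes $4$ such updates.

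\textbf{Step 3: recovery.} Finally, $(MX)_u = \sum_{k\le u} D_k$, computed by a single prefix-sum pass over $D$. The total work is $\Oh(n+b)$ time and the same number of group operations.

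The only point needing care is the correctness of the double-counting identity. It relies on two facts: the biclique partition is a genuine partition of $E(G)$, and $M$ is the symmetric adjacency matrix, so $uv$ contributes $X_v$ to row $u$ and $X_u$ to row $v$. The second term of the identity handles exactly this symmetric contribution. If the given matrix $M$ uses a vertex order different from the labeling of the partition, we permute $X$ and the output in $\Oh(n)$ time. I do not expect any step to be a real obstacle. The mild subtleties are the possible overlap of the two sides, handled by the normalization above, and the need for inverses in Step 1, which the abelian-group hypothesis supplies.
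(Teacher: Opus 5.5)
Your proof is correct and is essentially the paper's: its short proof encodes $M$ as a sparse $\{-1,0,1\}$-matrix with $\Oh(b)$ nonzero entries (the interval endpoints of the bicliques) combined with prefix sums, which is exactly your computation of prefix sums of $X$, $\Oh(b)$ difference-array updates, and a final prefix-sum pass. The normalization hedge is unnecessary and slightly off: bicliques in a simple graph always have disjoint sides, and under an overlapping convention the overlap would induce a clique, which cannot be split into $\Oh(1)$ bicliques.
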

Note that, as the adjacency matrix $M$ is a $\{0,1\}$-matrix, the multiplication
of $M$ with a vector of group elements is well-defined. 
This lemma has a~short proof that encodes $M$ into an $n \times n$ $\{-1,0,1\}$-matrix with $\Oh(b)$ nonzero entries.
As a~direct consequence of~\cref{lem:matmult-ibp}, one can multiply $M$ with any $n \times n$ matrix in time $\Oh(n(n+b))$
(assuming the group operations take constant time).

\section{Finding an sd-degeneracy sequence deterministically}\label{sec:sd-deg-det}

In this section, we provide an algorithm that computes an sd-degeneracy sequence of constant width in graphs of bounded versatile symmetric difference.
The algorithm is time-optimal for dense graphs (i.e., $n$-vertex graphs with $\Theta(n^2)$ edges).
More generally, the algorithm computes an sd-degeneracy sequence of width $\Oh(n^\alpha)$ in graphs of versatile symmetric difference of exponent~$\alpha$.

It uses classic data structures for strings.
Given a~word $w$ of finite length~$s$, a~longest common extension (LCE) query $i,j \in [s]$ asks for the largest integer~$k$ such that $w[i]w[i+1] \cdots w[i+k-1] = w[j]w[j+1] \cdots w[j+k-1]$.  

\begin{theorem}[Section 4 of~\cite{LandauV88}]\label{thm:lce}
  Fix a~finite alphabet $\Sigma$.
  There is an~$\Oh(s)$-time algorithm, that given a~word $w$ of length~$s$ over~$\Sigma$, builds a~data structure that answers each LCE query on~$w$ in constant time.
\end{theorem}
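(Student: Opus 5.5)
The plan is to reduce LCE queries to range-minimum queries over the suffix array's LCP array. Every component can be built in linear time when the alphabet is fixed (indeed even for an integer alphabet of size polynomial in $s$). Throughout, we assume the word-RAM model with words of $\Theta(\log s)$ bits, which the constant-time table lookups below rely on.

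\textbf{Step 1 (suffix array).} Append a sentinel symbol smaller than all letters and compute the suffix array $\mathrm{SA}$ of $w$. This is the permutation of $[s]$ listing the suffixes $w[i..s]$ in lexicographic order, together with its inverse $\mathrm{rank}$. We use a linear-time construction such as the skew/DC3 algorithm of K\"arkk\"ainen and Sanders. It recursively sorts the suffixes starting at positions $\not\equiv 0 \pmod 3$ by radix-sorting triples, then merges. Alternatively, we build a suffix tree in $\Oh(s)$ time (McCreight/Ukkonen, which is linear for constant $|\Sigma|$) and read $\mathrm{SA}$ off a lexicographic DFS.

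\textbf{Step 2 (LCP array).} Compute $\mathrm{LCP}[r]$, the length of the longest common prefix of the suffixes of ranks $r-1$ and $r$, using Kasai et al.'s algorithm. It processes positions $i = 1, \dots, s$ in text order and uses the invariant that the LCP value of the suffix starting at $i+1$ with its rank-predecessor is at least the value for $i$ minus one. Hence the total number of character comparisons telescopes to $\Oh(s)$.

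\textbf{Step 3 (range minima).} Build an RMQ structure on $\mathrm{LCP}$ with $\Oh(s)$ preprocessing and $\Oh(1)$ query time.
\begin{itemize}
\item Split the array into blocks of size $b = \lceil \tfrac14 \log s\rceil$.
\item Build a sparse table on the $\Oh(s/b)$ block minima. It has $\Oh((s/b)\log s) = \Oh(s)$ entries, and each query uses two overlapping power-of-two windows.
\item Classify each block by the shape of its Cartesian tree, encoded in $2b$ bits. There are at most $4^b = \Oh(\sqrt s)$ shapes. For each shape, precompute all $\Oh(b^2)$ in-block answers, in $\Oh(\sqrt s\, \log^2 s) = o(s)$ time in total.
\end{itemize}
A query then combines at most two in-block lookups with one sparse-table lookup. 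This step is the main technical obstacle: the naive sparse table costs $\Theta(s\log s)$, and the block-indirection argument with Cartesian-tree types (Bender--Farach-Colton, Fischer--Heun) is what brings preprocessing down to linear.

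\textbf{Step 4 (answering queries).} For a query $(i,j)$ with $i=j$, return $s-i+1$. Otherwise let $r_1<r_2$ be $\mathrm{rank}(i)$ and $\mathrm{rank}(j)$ in some order, and return $\min_{r_1<r\le r_2}\mathrm{LCP}[r]$ using one RMQ. Correctness rests on the standard lemma that in lexicographic order the LCP of two suffixes equals the minimum of the adjacent LCPs between them. Indeed, every suffix between them in the order shares their common prefix, so the minimum is at least the LCP. Conversely, the first adjacent pair at which the character at the LCP position changes gives equality. The sentinel ensures no suffix is a proper prefix of another, so the order is strict and this argument applies.

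Summing up, preprocessing is $\Oh(s)$ and each query is $\Oh(1)$, as claimed.
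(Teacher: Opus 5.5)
Your proposal is correct and matches the alternative route the paper mentions after citing Landau and Vishkin (whose original argument uses LCA queries on the suffix tree): a suffix array, the LCP array, and a constant-time RMQ structure with linear preprocessing, together with the standard lemma that the LCP of two suffixes is the minimum adjacent LCP value between their ranks. The only cosmetic difference is in the small blocks, which you classify by Cartesian-tree shape (Fischer--Heun) rather than via the $\pm1$-RMQ reduction of Bender and Farach-Colton cited in the paper; both give $\Oh(s)$ preprocessing and $\Oh(1)$ query time.
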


\Cref{thm:lce} is originally shown by reduction to least common ancestor (LCA) in the suffix tree.
To avoid computing the suffix tree, one can alternatively go through suffix array, longest common prefix (LCP) array, and range minimum query (RMQ).  
See~\cite{BenderF00} for a~simple proof of optimal LCA and RMQ data structures.

\Cref{thm:lce} yields a~surprisingly fast check for $d$-near-twinness in graphs.

\begin{theorem}\label{thm:near-twin-ds}
  There is an~$\Oh(n^2)$-time algorithm that, given an~$n$-vertex graph~$G$, builds a~data structure that given a~query $u, v \in V(G), d \in \mathbb N$ reports if $\sd_G(u,v) \leqslant d$ in $\Oh(d)$ time.  
\end{theorem}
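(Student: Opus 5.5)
We concatenate the adjacency rows of $G$ into one word and let the LCE structure of \cref{thm:lce} count mismatches between rows: each constant-time query jumps over a maximal block of agreement, so $d+1$ queries suffice to decide whether two rows differ in at most $d$ positions.

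Fix an arbitrary ordering $V(G)=\{x_1,\dots,x_n\}$ and let $r_u\in\{0,1\}^n$ be the adjacency row of $u$, with $r_u[k]=1$ iff $x_k\in N_G(u)$. Then $\sd_G(u,v)$ is exactly the Hamming distance between $r_u$ and $r_v$. Build the word
\[
w := r_{x_1}\,\#\,r_{x_2}\,\#\,\cdots\,\#\,r_{x_n}\,\#
\]
over the alphabet $\{0,1,\#\}$. Its length is $s=n(n+1)=\Oh(n^2)$, so \cref{thm:lce} preprocesses it in $\Oh(n^2)$ time. Let $p_u$ denote the starting position of $r_u$ in $w$. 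The separators are optional, but they guarantee that a common extension starting inside two rows never runs past the end of a row in a way that causes trouble. Keep also the adjacency matrix, so a single entry $r_u[k]$ can be read in constant time.

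For a query $(u,v,d)$, scan the two rows with a pointer $t$, starting at $t=0$, and a mismatch counter $c$, starting at $c=0$. At each step compute $k=\mathrm{LCE}(p_u+t,\,p_v+t)$ and set $t\leftarrow t+k$.
\begin{itemize}
\item If $t\ge n$, the rows agree on the rest of their length; stop.
\item Otherwise $r_u[t+1]\ne r_v[t+1]$ (positions are 1-indexed). Increment $c$ and set $t\leftarrow t+1$.
\item If $c>d$, stop and answer ``no''.
\end{itemize}
Each iteration costs $\Oh(1)$ and either finds a new mismatch or terminates. So there are at most $d+2$ iterations, giving $\Oh(d+1)$ time, which is $\Oh(d)$ for $d\ge1$. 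The case $d=0$ takes a single LCE query, which is constant time.

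Correctness holds because the mismatches are discovered in increasing order, one per iteration. The only subtle point is the boundary. Since $u\ne v$, the separators $\#$ sit at the same offset $n$ after $p_u$ and after $p_v$, so both extensions reach the separator at $t=n$ together. The LCE value may then continue past the row end, which is harmless because the loop stops as soon as $t\ge n$; we can cap $k$ at $n-t$ if desired. When $u=v$ the answer is trivially yes, since the symmetric difference is $0$.

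The only real obstacle is recognizing that Hamming distance can be computed by repeated LCE jumps in time proportional to the number of mismatches. The rest is the $\Oh(n^2)$ cost of building the word and the structure from \cref{thm:lce}.
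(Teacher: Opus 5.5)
Your proposal is correct and follows the paper's proof: both concatenate the adjacency rows into one word, build the LCE structure of \cref{thm:lce} in $\Oh(n^2)$ time, and answer a query with at most $d+1$ LCE jumps, each landing on the next mismatch, stopping at the end of the row or once more than $d$ mismatches are found. The only difference is cosmetic: you insert $\#$ separators, where the paper checks $p+k \leqslant n$ directly. You are also slightly more careful about the $d=0$ case, where the bound is really $\Oh(d+1)$.
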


\begin{proof}
  We build a~binary word $w$ of length~$n^2$ by concatenating the adjacency vectors of all vertices.
  Formally, assuming $V(G)=[n]$, for every $a, b \in [n]$, $w[(a-1)n+b]=1$ if $ab \in E(G)$ and $w[(a-1)n+b]=0$ otherwise.
  The word $w$ can easily be built in $\Oh(n^2)$ time.

  By~\cref{thm:lce}, we build, in $\Oh(n^2)$ time, a~data structure $\mathcal D_w$ that answers each LCE query on~$w$ in constant time.
  Given a~query $u, v \in V(G), d \in \mathbb N$, we proceed as follows.
  We initialize $p \leftarrow 1$ and $q \leftarrow d$.

  While $p \leqslant n$ and $q \geqslant 0$, we feed $\mathcal D_w$ with the query $(u-1)n+p, (v-1)n+p$.
  (Note that, for $p=1$, those are the positions where the adjacency vectors of $u$ and $v$, respectively, start.)
  Let $k$ be the query answer, found in constant time.
  If $p+k \leqslant n$, we decrement $q$ by 1 and set $p \leftarrow p+k+1$, else we exit the while loop.

  When we exit the while loop, if $q \geqslant 0$, we report that $\sd_G(u,v) \leqslant d$, and otherwise ($q=-1$) that $\sd_G(u,v) > d$.

  \medskip

  \textbf{Time.}
  We have already checked that $\mathcal D_w$ is built in $\Oh(n^2)$ time.
  Note that each query $u, v, d$ makes at most $d+1$ LCE queries.
  Indeed, in the end of the while loop, we decrement $q$ (initially set to~$d$) by 1 or we exit the loop, whereas continuing looping requires that $q$ is nonnegative.
  Hence a~query is met in time $(d+1) \Oh(1)+\Oh(1)=\Oh(d)$.

  \medskip

  \textbf{Correctness.}
  Let $1, p_1, \ldots, p_h$ be the successive values $p$ is set to.
  In particular, $h \leqslant d$.
  The following properties hold:
  \begin{itemize}
  \item $1 < p_1 < \ldots < p_h \leqslant n+1$;
  \item for every $i \in [h]$, $w[(u-1)n + p_i - 1] \neq w[(v-1)n + p_i - 1]$;
  \item for every $j \in [p_h-1]$, if $w[(u-1)n + j] \neq w[(v-1)n + j]$ then there exists $i \in [h]$ such that $j=p_i-1$.
  \end{itemize}
  The main invariant is that there are exactly $i$ vertices with label smaller than $p_i$ adjacent to exactly one of~$u, v$.
  As we check both if more than $d$ such positions where found ($q \leqslant 0$) and if the current position overflows to the next adjacency vectors ($p \leqslant n$), we correctly decide whether or not $\sd_G(u,v) \leqslant d$ holds when exiting the while loop.
\end{proof}

In particular, we can list \emph{all} pairs of $d$-near-twins in quadratic time, when $d$ is constant.

\begin{corollary}\label{cor:near-twin-listing}
  There is an~$\Oh(d n^2)$-time algorithm that, given an~$n$-vertex graph and a~positive integer $d$, lists all the pairs of vertices $u, v \in V(G)$ such that $\sd_G(u,v) \leqslant d$.  
\end{corollary}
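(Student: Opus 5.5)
The plan is to apply \cref{thm:near-twin-ds} directly, once per unordered pair of vertices. We first spend $\Oh(n^2)$ time building the data structure of \cref{thm:near-twin-ds} on $G$. We then iterate over all $\binom{n}{2}$ unordered pairs $\{u,v\}$ with $u \neq v$, query the structure with $(u,v,d)$, and output the pair whenever it reports $\sd_G(u,v) \leqslant d$.

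Correctness is immediate from the correctness of \cref{thm:near-twin-ds}. Every pair is tested exactly once, so the output contains exactly the pairs of $d$-near-twins.

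For the running time, each query costs $\Oh(d)$, and since $d \geqslant 1$ this is $\Oh(d)$ rather than $\Oh(d+1)$. Summing over $\Oh(n^2)$ pairs gives $\Oh(dn^2)$ for the queries. Together with the $\Oh(n^2)$ preprocessing, the total is $\Oh(dn^2)$. The output size is at most $\binom{n}{2}$ pairs, which is absorbed in this bound.

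There is no real obstacle. The only point to check is that the per-query cost in \cref{thm:near-twin-ds} is $\Oh(d)$ uniformly in $n$: it uses at most $d+1$ constant-time LCE queries plus constant overhead. Hence no hidden factor appears when we sum over all pairs.
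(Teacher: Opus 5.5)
Your proposal is correct and is essentially the paper's proof: build the data structure of \cref{thm:near-twin-ds} in $\Oh(n^2)$ time, then query it on every pair $u \neq v$ at $\Oh(d)$ cost per query. The only difference is that the paper loops over ordered pairs rather than unordered ones, which does not affect the argument.
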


\begin{proof}
  We compute in $\Oh(n^2)$ time the data structure $\mathcal D$ of~\cref{thm:near-twin-ds}.
  For every $u, v \in V(G)$ with $u \neq v$, we present $\mathcal D$ with the query $u, v, d$.
  If the answer is positive ($\sd_G(u,v) \leqslant d$), we add $u,v$ to the output pairs.
  Overall, this takes $\Oh(d n^2)$ time.
\end{proof}

We obtain the objective of this section by repeatedly appling~\cref{cor:near-twin-listing}.

\begin{theorem}\label{thm:sd-degen-seq}
  Let $\mathcal C$ be a~(hereditary) class of versatile symmetric difference of~exponent~$\alpha$.
  There is an~$\Oh_{\mathcal C}(n^{2+\alpha})$-time algorithm that, given an~$n$-vertex graph of~$\mathcal C$, outputs an sd-degeneracy sequence of~$G$ of width~$\Oh_{\mathcal C}(n^\alpha)$.  
\end{theorem}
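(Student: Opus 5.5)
We proceed in rounds, each of which removes a constant fraction of the current vertices. Let $G_0 := G$ and let $d := d_{\mathcal C}$. In round $t$ we have the current induced subgraph $G_t$ with $n_t$ vertices. It lies in $\mathcal C$ because the class is hereditary.

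\textbf{One round.} Apply \cref{cor:near-twin-listing} to $G_t$ with threshold $D_t := \lceil d\, n_t^\alpha \rceil$. This lists all pairs $u,v$ with $\sd_{G_t}(u,v) \leqslant D_t$ in time $\Oh(D_t n_t^2) = \Oh_{\mathcal C}(n_t^{2+\alpha})$. Versatility of exponent~$\alpha$ guarantees $\lfloor n_t/d \rfloor$ pairwise disjoint such pairs. A greedy maximal matching in the listed pairs (viewed as a graph) has at least half as many pairs as a maximum matching, so it has at least $\lfloor n_t/d\rfloor/2$ pairs. It is computed in time linear in the number of listed pairs, which is $\Oh(n_t^2)$.

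\textbf{Recording the eliminations.} For each matched pair $(u_j,v_j)$ we append $(u_j,v_j)$ to the sequence and delete $u_j$. The key check is that the width stays controlled even though several deletions happen within one round. When $u_j$ is removed, earlier deletions in the same round have removed only vertices $u_1,\dots,u_{j-1}$, all distinct from $u_j,v_j$ by disjointness. Deleting vertices can only shrink a symmetric difference. Hence $\sd_{G_t - \{u_1,\dots,u_{j-1}\}}(u_j,v_j) \leqslant \sd_{G_t}(u_j,v_j) \leqslant D_t \leqslant D_0 = \Oh_{\mathcal C}(n^\alpha)$. Also, no $v_j$ is deleted in this round, so the third condition of an sd-degeneracy sequence holds. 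After the round we set $G_{t+1} := G_t - \{u_j\}$ and build a fresh structure on it.

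\textbf{Termination and running time.} Once $n_t < 2d$, we have $\lfloor n_t/d\rfloor$ possibly $0$, and we finish by eliminating the remaining $\Oh_{\mathcal C}(1)$ vertices in arbitrary order. Each such elimination has width at most $n_t = \Oh(1)$. Relabelling $V(G_t)$ to $[n_t]$ before calling \cref{thm:near-twin-ds} costs $\Oh(n_t^2)$. Otherwise $n_{t+1} \leqslant (1 - 1/(4d))\, n_t$, so the total time is a geometric series: $\sum_t \Oh_{\mathcal C}(n_t^{2+\alpha}) = \Oh_{\mathcal C}(n^{2+\alpha})$.

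\textbf{Main point of care.} The step that needs attention is that removing many vertices simultaneously does not spoil the near-twin bounds. This is settled by the monotonicity of $\sd$ under vertex deletion combined with the disjointness of the pairs. The rest is bookkeeping, including rounding $\lfloor n_t/d\rfloor \geqslant 1$ when $n_t \geqslant d$.
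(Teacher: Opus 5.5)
Your proposal is correct and follows essentially the same argument as the paper. Both proceed in rounds: list all near-twin pairs via \cref{cor:near-twin-listing}, greedily extract at least half of the guaranteed $\lfloor n_t/d\rfloor$ disjoint pairs, delete their first elements, and sum the resulting geometric series. The differences are minor. You use the threshold $\lceil d\,n_t^\alpha\rceil$ relative to the current size, whereas the paper uses $d n^\alpha$ with the original $n$. You also state the monotonicity of $\sd$ under vertex deletion explicitly, which the paper leaves implicit.
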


\begin{proof}
  Let $d := d_{\mathcal C}>0$ be such that for every $n$-vertex graph $G \in \mathcal C$, there are $\lfloor n/d \rfloor$ disjoint pairs of vertices $u, v$ satisfying $\sd_G(u,v) \leqslant d n^\alpha$.
  We initialize an sd-degeneracy sequence $\mathcal S$ of~$G$ of width~$d n^\alpha$ to the empty list.
  We also set $G_1 := G$ and $n_1 := n$.

  While the remaining graph $G_i$ has $n_i \geqslant d$ vertices, we do the following.
  
  By~\cref{cor:near-twin-listing}, we list all the pairs of $d n^\alpha$-near-twins of $G_i$ in $\Oh(d n^\alpha n_i^2)$ time.
  We then greedily build a~list $\{u_1,v_1\}, \ldots, \{u_h,v_h\}$ of $\max(1,\lfloor \frac{1}{2} \lfloor n_i/d \rfloor \rfloor)$ \emph{disjoint} pairs of $d n^\alpha$-near-twins of $G_i$ in $\Oh(n_i)$ time.
  (Indeed, if one fixes a~set $D$ disjoint pairs, then the choice of any other pair consumes at most two pairs of~$D$.)
  We append this list to $\mathcal S$.
  We then set $G_{i+1} := G_i - \{u_1, \ldots, u_h\}$ and $n_{i+1} := |V(G_{i+1})|$.
  This finishes the body of the while loop.

  When we exit the while loop, we finish the sd-degeneracy sequence arbitrarily.

  \medskip

  \textbf{Correctness.}
  If the algorithm terminates, it does output an sd-degeneracy sequence of $G$ of width~$d n^\alpha$ since we only append to~$\mathcal S$ pairs of $d n^\alpha$-near-twins in the current graph, and remove their first elements in batches of disjoint pairs.

  \medskip

  \textbf{Time.}
  At every loop iteration, we add at~least one pair to the sequence and remove their first elements from the graph.
  Thus, the algorithm does terminate, and goes through the successive induced subgraphs of~$G$: $G_1, G_2, \ldots, G_p$ with $p \leqslant n$.
  Let $c := 1-\frac{1}{4d}$.

  \begin{claim}\label{clm:vertex-count-decr}
    For every $i \in [p-1]$, $n_{i+1} \leqslant c n_i$.
  \end{claim}
  \begin{proof}
    Indeed, if $n_i < 4d$, then $n_{i+1} \leqslant c n_i$ since $n_{i+1} \leqslant n_i - 1$ (we remove at least one vertex of $G_i$ to form $G_{i+1}$).
    If instead, $n_i \geqslant 4d$, we conclude because  $\max(1,\lfloor \frac{1}{2} \lfloor n_i/d \rfloor \rfloor)=\lfloor \frac{1}{2} \lfloor n_i/d \rfloor \rfloor \geqslant \frac{1}{4d} \cdot n_i$.
  \end{proof}

  By~\cref{clm:vertex-count-decr}, for every $i \in [p]$, $n_i \leqslant c^{i-1} n$.
  So the overall running time is
  \[\sum\limits_{i \in [p]} \Oh(d n^\alpha n^2_i) \leqslant \Oh(d n^\alpha) \cdot n^2 \sum\limits_{i \in [p]} c^{2i-2} \leqslant \frac{1}{1-c} \cdot \Oh_{\mathcal C}(n^{2+\alpha}) = \Oh_{\mathcal C}(n^{2+\alpha}),\]
  and we conclude.
\end{proof}

\Cref{thm:sdd-seq-gen} is then a~consequence of~\cref{thm:sd-degen-seq,thm:lnc-implies-bvsd}.
For classes of bounded versatile symmetric difference (exponent~0), we get the following corollary (which, by~\cref{thm:lnc-implies-bvsd}, implies~\cref{thm:sdd-seq}).

\begin{corollary}\label{thm:sd-degen-seq-cor}
  Let $\mathcal C$ be a~(hereditary) class of bounded versatile symmetric difference.
  There is an~$\Oh_{\mathcal C}(n^2)$-time algorithm that, given an~$n$-vertex graph of~$\mathcal C$, outputs an sd-degeneracy sequence of~$G$ of width~$\Oh_{\mathcal C}(1)$.  
\end{corollary}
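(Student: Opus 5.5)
This is the special case $\alpha = 0$ of \cref{thm:sd-degen-seq}, so the plan is to specialize that theorem and check that nothing degenerates at $\alpha=0$. By definition, bounded versatile symmetric difference coincides with versatile symmetric difference of exponent~$0$. Hence there is a constant $d := d_{\mathcal C} > 0$ such that every $n$-vertex $G \in \mathcal C$ has at least $\lfloor n/d \rfloor$ disjoint pairs $u,v$ with $\sd_G(u,v) \leqslant d n^0 = d$. Applying \cref{thm:sd-degen-seq} with $\alpha = 0$ then gives an $\Oh_{\mathcal C}(n^{2})$-time algorithm producing an sd-degeneracy sequence of width $\Oh_{\mathcal C}(n^0) = \Oh_{\mathcal C}(1)$.

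To make sure the specialization is sound, I would briefly retrace the algorithm with the constant threshold $d$. Each round calls \cref{cor:near-twin-listing} with parameter $d$ on the current graph $G_i$, which costs $\Oh(d n_i^2)$. It then greedily selects $\max(1,\lfloor \frac12\lfloor n_i/d\rfloor\rfloor)$ disjoint $d$-near-twin pairs. Since $\mathcal C$ is hereditary, $G_i \in \mathcal C$, so the versatility guarantee applies to $G_i$ with its own vertex count $n_i$. Removing the first element of each selected pair keeps every appended pair a $d$-near-twin pair at the moment its first vertex is deleted. This works because the pairs are disjoint, and deleting other vertices does not alter whether two fixed vertices are near-twins at the time each is recorded. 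More precisely, within a batch the symmetric difference can only shrink when other vertices are removed.

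For the running time, \cref{clm:vertex-count-decr} gives $n_{i+1} \leqslant (1-\frac1{4d}) n_i$, so the costs form a geometric series $\sum_i \Oh(d\, n_i^2) = \Oh(d^2 n^2) = \Oh_{\mathcal C}(n^2)$. The final phase handles fewer than $d$ vertices and finishes the sequence arbitrarily. Any two vertices in a graph with fewer than $d$ vertices have symmetric difference less than $d$, so this phase keeps the width $\Oh(d)$.

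I do not expect a real obstacle, since this is a direct instantiation of \cref{thm:sd-degen-seq}. The only points needing care are the two just handled: that the constant is measured in the current subgraph, which is where heredity is used, and that symmetric difference within a batch is monotone under deletions.
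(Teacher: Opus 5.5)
Your proposal is correct and follows the paper's route: the paper obtains this corollary directly as the $\alpha=0$ instance of \cref{thm:sd-degen-seq}, exactly as you do. Your retracing of the algorithm is consistent with that proof: heredity gives the guarantee on each $G_i$, $\sd$ can only decrease under deletions within a batch, and the costs sum geometrically to $\Oh(d^2 n^2)$. You also make explicit one detail the paper leaves implicit, namely that the arbitrary final phase on fewer than $d$ vertices keeps the width below $d$.
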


\section{SSSP in graphs given with an interval biclique partition}

In this section we prove \cref{thm:sssp-ibp}, which for convenience we recall below.

\thmsssp*
\newcommand{\IntvDS}{\textsc{Interval-Intersect}\xspace}

We begin by introducing a~data structure problem for reporting interval intersections, called \IntvDS, and we show how to solve this problem in total linear time.
In \IntvDS, we are given as input a~unary-encoded integer $N \geq 1$ (the universe size) and a~finite set $A$ of labels, where every label $a \in A$ is assigned an~integer interval $I_a \subseteq [N]$.
Initially, all labels are active.
The data structure is required to answer the following online queries: given an~integer interval $J \subseteq [N]$, report all active labels $a \in A$ with the corresponding intervals $I_a$ intersecting $J$, and deactivate all reported labels.

Subsequently, we will use the resulting data structure to show the linear-time algorithm for \Cref{thm:sssp-ibp}.
In \cref{ss:bootstrap}, we prove the following.
\begin{lemma}
  \label{lem:intv-ds-linear}
  There exists a~data structure for \IntvDS in the word RAM model of computation that processes an~instance with universe size $N$, set of labels $A$, and $q$~queries in $\Oh(N + |A| + q)$ time.
\end{lemma}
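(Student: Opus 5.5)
An interval $I_a=[\ell_a,r_a]$ intersects $J=[x,y]$ if and only if either $\ell_a\in J$, or $\ell_a<x\le r_a$. So I would split each query into two subproblems and answer each in amortized constant time per query plus constant time per reported label. Every label is reported at most once, which gives the $\Oh(|A|)$ term.

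\textbf{First subproblem: labels with left endpoint in $J$.} For each position $i\in[N]$, I bucket the active labels with $\ell_a=i$ into a list, using a counting sort in $\Oh(N+|A|)$ time. I then keep a union--find structure over $[N+1]$ that points each position to the next position $\ge i$ whose bucket still holds an active label. With it, a query walks from $x$ to the next nonempty bucket, empties that bucket (deactivating and reporting its labels), merges the bucket with its successor, and repeats until passing $y$. Labels deactivated by the other subproblem must also be removed from buckets. I handle this by storing buckets as doubly linked lists and deleting in $\Oh(1)$; a bucket that becomes empty is then unioned away. All unions and finds are on the universe $[N+1]$ with the fixed linear ``interval union--find'' structure of Gabow and Tarjan, which runs in $\Oh(1)$ amortized time per operation in the word RAM. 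This gives $\Oh(N+|A|+q)$ in total.

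\textbf{Second subproblem: labels with $\ell_a<x\le r_a$.} These are the active labels whose interval contains the point $x$. Among the active labels with $\ell_a<x$, I need to report all those with $r_a\ge x$.

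My first attempt is a segment tree over $[N]$ keyed by left endpoint, each leaf storing the maximum $r_a$ over its bucket. A query descends into the prefix $[1,x-1]$ only along subtrees whose maximum is at least $x$. Each descent ends in a report, so the cost is $\Oh(\log N)$ per report and per query, which is too slow.

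\textbf{Bootstrapping to linear time (the main obstacle).} I expect this to be the hard step, and the heading \cref{ss:bootstrap} suggests the intended route. I would split $[N]$ into blocks of size $B=\Theta(\log N)$.

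For intervals that span whole blocks, I would work at block granularity. The macro universe has size $N/B$, so the segment-tree solution there costs $\Oh((N/B)\log N)=\Oh(N)$ for preprocessing. Its cost per report is still logarithmic, though. To fix this, I would instead use a Cartesian-tree or RMQ-style structure on per-block maxima of $r_a$, with a linear-time RMQ as in~\cite{BenderF00}. Recursively, I take the maximum-$r$ position in the prefix; if it is at least $x$, I report its bucket and update. The updates are the difficulty, since RMQ structures are static. What saves the argument is that each label's $r$ only ever drops, when it is deactivated. So I would process bucket heads in decreasing order of $r$, presorted by counting sort: each bucket is sorted by decreasing $r_a$, and a query pops from the front while $r_a\ge x$. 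Only the maxima in the block-level structure need dynamic maintenance.

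Inside a block of size $\Oh(\log N)$, I would use word-level bitmasks, or small lookup tables precomputed in $o(N)$ time, so that in-block prefix-maximum queries and updates take constant time.

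If this dynamic decremental prefix-max structure resists a clean linear bound, my fallback is to handle it offline. The queries are online, but the only updates are deletions, so I would aim for a union--find over the positions sorted by $r$ to skip dead labels in amortized constant time.
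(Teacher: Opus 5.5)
Your decomposition is correct: $I_a\cap[x,y]\neq\emptyset$ iff $\ell_a\in[x,y]$ or $\ell_a<x\le r_a$. Your first subproblem is also sound, using buckets, doubly linked lists and the Gabow--Tarjan interval union--find.

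\textbf{The gap is the second subproblem.} It is a decremental stabbing problem: report and delete every active interval containing $x$. This is essentially the whole difficulty of the lemma, and none of the tools you list solves it in $\Oh(N+|A|+q)$ total time.
\begin{itemize}
\item The segment tree costs $\Theta(\log N)$ per query and per deletion. Blocking into pieces of length $\Theta(\log N)$ cannot remove an overhead of that kind. It shrinks only the universe, while essentially every one of the $q$ queries, and every deletion of a block-spanning label, still reaches the macro structure.
\item Linear-time RMQ is static. Your sentence ``only the maxima in the block-level structure need dynamic maintenance'' restates the problem rather than solving it. You would need prefix-maximum queries under decrements at $\Oh(1)$ amortized cost, and you give no mechanism for that.
\item The fallback also fails. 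In decreasing order of $r_a$, the prefix $\{a : r_a\ge x\}$ contains every active label with $\ell_a>y$. These are neither reported nor deleted by the query, so a union--find that skips only dead labels walks over them again at every query.
\item The in-block step is unjustified as stated. $\Theta(\log N)$ positions whose maxima, even capped to the block, take $\log\log N$ bits each give $\Theta(\log N\log\log N)$ bits of state. That is too much for one $\Theta(\log N)$-bit word or an $o(N)$-size table.
\end{itemize}

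\textbf{The missing idea} is a structure whose superlinear overhead depends \emph{only on the universe size}, i.e., one running in $\Oh(N\log N+|A|+q)$. Only such a structure is absorbed by blocking with block length $B=\lfloor\log N\rfloor$, since then $(N/B)\log(N/B)=\Oh(N)$.

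The paper obtains it from a DAG on all (not necessarily aligned) intervals of length $2^k$ in $[N]$. On the query side, each interval has arcs to its two halves; on the label side, the arcs are reversed. Every query or label interval is then the union of two nodes. Each query is a graph search that never re-enters a visited vertex. This is sound exactly because reported labels are deleted: everything reachable from a previously visited vertex has already been reported.

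The inter-block interactions go to this quasi-linear structure on $\lceil N/B\rceil$ points. These are $\mathsf{core}(I_a)$ against the blocks meeting $J$, and the fringe pieces of $I_a$ against $\mathsf{core}(J)$. The intra-block pieces are handled by recursing once more, down to universes of size $t=\Oh(\log\log N)$. There the whole state, namely the set of active intervals, has $\binom{t+1}{2}$ bits and fits in a word, and query transitions are tabulated in $o(N)$ time.

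Your split could be completed by plugging such a structure, with the same two-level blocking, into the second subproblem. As written, however, the key step is missing.
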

The purpose of encoding $N$ in unary is to ensure that every element of the universe $[N]$ fits within a~single RAM word; hence we can assume $N < 2^b$ where $b$ is the size of the machine word.

We show now to use \Cref{lem:intv-ds-linear} to solve SSSP in linear time in graphs with a~given interval biclique partition, namely, we prove \Cref{thm:sssp-ibp}.
\begin{proof}[Proof of \Cref{thm:sssp-ibp}]
  Recall that $[n]$ is the vertex set of the graph, and let the $i$th biclique in the edge partition ($i \in [b]$) have sides $A_i$ and $B_i$; each $A_i$ and $B_i$ is a~subinterval of $[n]$.

  In the beginning, we set up two instances of the linear-time data structure for \IntvDS, both with universe $[n]$:
  \newcommand{\Dgather}{D_{\mathrm{gather}}}
  \newcommand{\Dscatter}{D_{\mathrm{scatter}}}
  \begin{itemize}
    \item $\Dgather$, with labels from $[n]$, where the label $i \in [n]$ is assigned the interval $[i, i]$;
    \item $\Dscatter$, with labels $\bigcup_{i \in [b]} \{a_i, b_i\}$, with the interval $A_i$ assigned to $a_i$ and the interval $B_i$ assigned to $b_i$.
  \end{itemize}

  The active items in $\Dgather$ are the vertices undiscovered by the run of the breadth-first search (BFS); hence, we initially query $\Dgather$ with $[s, s]$, where $s$ is the source of the BFS, to deactivate the label $s$.
  On the other hand, the active items in $\Dscatter$ represent the sides of the bicliques that do not contain vertices visited by BFS.

  We perform the usual BFS in $G$, with the usual queue holding the vertices discovered by the search (initially only $s$) that are yet to be visited.
  When a~vertex $v \in [n]$ is visited by the search, we find the yet undiscovered neighbors of $v$ as follows.
  We first use $\Dscatter$ to enumerate the not yet processed sides of the bicliques by querying $\Dscatter$ for the interval $[v, v]$.
  Suppose $\Dscatter$ returns $a_i$ (resp., $b_i$) as one of the labels; then, we have that $v \in A_i$ (resp., $v \in B_i$).
  In this case, all vertices in $B_i$ (resp., $A_i$) should become discovered if not yet discovered.
  In order to ensure that, we query $\Dgather$ with the interval $B_i$ (resp., $A_i$), and enqueue all vertices returned as a~result of the query, marking $v$ as the parent of each of them.
  We repeat this process for each label reported by $\Dscatter$.

  The correctness of the algorithm is clear as it is simply a~breadth-first search with a~bespoke method for listing the yet undiscovered neighbors.
  To see the efficiency, observe that we issue at most $n$ queries to $\Dscatter$, and it reports at most $\Oh(b)$ labels throughout the entire run of SSSP.
  Thus, at most $\Oh(b)$ queries to $\Dgather$ are issued.
  Hence both data structures for \IntvDS process all respective queries in total $\Oh(n + b)$ time.
  Moreover, it is clear that outside of the queries, the SSSP implementation above takes time linear in the size of the input; the claim follows.
\end{proof}

\subsection{Interval intersections data structure}\label{ss:bootstrap}

It remains to prove \cref{lem:intv-ds-linear}.

Our approach resembles the \emph{tabulation technique} and the \emph{method of Four Russians}, commonly used in algorithm design in the word RAM model of computation; see e.g.\ \cite{ArlazarovDinicEtAl1970,GabowTarjan1985,BerkmanVishkin1994,BenderF00}.
Namely, we first show a~data structure with a worse complexity guarantee $\Oh(N \log N + |A| + q)$ (\Cref{lem:intv-ds-quasilinear}).
Using \Cref{lem:intv-ds-quasilinear}, we then show how to bootstrap a~linear-time data structure for universes of size $n \leq N$, given access to a~linear-time data structure for universes of size $\log n$ (\Cref{lem:intv-ds-bootstrapping}).
Next, in the word RAM model, we give a~linear-time data structure for universes of size $\sqrt{\log N}$, assuming $\Oh(N)$-time prior preprocessing (\Cref{lem:intv-ds-tiny-universe}).
Applying \Cref{lem:intv-ds-bootstrapping} twice to \Cref{lem:intv-ds-tiny-universe} implies \Cref{lem:intv-ds-linear}.
We implement this strategy below.

\begin{lemma}[Quasi-linear interval intersections]
  \label{lem:intv-ds-quasilinear}
  There exists a~data structure for \IntvDS that processes an~instance with universe size $N$, set of labels $A$, and $q$~queries in $\Oh(N \log N + |A| + q)$ time.
\end{lemma}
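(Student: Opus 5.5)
We build a segment tree over the universe $[N]$: a complete binary tree of depth $\lceil \log N\rceil$ whose leaves are the positions $1,\dots,N$ and whose nodes correspond to canonical dyadic intervals. A label $a$ intersects a query $J$ exactly when $I_a$ contains a point of $J$. Equivalently, either the left endpoint $\ell_a$ lies in $J$, or $\ell_a < \min J \le r_a$. We will handle these two cases with two separate structures, and each label will be reported once and then deactivated in both.

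\textbf{Case 1: left endpoint in $J$.} Bucket the labels by left endpoint, keeping at each position $x$ a list of the active labels with $\ell_a = x$. We also keep a ``next nonempty position'' structure over $[N]$ with deletions. For this we will use a union--find successor structure; path compression alone gives amortised $\Oh(\log N)$ per operation, which suffices. To answer a query, we walk through the nonempty positions in $J$, report every label in their buckets, and empty them. Each emptied position is deleted from the successor structure, so the total cost is $\Oh(N\log N + |A| + q)$.

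\textbf{Case 2: $\ell_a < \min J \le r_a$.} Here we only need, for the single point $p=\min J$, all active labels with $\ell_a<p\le r_a$. Store each label at the leaf $\ell_a$, and maintain in every node of the segment tree the maximum right endpoint over the active labels in its subtree, with each leaf holding its labels sorted by $r_a$ decreasingly. Sorting all labels costs $\Oh(N+|A|)$ by bucket sort. A query descends into the prefix $[1,p-1]$, which decomposes into $\Oh(\log N)$ canonical nodes. Inside each canonical node we recurse only into children whose maximum is at least $p$, so every visited leaf yields at least one label to report. At a leaf we pop labels from the front of its sorted list while $r_a\ge p$.

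\textbf{Cost accounting.} After a deletion we recompute the maxima along the affected leaf-to-root path in $\Oh(\log N)$. To keep within the budget, we charge this cost per emptied-prefix event per leaf rather than per label. Deactivating a label that was reported via Case 1 also requires removing it here; lazily marking it deleted and skipping it when it is encountered is fine, since each label is skipped at most once. The tree traversal itself costs $\Oh(\log N)$ per query plus $\Oh(\log N)$ per reported leaf. This gives $\Oh(N\log N + (|A|+q)\log N)$, which is not yet the claimed bound.

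\textbf{Main obstacle.} The difficulty is removing the $\log N$ factor on $|A|$ and $q$; only $N\log N$ may carry the logarithm. We will first compress per leaf: for each position $x$, only the label with the largest $r_a$ among the active ones with $\ell_a=x$ matters for the tree. So the tree stores one value per leaf, and there are at most $N$ leaf-value changes in which a leaf becomes empty. Changes where a leaf's top drops but stays nonempty are still a problem, so instead we make the tree values nonincreasing-only and update them lazily. A stale maximum can only cause a wasted descent that ends at a leaf where we discover that nothing is reportable; we then fix the leaf's value and walk up. To bound this, we charge the wasted $\Oh(\log N)$ descent to the label that was deleted and caused the staleness. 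That still costs $\log N$ per label, so the last resort is to restrict the segment tree to blocks of size $\log N$, giving $N/\log N$ leaves. Each block keeps its labels sorted by $r_a$ and is scanned directly. Tree maintenance then costs $\Oh(\log N)$ per block-level change, and each block sees at most $\log N \cdot(\ldots)$ such changes. Making this charging argument precise, so that tree updates total $\Oh(N\log N)$ while per-label and per-query work is $\Oh(1)$ amortised, is the step I expect to require the most care.
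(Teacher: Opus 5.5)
Your proposal does not prove the lemma. By your own accounting, the segment-tree construction costs $\mathcal{O}(N\log N+(|A|+q)\log N)$. The last paragraph, which is meant to remove the $\log N$ factor from the label and query terms, is only a sketch, and its decisive claim (``each block sees at most $\log N\cdot(\ldots)$ such changes'') is left open. That step is the entire content of the lemma.

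The repairs you outline also miss the real obstruction. In a tree storing subtree maxima of right endpoints, a stabbing query must walk the union of the root-to-leaf paths of all leaves (or blocks) it reports from, and the same nodes are re-entered every time their maxima drop. Concretely, fix $\log N$ left endpoints $x_1<\dots<x_{\log N}\le N/2$ spaced about $N/(2\log N)$ apart. Take the labels $[x_i,p]$ for all $i$ and all $p\in(N/2,N]$, and stab at $p=N,N-1,\dots$ in turn. Then $|A|=\Theta(N\log N)$. Every query reports one label from each of the $\log N$ far-apart leaves and lowers all their maxima, and the union of their root paths has $\Theta(\log^2 N)$ nodes. This gives $\Theta(N\log^2 N)$ work in total against a budget of $\mathcal{O}(N\log N)$. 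Lazy maxima do not help, since these descents are not wasted. Blocks of size $\log N$ shorten each path only by $\log\log N$.

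A smaller point: in Case~1, path compression alone with $\mathcal{O}(\log N)$ amortised cost per find yields $\mathcal{O}(q\log N)$ over the at least $q$ finds. You would need union by rank as well, or a linear-time interval union--find.

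The missing idea is a structure of size $\mathcal{O}(N\log N+|A|)$ with two properties. First, each label and each query attaches at only $\mathcal{O}(1)$ entry points. Second, it is explored monotonically, so nothing is processed twice over the whole run.

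The paper takes \emph{all} intervals of length $2^k$, $0\le k\le\lfloor\log N\rfloor$, not just the aligned dyadic ones (a sparse table rather than a segment tree), and builds a DAG $H$:
\begin{itemize}
\item each such $I$ has an out-copy $u_I$ with arcs to the out-copies of its two halves;
\item each such $I$ has an in-copy $v_I$ with arcs from the in-copies of its two halves;
\item there are arcs $u_{[x,x]}\to v_{[x,x]}$ for every $x$.
\end{itemize}
Every interval is the union of two possibly overlapping intervals of a common length $2^k$. So each label $a$ receives only two arcs, from the in-copies of two such intervals whose union is $I_a$. Each query $J$ starts a search from the two out-copies covering $J$, and $a$ is reachable if and only if $I_a\cap J\neq\emptyset$.

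A search never re-enters a vertex visited by an earlier query, because everything reachable from it has already been explored and reported. Thus every vertex and arc of $H$ is handled once in total, giving $\mathcal{O}(N\log N+|A|+q)$. This explore-once property is exactly what your max-augmented tree lacks.
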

\begin{proof}
  On initialization, we build an~auxiliary directed graph $H$ with $\Oh(N \log N + |A|)$ vertices and arcs, as follows.
  Let $L = \left\lfloor \log N \right\rfloor$.
  For integer $k \in [0, L]$, let the \emph{$k$th interval layer} $\mathcal{I}_k$ comprise all integer intervals $[i, i + 2^k - 1] \subseteq [N]$.
  For every such $k$ and every interval $I \in \mathcal{I}_k$, we introduce two vertices: an~outvertex $u_I$ and an~invertex $v_I$.
  For every $a \in A$, we also introduce a~vertex $a$.

  We construct arcs in $H$ as follows.
  Every $I \in \mathcal{I}_k$ with $k \geq 1$ is a~disjoint union of two integer intervals $I_1, I_2 \in \mathcal{I}_{k-1}$; add arcs $u_I \to u_{I_1}$, $u_I \to u_{I_2}$ (so that arcs between outvertices are directed toward lower-indexed layers) and arcs $v_{I_1} \to v_I$, $v_{I_2} \to v_I$ (so arcs between invertices are oriented toward higher-indexed layers).
  Also, for every $I = [i, i] \in \mathcal{I}_0$, add an~arc $u_I \to v_I$.
  It is easy to verify that now, for every $I \in \mathcal{I}_k$ and $x \in [N]$, there is a~path from $u_I$ to $u_{[x, x]}$ if and only if $x \in I$; and by the same token, there is a~path from $v_{[x, x]}$ to $v_I$ if and only if $x \in I$.
  It follows that for an~outvertex $u_J$ and an~invertex $v_I$, a~path from $u_J$ to $v_I$ exists if and only if the intervals $J$ and $I$ intersect.
  Finally, for every $a \in A$ with the corresponding interval $I_a$, write $I_a$ as a~(not necessarily disjoint) union of two intervals $I_1, I_2 \in \mathcal{I}_k$ for some $k \in [0, L]$ (where possibly $I_1 = I_2$).
  Then add arcs $v_{I_1} \to a$, $v_{I_2} \to a$.
  This guarantees that for any outvertex $u_J$, a~path from $u_J$ to $a$ in $H$ exists if and only if $J$ and $I_a$ intersect.
  See~\cref{fig:interval-ds}.

\begin{figure}[!ht]
\centering
\resizebox{250pt}{!}{
\begin{tikzpicture}[
    vertex/.style={
        circle,
        draw,
        inner sep=0pt,
        minimum size=3.2pt
    },
    marked/.style={
        circle,
        draw,
        very thick,
        fill=white,
        inner sep=0pt,
        minimum size=4.8pt
    },
    labelvertex/.style={
        circle,
        draw,
        inner sep=1.4pt,
        font=\scriptsize
    },
    arc/.style={
        -stealth,
        gray!75,
        line width=.35pt,
        shorten >=1pt,
        shorten <=1pt
    },
    patharc/.style={
        -stealth,
        very thick,
        shorten >=1pt,
        shorten <=1pt
    },
    note/.style={
        font=\scriptsize,
        inner sep=1pt
    }
]

\def\N{8}
\def\L{3}
\def\dy{1.15}
\def\shift{.55}
\def\ax{10.2}

\foreach \k in {0,...,\L} {
    \pgfmathtruncatemacro{\len}{2^\k}
    \pgfmathtruncatemacro{\last}{\N-\len+1}
    \pgfmathsetmacro{\yu}{\dy*(\k+\shift)}
    \pgfmathsetmacro{\yv}{-\dy*(\k+\shift)}

    \node[note, anchor=east] at (-.45,\yu) {$\mathcal I_{\k}$};
    \node[note, anchor=east] at (-.45,\yv) {$\mathcal I_{\k}$};

    \foreach \i in {1,...,\last} {
        \pgfmathsetmacro{\x}{\i+\len/2-1}
        \node[vertex] (u-\k-\i) at (\x,\yu) {};
        \node[vertex] (v-\k-\i) at (\x,\yv) {};
    }
}

\node[note] at (3.9,4.8) {outvertices $u_I$};
\node[note] at (3.9,-4.8) {invertices $v_I$};

\foreach \k in {1,...,\L} {
    \pgfmathtruncatemacro{\len}{2^\k}
    \pgfmathtruncatemacro{\half}{2^(\k-1)}
    \pgfmathtruncatemacro{\last}{\N-\len+1}
    \pgfmathtruncatemacro{\prev}{\k-1}

    \foreach \i in {1,...,\last} {
        \pgfmathtruncatemacro{\j}{\i+\half}

        \draw[arc] (u-\k-\i) -- (u-\prev-\i);
        \draw[arc] (u-\k-\i) -- (u-\prev-\j);

        \draw[arc] (v-\prev-\i) -- (v-\k-\i);
        \draw[arc] (v-\prev-\j) -- (v-\k-\i);
    }
}

\foreach \i in {1,...,\N} {
    \draw[arc] (u-0-\i) -- (v-0-\i);
    \node[note, anchor=south] at ($(u-0-\i)+(0,.18)$) {$\i$};
    \node[note, anchor=north] at ($(v-0-\i)+(0,-.18)$) {$\i$};
}

\node[labelvertex] (a) at (6.6,-4.25) {$a$};
\node[note] at (6.6,-4.6) {$I_a:=[3,7]=[3,6]\cup[4,7]$};

\draw[arc] (v-2-3) -- (a);
\draw[arc] (v-2-4) -- (a);

\draw[patharc] (u-2-2) -- (u-1-4);
\draw[patharc] (u-1-4) -- (u-0-4);
\draw[patharc] (u-0-4) -- (v-0-4);
\draw[patharc] (v-0-4) -- (v-1-3);
\draw[patharc] (v-1-3) -- (v-2-3);
\draw[patharc] (v-2-3) -- (a);

\foreach \p in {u-2-2,u-1-4,u-0-4,v-0-4,v-1-3,v-2-3} {
    \node[marked] at (\p) {};
}

\node[note, anchor=south] at ($(u-2-2)+(-.05,.12)$) {$u_{[2,5]}$};

\node[note, anchor=north] at ($(v-2-3)+(0,-.18)$) {$v_{[3,6]}$};
\node[note, anchor=north] at ($(v-2-4)+(0,-.18)$) {$v_{[4,7]}$};

\node[note, anchor=south] at ($(u-1-1)+(0,.18)$) {$u_{[1,2]}$};
\node[note, anchor=west]  at ($(u-0-8)+(.18,0)$) {$u_{[8,8]}$};

\end{tikzpicture}
}
\caption{The graph $H$ with $N=8$ and a single interval $I_a = [3,7]$.
  There is a~directed path from $u_{[2,5]}$ to $a$.
  Note that, on the contrary, one cannot reach $a$ from $u_{[1,2]}$ or from $u_{[8,8]}$.}
\label{fig:interval-ds}
\end{figure}

  \newcommand{\vis}{\mathsf{vis}}
  Apart from the graph $H$, we also initialize a~visited bit $\vis_v$ for each $v \in V(H)$, initially unset.
  Then consider a~query $J$.
  As above, write $J$ as a~union of two intervals $J_1, J_2 \in \mathcal{I}_k$ for some $k \in [0, L]$.
  Then perform two graph searches in $H$ from the outvertices $u_{J_1}, u_{J_2}$, visiting only previously unvisited vertices of $H$, and marking these vertices as visited.
  For each newly visited vertex $a \in A$, we report $a$ as the label corresponding to the interval $I_a$ intersecting $J$.

  The correctness of the data structure is clear.
  For the efficiency of the data structure, observe that $H$ has size $\Oh(N \log N + |A|)$ and can be constructed in time proportional to its size.
  Then, each of the $q$ queries spawns at most two graph searches in $H$ that only visit previously unvisited vertices.
  Therefore, all queries are processed in total time $\Oh(N \log N + |A| + q)$, as required.
\end{proof}

We then show how to use \Cref{lem:intv-ds-quasilinear} to bootstrap a~data structure for tiny universes to handle an~exponentially larger universe.
In the next lemma, $b$ is the size of the machine word.

\begin{lemma}[Bootstrapping for interval intersections]
  \label{lem:intv-ds-bootstrapping}
  Let $N_{\max} < 2^b$ be an~integer.
  Suppose there exists a~data structure for \IntvDS that processes instances with universe of size $N \leq \log N_{\max}$, set of labels $A$, and $q$ queries in $\Oh(N + |A| + q)$ time.
  Then there exists a~data structure for \IntvDS that processes instances with universe of size $N \leq N_{\max}$, the set of labels $A$, and $q$ queries in $\Oh(N + |A| + q)$ time.
\end{lemma}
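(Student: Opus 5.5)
Partition the universe $[N]$ into consecutive blocks of length $s := \max(1,\lfloor \log N_{\max}\rfloor)$ (the last block may be shorter), giving $M := \lceil N/s \rceil$ blocks $B_1,\ldots,B_M$. Since $M \log M \leq (N/s+1)\log N_{\max} = \Oh(N)$, the quasi-linear structure of \cref{lem:intv-ds-quasilinear} can be run on the block universe $[M]$ in linear time. Every label interval $I_a$, and every query $J$, splits into three pieces: a \emph{left partial} piece inside the block containing its left endpoint, a (possibly empty) run of \emph{full} blocks strictly between, and a \emph{right partial} piece inside the block containing its right endpoint. If the interval lies inside a single block, it is a single partial piece. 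The plan is to maintain one \emph{global} instance on $[M]$ together with, for each block $B_j$, one \emph{local} instance of the assumed small-universe data structure on a universe of size $|B_j| \leq \log N_{\max}$.

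\textbf{Label registration.} For a label $a$ whose interval spans at least two blocks, I would register $a$ globally with the block interval $[j_1,j_2]$ of all blocks it touches. If $I_a$ lies inside a single block $B_j$, I would register $a$ locally in the instance of $B_j$ with $I_a$ shifted into $[|B_j|]$. This is legitimate for multi-block labels: two intervals intersect if and only if either their block ranges overlap in a block where both contain positions, or one is contained within a block. Here I need care, since overlapping block ranges do not imply intersection. The case I would handle specially is when the block ranges meet only in a single boundary block. To avoid this issue, I would register a multi-block label $a$ globally only by its \emph{full-block} interval $[j_1+1,j_2-1]$ (if nonempty). I would additionally register its two partial pieces as local labels in blocks $B_{j_1}$ and $B_{j_2}$, as copies $a^{L}, a^{R}$.

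\textbf{Queries.} A query $J$ is decomposed the same way. I would query the global instance with the block range of \emph{all} blocks touched by $J$ (including its partial blocks). This is correct because a label's full block is entirely contained in $I_a$, so touching it means intersecting. I would also query the local instances of $J$'s (at most two) partial blocks with the shifted partial pieces. In addition, every \emph{full} block $B_j$ covered by $J$ must report all its local labels. For this I would query the local instance of $B_j$ with its whole range, but only the first time such a full-cover query hits $B_j$: after that, all its local labels are deactivated. I would keep a second global-level instance on $[M]$ whose labels are the blocks themselves (interval $[j,j]$), queried with $J$'s full-block range, to enumerate newly fully-covered blocks in amortized linear time. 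Each reported copy ($a$, $a^L$, or $a^R$) causes $a$ to be reported once; I would keep a deactivation bit per original label and suppress duplicates. Stale copies remaining active in other structures cost only $\Oh(1)$ each over the whole run, so they are fine.

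\textbf{Accounting and main obstacle.} Each query spawns $\Oh(1)$ queries to the structures, and each label produces at most $3$ copies. The global instances take $\Oh(M\log M + |A| + q) = \Oh(N + |A| + q)$ time. Summed over blocks, the local instances take $\sum_j \Oh(|B_j| + |A_j| + q_j) = \Oh(N + |A| + q)$, using that each full-block sweep is charged to a distinct block. Computing block indices and offsets is constant-time arithmetic in word RAM. The main obstacle I expect is the correctness of the case analysis: I must make sure every intersecting pair $(J, I_a)$ is detected through some structure. This splits into three cases: $J$ touches a full block of $I_a$; $J$ fully covers a block containing a partial piece of $I_a$; or both are partial in the same block. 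I would verify this by noting that any common position $x$ lies in a block $B_j$ that is full or partial for each of the two intervals, and that every combination is routed to one of the structures above.
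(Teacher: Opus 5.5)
Your proposal is correct and is essentially the paper's construction. The shared ingredients are logarithmic-length blocks, a quasi-linear \textsc{Interval-Intersect} instance over block indices that holds each label's fully covered blocks and is queried with all blocks touched by $J$, per-block small-universe instances for the partial pieces, and the same three-case correctness argument. The one deviation is how you handle a partial piece of $I_a$ lying in a block fully covered by $J$: you sweep each local instance once with its whole range, detecting newly covered blocks via a block-level instance, whereas the paper registers every fringe piece as a point label $[i,i]$ in a second quasi-linear structure $D_{\mathrm{fringe}}$ queried with the core block range of $J$; both are linear.

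One small slip: $(N/s+1)\log N_{\max}$ need not be $\Oh(N)$ when $N \ll \log N_{\max}$. The bound $M\log M=\Oh(N)$ still holds, since $M=1$ if $N<s$, and otherwise $M\le 2N/s$ and $\log M\le \log N_{\max}\le 2s$.
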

\begin{proof}
  We construct the promised data structure as follows.
  Let $\ell = \max(1, \left\lfloor \log N \right\rfloor)$ and partition $[N]$ into $m = \left\lceil \frac{N}{\ell} \right\rceil$ blocks $B_1, \ldots, B_m$ of length $\ell$; the last block is shorter if needed.
  For an~interval $I \subseteq [N]$, define:
  \newcommand{\inter}{\mathsf{inter}}
  \newcommand{\core}{\mathsf{core}}
  \newcommand{\fringe}{\mathsf{fringe}}
  \newcommand{\bactive}{\mathsf{active}}
  \begin{itemize}
    \item $\inter(I) \subseteq [m]$ to be the (nonempty) interval of indices of blocks that intersect $I$;
    \item $\core(I) \subseteq [m]$ to be the (possibly empty) interval of indices of blocks that are fully contained within $I$;
    \item $\fringe(I)$ to be the collection (of size at most $2$) of nonempty intervals $I \cap B_i$ over all blocks $i \in \inter(I) \setminus \core(I)$.
  \end{itemize}
  See \cref{fig:inter-core-fringe} for an illustration.
  \newcommand{\rel}{\mathsf{rel}}
  All of the above can be computed from $I$ in constant time.
  Also, whenever $I$ is a~subinterval of $B_i$, let $\rel_i(I) \subseteq [\ell]$ be the position of the interval $I$ relative to $B_i$, i.e., the interval $I$ with both endpoints shifted by $-(i - 1)\ell$.

\begin{figure}[!ht]
\centering
\begin{tikzpicture}[
    corefill/.style={fill=gray!18},
    fringefill/.style={fill=gray!45},
    block/.style={draw, line width=.35pt},
    interval/.style={line width=.9pt},
    brace/.style={
        decorate,
        decoration={brace, mirror, amplitude=4pt},
        thick
    },
    note/.style={
        font=\scriptsize,
        inner sep=1pt
    }
]

\def\nblocks{8}
\def\s{1.35}
\def\t{6.65}
\def\h{1.55}

\fill[fringefill] (\s,0) rectangle (2,1);
\fill[corefill]   (2,0) rectangle (6,1);
\fill[fringefill] (6,0) rectangle (\t,1);

\foreach \i in {1,...,\nblocks} {
    \pgfmathsetmacro{\x}{\i-1}
    \draw[block] (\x,0) rectangle ++(1,1);
    \node[note] at (\x+.5,.5) {$B_{\i}$};
}

\draw[interval] (\s,1.13) -- (\t,1.13);
\draw[interval] (\s,1.03) -- (\s,1.23);
\draw[interval] (\t,1.03) -- (\t,1.23);
\node[note, anchor=south] at ({(\s+\t)/2},1.18) {$I$};

\draw[interval] (\s,\h) -- (2,\h);
\draw[interval] (\s,\h-.10) -- (\s,\h+.10);
\draw[interval] (2,\h-.10) -- (2,\h+.10);
\node[note, anchor=south] at ({(\s+2)/2},\h+.05) {$I_{\mathrm L}$};

\draw[interval] (6,\h) -- (\t,\h);
\draw[interval] (6,\h-.10) -- (6,\h+.10);
\draw[interval] (\t,\h-.10) -- (\t,\h+.10);
\node[note, anchor=south] at ({(6+\t)/2},\h+.05) {$I_{\mathrm R}$};

\draw[brace] (1,-.1) -- (7,-.1)
    node[midway, yshift=-.34cm, note]
    {$\inter(I)=[2,7]$};

\draw[brace] (2,-.7) -- (6,-.7)
    node[midway, yshift=-.34cm, note]
    {$\core(I)=[3,6]$};

\node[note, align=center] at (4,2)
    {$\fringe(I)=\{I_{\mathrm L}, I_{\mathrm R}\}$};

\end{tikzpicture}
\caption{The objects $\inter(I)$, $\core(I)$, and $\fringe(I)$ for $m=8$ and some interval $I$.}
\label{fig:inter-core-fringe}
\end{figure}

  We instantiate three types of auxiliary data structures for \IntvDS.
  \begin{itemize}
    \item $D_{\rm core}$ holds universe $[m]$.
      For every $a \in A$ with nonempty $\core(I_a)$, it holds the interval $\core(I_a)$ with label $a$.
      $D_{\rm core}$ is implemented via \Cref{lem:intv-ds-quasilinear}.
    \item $D_{\rm fringe}$ holds universe $[m]$.
      For every $a \in A$ and every fringe interval $I' \in \fringe(I_a)$ within an $i$th block, it holds the interval $[i, i]$ with label $a$.\footnote{Technically, this may cause two intervals to be assigned to the same label $a$. Formally, in this case, we create two labels $a_1, a_2$ and assign them to the two intervals. Then, whenever $a_i$ is reported by $D_{\rm fringe}$, we translate the label $a_i$ to $a$ on the fly.} $D_{\rm fringe}$ is also implemented via \Cref{lem:intv-ds-quasilinear}.
    \item For each block $i \in [m]$, the data structure $D_i$ carries universe $[\ell]$.
      For every $a \in A$ and every fringe interval $I' \in \fringe(I_a)$ within the $i$th block, it holds the interval $\rel_i(I')$ labeled $a$.
      Each $D_i$ is implemented via the assumed linear-time data structure for small universes.
  \end{itemize}
  Moreover, for each $a \in A$, we introduce a~bit $\bactive_a$, initially set, representing whether the label $a$ is still active.
  
  Given a~query interval $J \subseteq [N]$, we perform the following queries to auxiliary data structures and gather all the returned labels.
  \begin{itemize}
    \item Query $D_{\rm core}$ with $\inter(J)$.
    \item Query $D_{\rm fringe}$ with $\core(J)$ as long as $\core(J)$ is nonempty.
    \item For each $J' \in \fringe(J)$, let $J'$ be contained within the $j$th block. Then query $D_j$ with $\rel_j(J')$.
  \end{itemize}

  For each returned label $a$, if $a$ is still active, we deactivate $a$ (i.e., unset $\bactive_a$) and add $a$ to the result label set.
  This concludes the description of the data structure; we now argue its correctness and efficiency.

  \subparagraph{Correctness.}
  Let $I_a$ be an~input interval labeled $a$ and $J$ be a~query interval.
  First, suppose $a$ is reported by the query.
  Then, $a$ was still active before the query.
  If $D_{\rm core}$ reported~$a$, then $\core(I_a) \cap \inter(J) \neq \emptyset$, which easily implies $I_a \cap J \neq \emptyset$.
  Next, if $D_{\rm fringe}$ reported $a$, then a~fringe subinterval $I' \in \fringe(I_a)$ resides in a~block belonging to $\core(J)$; hence $I' \cap J \neq \emptyset$ and so $I_a \cap J \neq \emptyset$.
  Finally, if $a$ is returned by $D_i$, then there exist $I' \in \fringe(I_a)$ and $J' \in \fringe(J)$, both within the $i$th block, so that $I' \cap J' \neq \emptyset$; hence trivially $I_a \cap J \neq \emptyset$.

  Conversely, suppose $I_a \cap J \neq \emptyset$ and $a$ is active.
  Let $x$ be any element of the intersection and suppose $x$ is within the $i$th block.
  First, if $i \in \core(I_a)$, then $D_{\rm core}$ definitely reports $a$ since $J$ intersects the $i$th block.
  Otherwise, $x \in I'$ for some $I' \in \fringe(I_a)$.
  If $i \in \core(J)$, then $D_{\rm fringe}$ reports $a$ as the data structure carries the interval $[i, i]$ labeled $a$.
  In the opposite case, $x \in J'$ for some $J' \in \fringe(J)$ and so $I' \cap J' \neq \emptyset$.
  Therefore, $D_i$ returns $a$.
  We conclude that an~active label $a$ is output if and only if $I_a \cap J \neq \emptyset$, which implies the correctness of our implementation.
  
  \subparagraph{Time.}
  Suppose $q$ queries are processed by our data structure.
  Both $D_{\rm core}$ and $D_{\rm fringe}$ maintain a~universe of size $m$, hold at most $\Oh(|A|)$ labels, and accept at most $q$ queries.
  Thus both of these auxiliary data structures initialize and process the queries in total time $\Oh(m \log m + |A| + q) = \Oh(N + |A| + q)$.
  Then, the data structures $D_i$ have total universe size $m \ell = \Oh(N)$, carry at most $\Oh(|A|)$ labels in total, and accept $\Oh(q)$ queries in total.
  By our assumption, they all process all the queries in total time $\Oh(N + |A| + q)$.
  Therefore, the time complexity of our constructed data structure is also linear.
\end{proof}
Note that in \Cref{lem:intv-ds-bootstrapping}, the constant hidden in the $\Oh$ notation of the resulting data structure is greater than the constant hidden in the prerequisite data structure.

It remains to give a~linear-time data structure for sufficiently small universe sizes.

\begin{lemma}[Efficient interval intersections for tiny universes]
  \label{lem:intv-ds-tiny-universe}
  Let $b$ be the size of the machine word and $\ell \leq \sqrt{b}$.
  There exists a~data structure for any instance of \IntvDS with the universe size at most $\ell$ that takes a~finite set $A$ of labels and processes $q$ queries in total time $\Oh(\ell + |A| + q)$.
  The data structure requires a~global table, which can be computed given only $\ell$ in time $\Oh(2^{\ell^2})$.
\end{lemma}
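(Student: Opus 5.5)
The plan is to exploit that, over a universe of size at most $\ell$, there are only $\Oh(\ell^2)$ distinct integer intervals, so the entire state of the data structure fits in a single machine word of $\ell^2 \le b$ bits. We index the intervals $[x,y]\subseteq[\ell]$ by numbers in $[\ell^2]$ (for instance $(x-1)\ell + y$). The data structure groups the labels by their intervals. For each interval index $t$ it keeps a list $L_t$ of the still-active labels with interval $t$. It also keeps a bitmask $S$ of $\ell^2$ bits whose bit $t$ is set exactly when $L_t$ is nonempty. Initialization then takes $\Oh(\ell^2 + |A|)$ time. To get $\Oh(\ell+|A|)$ instead, we allocate the lists lazily: we store them in a hash-free array of size $\ell^2$, but only touch entries that receive labels. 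Alternatively, we note that the bound is charged per instance and that the global table already has size $2^{\ell^2}$. On reflection, the cleaner fix is to keep the per-instance array of list heads inside the global preprocessing as a reusable buffer, which we reset only at touched entries. Simpler still, we store the list heads in a dictionary indexed by $t$, realized by an array of size $\ell^2$ that is allocated once and reused, and clean it after use by walking $A$.

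Next comes the global table, which is computed only from $\ell$. For each query interval $J$ (that is, $\ell^2$ choices) we precompute a mask $M_J$ of the intervals intersecting $J$. This needs $\Oh(\ell^4)$ time, which is within $\Oh(2^{\ell^2})$. We also need a table answering the lowest set bit of an $\ell^2$-bit word. Since the word RAM may not provide this operation, we tabulate it for all $2^{\ell^2}$ words, so it fits the stated preprocessing budget. The word RAM gives constant-time AND and XOR on words of at most $b$ bits, and $\ell^2 \le b$.

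A query $J$ is answered as follows. We compute $R = S \wedge M_J$. While $R \ne 0$, we take the lowest set bit $t$ of $R$ by table lookup. We report and deactivate every label in $L_t$, empty the list, and clear bit $t$ in both $R$ and $S$. Each label is reported at most once, since its list is emptied. Each iteration of the loop reports at least one label, because $L_t$ is nonempty whenever bit $t$ of $S$ is set. So a query costs $\Oh(1)$ plus the number of labels it reports, and $q$ queries cost $\Oh(q+|A|)$ in total. Correctness follows because a label is active if and only if it is still in its list, and its interval meets $J$ if and only if its index bit is in $M_J$.

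The main obstacle I expect is bookkeeping rather than anything conceptual: achieving an $\Oh(\ell+|A|)$ initialization without paying $\ell^2$ per instance. It also requires carefully justifying that the word operations and the table lookups are constant time under $\ell\le\sqrt b$. The bucketing by $t$ can be done with a shared scratch array that is cleaned after each instance, or we can observe that summing $\ell^2$ over the blocks in \cref{lem:intv-ds-bootstrapping} is acceptable only if it is handled this way. I would go with the shared-scratch-array approach.
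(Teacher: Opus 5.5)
Your overall design is essentially the paper's: one word holding the set of active intervals, buckets of labels per interval, precomputed masks $M_J$, and a tabulated lowest-set-bit operation. The paper's table $T[S][J]$ is just $(S\wedge M_J, S\wedge\neg M_J)$ tabulated. Your lowest-set-bit table also usefully spells out how the intervals of $S\wedge M_J$ are enumerated in $\Oh(1+t)$ time, which the paper leaves implicit.

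The gap is the point you flag yourself and do not resolve: initializing an instance in $\Oh(\ell+|A|)$ rather than $\Oh(\ell^2+|A|)$ time.
\begin{itemize}
\item The fix you settle on fails. That fix is a single shared array of list heads indexed by $t\in[\ell^2]$, cleaned after use. But your query procedure reads $L_t$ by its index $t$ during the whole lifetime of the instance. In the intended use, all the structures $D_i$ of \cref{lem:intv-ds-bootstrapping} are built up front and then queried in interleaved order, so they cannot share one buffer.
\item A private array of size $\ell^2$ per instance is not an acceptable fallback either. The proof of \cref{lem:intv-ds-linear} creates $\Theta(N/\log\log N)$ tiny instances with universe size $\Theta(\log\log N)$, so this can cost $\Theta(N\log\log N)$ overall.
\item Your first idea was an uninitialized array, read at index $t$ only when bit $t$ of $S$ is set. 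It never reads garbage, but it relies on constant-time allocation of uninitialized memory and uses $\Theta(\ell^2)$ space per instance. You dropped it without justifying that model assumption.
\end{itemize}

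The missing idea is to index buckets not by $t$ but by the rank of $t$ among the set bits of the initial state $S_{\mathrm{init}}$. Add to the global preprocessing a table giving the number of set bits of every $\ell^2$-bit word. It can be built in $\Oh(2^{\ell^2})$ time by a recurrence on $\lfloor w/2\rfloor$. An instance then works as follows:
\begin{itemize}
\item it stores $S_{\mathrm{init}}$ and only $|S_{\mathrm{init}}|\le|A|$ buckets;
\item it fills these buckets in $\Oh(|A|)$ time;
\item a query that reaches bit $t$ of $S\wedge M_J$ goes to the bucket whose index is one plus the number of set bits of $S_{\mathrm{init}}$ below position $t$.
\end{itemize}
This is exactly the role of the bijection $F[S][I]$ between $S$ and $[|S|]$ in the paper's proof. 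The paper indexes only the $\binom{\ell+1}{2}$ genuine intervals, so it can afford a two-dimensional table there. With this rank step added, the rest of your argument goes through.
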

\begin{proof}
  Fix $\ell$ as the universe size; if the universe size of an~instance is smaller than $\ell$, we can artificially increase it to $\ell$.
  Define the \emph{state} of an~instance of \IntvDS to be the set of \emph{active} integer intervals of $[\ell]$: the intervals assigned to the active labels of the instance.
  Note that there are exactly $2^{\binom{\ell + 1}{2}}$ different states since there are $\binom{\ell + 1}{2}$ integer subintervals of $[\ell]$.
  We represent states in memory as bitmasks of length $\binom{\ell + 1}{2}$; this is permitted in our computation model as $b \geq \binom{\ell + 1}{2}$.

  In the preprocessing stage, we initialize a~global two-dimensional table $T$ as follows.
  For a~state $S$ and an~interval $J \subseteq [\ell]$, we define $T[S][J] = (S_{\rm in}, S_{\rm out})$, where $S_{\rm in}$ is the set of intervals of $S$ intersecting $J$, and $S_{\rm out}$ is the remaining intervals of $S$.
  Observe that when an~instance of \IntvDS is in state $S$, then performing a~query $J$ on it transitions it to state $S_{\rm out}$ and returns all labels with the corresponding intervals in $S_{\rm in}$.
  Let also, for a~state $S$ and an~interval $I \in S$, the value $F[S][I]$ be an~integer such that, for each $S$, the mapping from intervals $I \in S$ to the values $F[S][I]$ is a~bijection between $S$ and $[|S|]$.
  We also explicitly compute an~$\Oh(\ell^2)$-sized array $M$ mapping integer subintervals of $[\ell]$ to their bit index within the state.
  The arrays can be constructed in time $2^{\binom{\ell + 1}{2}} \ell^{\Oh(1)} \in \Oh(2^{\ell^2})$.

  \newcommand{\Sinit}{S_{\mathrm{init}}}
  We now give an~implementation of the data structure.
  On initialization, we compute the initial state $\Sinit$ from $\ell$ and~$A$ using $M$.
  Then create buckets $B_1, \ldots, B_{|\Sinit|}$ of labels from~$A$, and for each label $a \in A$, insert $a$---corresponding to an~interval $I_a \in \Sinit$---into bucket $B_{F[\Sinit][I_a]}$.
  This can be done in total $\Oh(\ell + |A|)$ time.
  Let also $S$ track the current state of the data structure; initially $S = S_{\rm init}$.
  Then, for a~given query $J$, let $(S_{\rm in}, S_{\rm out}) = T[S][J]$.
  We update the state of the data structure to $S_{\rm out}$ and return all labels of the intervals in~$S_{\rm in}$ (i.e., all labels in the buckets $B_{F[\Sinit][I]}$ for $I \in S_{\rm in}$).
  This can clearly be done in time $\Oh(1 + t)$, where $t$ denotes the total number of returned labels.
  Therefore, all queries are processed in total time $\Oh(\ell + |A| + q)$.
\end{proof}

Given \Cref{lem:intv-ds-bootstrapping,lem:intv-ds-tiny-universe}, the proof of \Cref{lem:intv-ds-linear} is direct.

\begin{proof}[Proof of \Cref{lem:intv-ds-linear}]
  Fix the universe size $N < 2^b$, and let $\ell = \max(1, \left\lceil \log \log N \right\rceil)$.
  By \Cref{lem:intv-ds-tiny-universe}, after preprocessing in time $\Oh(2^{\ell^2}) = o(N)$, we can construct a~data structure for \IntvDS processing instances with a~universe of size at most $\ell$, the set of labels~$A$, and $q$ queries in time $\Oh(\ell + |A| + q)$.
  Applying \Cref{lem:intv-ds-bootstrapping} the first time with $N_{\max} = 2^\ell$, we get a~data structure for universes of size $n \leq 2^\ell$ with total time $\Oh(n + |A| + q)$.
  Applying \Cref{lem:intv-ds-bootstrapping} once again with $N_{\max} = N$, we end with the required data structure.
\end{proof}

\section{Wrap-up of the algorithmic corollaries}

The proof of~\cref{thm:lnc-apsp} now simply wraps the results of previous sections together.

\thmlncapsp*
\begin{proof}
  By~\cref{thm:lnc-implies-bvsd}, $\mathcal C$ has bounded versatile symmetric difference.
  Thus, by~\cref{thm:sd-degen-seq-cor}, we compute an sd-degeneracy sequence of~$G$ of constant width in time $\Oh(n^2)$.
  By \cref{lem:sdd-seq-to-ibp}, we get an interval biclique partition of~$G$ with $\Oh(n)$ bicliques in further $\Oh(n \log n + \abs{E(G)})=\Oh(n^2)$ time. 
  We finally invoke~\cref{thm:sssp-ibp} from every vertex as~source, and conclude.
\end{proof}

\thmlncmatmult*
\begin{proof}
  As in the proof of~\cref{thm:lnc-apsp}, we obtain an interval biclique partition of~$G$ with $\Oh(n)$ bicliques in $\Oh(n^2)$ time.
  We conclude with~\cref{lem:matmult-ibp}.
\end{proof}

\section{Finding an sd-degeneracy sequence in randomized linear time}\label{sec:sd-deg-rand}\label{sec:randomized-algorithm}

In this section, we prove \cref{thm:LV-algo-gen}: we show how to compute an sd-degeneracy sequence of width $\Oh(n^{1-1/\rho})$ in expected linear time in graphs of VC density~$\rho$.
In particular, this yields an expected linear-time algorithm that outputs sd-degeneracy sequences of constant width in classes of linear neighborhood complexity (\cref{thm:LV-algo}).
The proof relies on a~result in learning theory and VC-combinatorics related to Haussler's packing lemma (\cref{thm:Haussler}).

A~\emph{set system} on a~domain $D$ is a~set $\cal F$ of subsets of $D$.
For $A\subseteq D$, denote \[\mathcal F\res{A} := \{F \cap A : F \in \mathcal F\}.\]
The \emph{shatter function} of a~set system $\cal F$ is the function $\pi_{\cal F} : \Nn \to \Nn$ defined by \[\pi_{\cal F}(m) := \max\{|\mathcal F\res{A}| : A \subseteq D, |A| \le m\}.\]
The \emph{VC dimension} of a set system $\cal F$ is the largest integer $d$ such that there exists a~subset $A \subseteq D$ of size $d$ with $\mathcal F\res{A} = 2^{A}$.

Given a finite domain $X$, an \emph{$s$-sample} is a~tuple $S=(x_1,\ldots,x_s)\in X^s$ of~$s$ elements of~$X$.
We will consider $s$-samples of~$X$ drawn uniformly at random, that is, each element $x_i$ of~$S$ is drawn independently and uniformly at random from~$X$.
(Note that an $s$-sample can contain the same element multiple times.)

Given a~set $F$ and an $s$-sample $S = (x_1, \dots, x_s)$, denote $F\cap S:= F\cap \set{x_1, \dots, x_s}$.
For a set system $\cal F$ on a domain $X$ and integer $s\ge 1$, denote
\[
  \cal F\res s := \{(S, F\cap S) \mid S \in X^s, F \in \mathcal F\},
\]
that is, the set of all restrictions of sets in $\cal F$ to all possible samples of size $s$.
The next lemma originates in \cite[Lemma 9]{KUPAVSKII202022} as part of a proof of Haussler's packing lemma.
It is in the spirit of the classical PAC learning model, and states that sets from a set system of bounded VC dimension can be learned from a small random sample.

\begin{lemma}[\cite{KUPAVSKII202022}]\label{thm:learning}
  Let $\mathcal F$ be a set system on a finite domain $X$ of VC dimension at~most~$d$.
  For any $\eps > 0$ and integer $s \ge 1$, there exists a~map $f : \mathcal F\res s \to 2^X$ such that for every $F \in \cal F$,
  \[
  \underset{S \sim X^s}{\mathbb{P}}\left(\abs{F\,\triangle\,f(S, F\cap S)} < \frac{d}{\eps s} \cdot \abs{X}\right) \ge 1 - \eps,
  \]
  where $S \sim X^s$ denotes that $S$ is an $s$-sample of~$X$, drawn uniformly at random.
\end{lemma}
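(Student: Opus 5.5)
The plan is the classical one-inclusion-graph (leave-one-out) prediction strategy of Haussler, Littlestone and Warmuth, followed by Markov's inequality. We first build a deterministic predictor that, from a labelled $s$-sample, guesses membership in $F$ of any further point $x\in X$. We then set $f(S,F\cap S)$ to be the set of points predicted to lie in $F$. Finally we show that the expected fraction of $X$ on which the prediction is wrong is at most $d/(s+1)$.

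\textbf{The predictor.} For a finite set $Y\subseteq X$, let $\mathcal G_Y$ be the \emph{one-inclusion graph} of $\mathcal F\res{Y}$. Its vertices are the traces $F\cap Y$, and two traces are adjacent when they differ in exactly one element of $Y$. Since $\mathcal F\res{Y}$ has VC dimension at most $d$, the standard shifting argument of Haussler--Littlestone--Warmuth shows that every subgraph of $\mathcal G_Y$ has at most $d$ times as many edges as vertices. By Hakimi's orientation theorem, $\mathcal G_Y$ therefore admits an orientation of maximum outdegree at most $d$. We fix one such orientation $\vec{\mathcal G}_Y$ canonically for every $Y$; it depends only on the \emph{set} $Y$, not on any ordering.

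Now fix a sample $S=(x_1,\dots,x_s)$, a trace $T=F\cap S$, and a point $x\in X$, and put $Y=\{x_1,\dots,x_s\}\cup\{x\}$.
\begin{itemize}
\item If $x$ already occurs in $S$, predict ``$x\in F$'' iff $x\in T$; this is always correct.
\item Otherwise there are at most two traces in $\mathcal F\res{Y}$ extending $T$, namely $T$ and $T\cup\{x\}$. If only one of them exists, predict according to it.
\item If both exist, they form an edge of $\mathcal G_Y$; predict according to the endpoint this edge points to in $\vec{\mathcal G}_Y$.
\end{itemize}
Let $f(S,T)$ be the set of $x$ predicted to be in $F$. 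This depends only on $(S,F\cap S)$, so $f$ is a well-defined map on $\mathcal F\res s$.

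\textbf{Bounding the expected error.} Fix $F\in\mathcal F$. We have $\abs{F\triangle f(S,F\cap S)}/\abs X=\mathbb P_{x}(\text{mistake on }x)$ for $x$ uniform in $X$. Hence
\[
\mathbb E_S\bigl[\abs{F\triangle f(S,F\cap S)}\bigr]/\abs X
\]
is the probability of a mistake on the last point of a uniform $(s+1)$-sample $(x_1,\dots,x_{s+1})$.

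Condition on this $(s+1)$-sample and put $Y=\{x_1,\dots,x_{s+1}\}$. Because the orientation depends only on $Y$, the prediction for the left-out coordinate is the same regardless of which position is held out. So by exchangeability the error probability equals the average, over the $s+1$ positions $i$, of the indicator that the prediction for $x_i$ from the other $s$ points is wrong.

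A mistake at position $i$ requires that $x_i$ is not repeated elsewhere in the sample. It also requires that the edge of $\mathcal G_Y$ flipping $x_i$ at the vertex $F\cap Y$ exists and points \emph{away} from $F\cap Y$. Distinct such $i$ give distinct out-edges of the vertex $F\cap Y$, so there are at most $d$ mistaken positions. The conditional error is therefore at most $d/(s+1)$, and taking expectations,
\[
\mathbb E_S\bigl[\abs{F\triangle f(S,F\cap S)}\bigr]\le \frac{d}{s+1}\abs X .
\]

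\textbf{Conclusion.} By Markov's inequality,
\[
\mathbb P_S\Bigl(\abs{F\triangle f(S,F\cap S)}\ge \tfrac{d}{\eps s}\abs X\Bigr)\le \frac{d/(s+1)}{d/(\eps s)}=\frac{\eps s}{s+1}<\eps ,
\]
which is the claimed bound $1-\eps$. If $d=0$, the system $\mathcal F$ consists of a single set, and $f$ can simply output it.

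\textbf{Main obstacle.} The delicate step is the density bound for one-inclusion graphs, namely that edges are at most $d$ times vertices in every subgraph. It is proved by repeatedly shifting (compressing) $\mathcal F\res Y$ towards a down-closed family. Shifting never decreases the number of edges and never increases the VC dimension. For a down-closed family of VC dimension at most $d$, every set has size at most $d$, and counting each edge at its larger endpoint gives the bound. I would cite this from Haussler--Littlestone--Warmuth rather than reprove it.

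The other point needing care is symmetry. The orientation must be chosen canonically as a function of the set $Y$ alone, and samples with repeated points must be handled as above, so that the leave-one-out exchangeability argument is valid.
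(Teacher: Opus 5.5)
Your argument is correct. The paper gives no proof of this lemma and only cites Lemma~9 of Kupavskii--Zhivotovskiy, and your route (the Haussler--Littlestone--Warmuth one-inclusion-graph predictor with a canonical out-degree-$d$ orientation for each underlying set $Y$, a leave-one-out count giving expected error at most $\frac{d}{s+1}\abs{X}$, then Markov) is essentially the standard proof behind that citation. The only slip is your $d=0$ remark: outputting the unique set gives $\abs{F\triangle f(S,F\cap S)}=0$, which fails the strict inequality $<\frac{0}{\eps s}\abs{X}=0$, so the statement itself is false for $d=0$ and $\eps<1$ and must be read with $d\ge 1$ (harmless, since in the paper's application $d\ge d_0\ge 1$).
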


The next algorithm relies on the existence of the map $f$ as above, but does not require to compute it explicitly.

\begin{lemma}\label{lem:sampling_pairs}
Fix real numbers $c,\rho\ge 1$ and $\delta>0$, and let $\alpha:=1-1/\rho$.
There is a Las Vegas randomized algorithm that, given an $n$-vertex graph $G$ with $(c,\rho)$-polynomial neighborhood complexity and a~set $U \subseteq V(G)$ with $|U| \ge \max(\delta n,2)$, outputs in expected time
\[
  \Oh_{c,\rho,\delta}\left(n + \sum_{u \in U} \deg_G(u)\right)
\]
a~set of $\Omega_{c,\rho,\delta}(|U|)$ disjoint pairs of $\Oh_{c,\rho,\delta}(n^\alpha)$-near-twins contained in $U$.
\end{lemma}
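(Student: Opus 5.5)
We sample $s$ vertices and group the vertices of $U$ by their trace on the sample; the map $f$ of \cref{thm:learning} is used only in the analysis. Set $\eps := 1/4$ and choose an integer $s = \Theta_{c,\rho,\delta}(n^{1/\rho})$ with a large hidden constant. Draw an $s$-sample $S=(x_1,\dots,x_s)$ of $V(G)$ uniformly at random. For every $u\in U$ compute its trace $N(u)\cap S$, encoded as a sorted list of sample indices.

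\textbf{Computing and bucketing traces.} Mark the sampled vertices with their indices. Scan the adjacency list of each $u\in U$; this costs $\Oh(n+\sum_{u\in U}\deg(u))$. Radix-sort the resulting lists and group the vertices of $U$ with identical traces into buckets. Inside every bucket, pair up vertices arbitrarily, and output all such disjoint pairs.

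\textbf{Why the pairs are near-twins.} Let $\mathcal F=\{N(v):v\in V(G)\}$ on domain $X=V(G)$. Its VC dimension is $d=\Oh_{c,\rho}(1)$, since $(c,\rho)$-PNC with $\rho$ fixed forbids shattering large sets: $2^m\le cm^\rho$ fails for large $m$. By \cref{thm:learning}, with probability at least $1-\eps$ over $S$ we have $|N(u)\triangle f(S,N(u)\cap S)|<\frac{d}{\eps s}n$. Call such a $u$ \emph{good}. Two good vertices with equal traces satisfy $f(S,\cdot)$ equal, so $\sd(u,v)<\frac{2d}{\eps s}n=\Oh(n^{1-1/\rho})=\Oh(n^\alpha)$.

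We must only output pairs of good vertices, but goodness is not observable. Instead, I would pair inside buckets and then verify each candidate pair $(u,v)$ explicitly. Verification merges the two adjacency lists, which are pre-sorted by one global bucket sort in linear time, in $\Oh(\deg u+\deg v)$ time, and we keep the pair iff $\sd\le \frac{2d}{\eps s}n$. Since every vertex lies in at most one candidate pair, verification costs linear total time. The algorithm is Las Vegas: repeat until at least $\gamma|U|$ verified pairs are found, for a suitable constant $\gamma$.

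\textbf{Main obstacle: enough pairs in expectation.} By Markov's inequality, with probability at least $1/2$ at least $|U|/2$ vertices of $U$ are good. The number of distinct traces is at most $\pi_G(s)\le cs^\rho$. Choosing the constant in $s$ so that $cs^\rho\le \delta n/8\le |U|/8$, the good vertices fall into at most $|U|/8$ buckets. Then pairing good vertices within buckets leaves at most one unpaired vertex per bucket. This gives at least $(|U|/2-|U|/8)/2$ good pairs.

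The difficulty is that arbitrary pairing may match a good vertex with a bad one, and such a pair fails verification. To fix this, inside each bucket I would pair consecutively and, on failure, retry. A simpler option is to pair each vertex of a bucket against the bucket's first vertex $r$ only after checking $r$. However, cheap detection of a good representative needs care: checking $r$ against all members costs up to $\deg(r)$ times the bucket size. I expect this step to be the crux. My plan is to randomly shuffle each bucket and pair consecutive elements. A constant fraction of the consecutive pairs then consist of two good vertices in expectation, which suffices. Verification is linear, so each round costs linear time and succeeds with constant probability, giving expected time $\Oh(n+\sum_{u\in U}\deg u)$.
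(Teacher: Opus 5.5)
Your architecture is the paper's. You sample $s=\Theta(n^{1/\rho})$ vertices, bucket $U$ by traces, and use the map $f$ of \cref{thm:learning} only in the analysis. You bound the number of buckets by $cs^\rho$, pair inside buckets, verify each candidate pair explicitly, and resample on failure. Treating badness per vertex, rather than per member of a set system of distinct neighborhoods, even lets you skip the paper's pendant-vertex reduction to the twin-free case.

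There are two small slips. First, the constant in $s$ must be small, not large, to get $cs^\rho\le |U|/8$, and constant-size inputs need separate treatment so that $s\ge 1$. Second, verification should sort only the adjacency lists of $U$, or use a marking array. Sorting all lists costs $\Oh(n+m)$, which exceeds the budget $\Oh(n+\sum_{u\in U}\deg u)$ that the application in \cref{thm:LV-algo-gen} relies on.

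The real gap is the step you call the crux, which you leave as an unproven assertion. The obstacle is an artifact of your choice $\eps=1/4$. The idea you are missing is that the candidate pairs are pairwise disjoint, so each bad vertex lies in at most one candidate pair. Hence at most as many candidate pairs fail as there are bad vertices.

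With $\eps=1/4$ this indeed does not suffice: you get at least $7|U|/16$ candidate pairs, but up to $|U|/2$ vertices may be bad. With $\eps=1/16$, Markov gives at most $|U|/8$ bad vertices with probability at least $1/2$. Then arbitrary pairing inside the at most $|U|/8$ buckets yields at least $7|U|/16-|U|/8=5|U|/16$ pairs of two good vertices, all of which pass verification. This is exactly the paper's argument (it uses $\eps=1/16$ and $cs^\rho\le|U|/2$), and no shuffling is needed.

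Your shuffling route can also be made rigorous, but it needs an actual proof. Condition on at most $|U|/2$ bad vertices. A bucket with $g_B$ good vertices among $m_B$ then yields at least $g_B(g_B-1)/(2m_B)$ good--good pairs in expectation. The Cauchy--Schwarz bound $\sum_B g_B^2/m_B\ge(\sum_B g_B)^2/\sum_B m_B\ge |U|/4$, together with $\sum_B g_B/m_B\le |U|/8$, gives an expectation of at least $|U|/16$. Since there are at most $|U|/2$ pairs, a reverse Markov inequality turns this into a constant success probability per round.

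As written, neither argument is present, so the proposal does not establish that a round succeeds with constant probability.
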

\begin{proof}
We may assume that $\delta\le 1$.
If $|U|<2^{\rho+1}c$, then $n\le |U|/\delta=\Oh_{c,\rho,\delta}(1)$.
An arbitrary maximal pairing of~$U$ has size at least $|U|/4$, and every pair has symmetric difference at most $n=\Oh_{c,\rho,\delta}(n^\alpha)$, so we are done.
Henceforth assume that $|U|\ge 2^{\rho+1}c$.

We first consider the case when $U$ contains no pair of vertices which are twins in~$G$.
Let
\[
  \mathcal F := \{N_G(u) \mid u \in U\}
\]
be the set system with domain $V(G)$ of neighborhoods of vertices of~$U$.
The members of $\cal F$ are distinct, and, since $G$ has $(c,\rho)$-polynomial neighborhood complexity,
\[
  \pi_{\cal F}(m) \le \pi_G(m) \le c\cdot m^\rho
\]
for every positive integer $m$.
Let $d_0$ be the VC dimension of $\cal F$.
Since $|\cal F|=|U|\ge 2$, we have $d_0\ge 1$, and
\[
  2^{d_0}\le \pi_{\cal F}(d_0)\le c\cdot d_0^\rho.
\]
Consequently, there is an integer $d := d(c,\rho)$, fixed independently of the input, such that $d_0\le d$.

Let
\[
  s := \left\lfloor \left(\frac{|U|}{2c}\right)^{1/\rho} \right\rfloor,
  \qquad \eps := \frac{1}{16},
  \qquad k := \left\lceil \frac{2dn}{\eps s} \right\rceil.
\]
Our lower bound on $|U|$ ensures that
\[
  s \ge \frac{1}{2}\left(\frac{|U|}{2c}\right)^{1/\rho}
  \qquad\text{and}\qquad
  cs^\rho \le \frac{|U|}{2}.
\]
Moreover,
\[
  k \le 1 + \frac{4dn(2c)^{1/\rho}}{\eps |U|^{1/\rho}}
  \le \Oh_{c,\rho,\delta}\left(n^{1-\frac{1}{\rho}}\right).
\]

Fix a map $f$ as in \Cref{thm:learning}, using the upper bound $d$ on the VC dimension.
For an $s$-sample~$S$, we call an element $F \in \mathcal F$ \emph{bad} (with respect to~$S$) if
\[
  \abs{F\,\triangle\,f(S, F\cap S)} \ge \frac{k}{2}.
\]
By \Cref{thm:learning}, for any fixed $F\in \mathcal F$, the probability over the random choice of~$S$ that $F$ is bad is at~most~$\eps$.
Thus, the expected number of bad elements of $\cal F$ is at~most~$\eps |U|$.
By Markov's inequality,
\[
  \underset{S \sim V(G)^s}{\mathbb{P}}
  \left(\left|\left\{F \in \mathcal{F} \mid F\text{ is bad with respect to }S\right\}\right| > |U|/8\right)
  \le \frac{\eps |U|}{|U|/8} = 8\eps = \frac{1}{2}.
\]

One attempt of the algorithm draws an $s$-sample $S$ of $V(G)$ and lets $\widehat S\subseteq V(G)$ be its underlying set.
Consider the set system
\[
  \mathcal F\res{\widehat S} := \{N_G(u) \cap \widehat S \mid u \in U\}.
\]
We have
\[
  |\mathcal F\res{\widehat S}|
  \le \pi_{\cal F}(|\widehat S|)
  \le c|\widehat S|^\rho
  \le cs^\rho
  \le \frac{|U|}{2}.
\]
For each $\trF \in \mathcal F\res{\widehat S}$, let the \emph{bucket $B(\trF)$} be the set of sets $F \in \cal F$ such that $F \cap \widehat S = \trF$.
Thus, the buckets partition $\cal F$.

Inside every bucket, we arbitrarily choose a maximum set of pairwise disjoint pairs, leaving at most one set unpaired.
Let $\pairs$ be the collection of pairs obtained from all the buckets, and let $\pairs_{\mathrm{good}}\subseteq\pairs$ consist of those pairs $\{F_1,F_2\}$ such that $|F_1\triangle F_2|\le k$.
If $|\pairs_{\mathrm{good}}|\ge |U|/8$, we output the corresponding pairs of vertices in~$U$; otherwise, we discard the attempt and resample~$S$.

\begin{claim}\label{clm:one-attempt-time}
  One attempt takes expected time $\Oh(n + \sum_{u \in U} \deg_G(u))$.
\end{claim}
\begin{claimproof}
  Sampling~$S$ takes $\Oh(n)$ time.
  Computing $\mathcal F\res{\widehat S}$ and partitioning $\mathcal F$ into buckets takes expected time
  $\Oh(n + \sum_{u\in U}\deg_G(u))$ by scanning the adjacency lists of the vertices in~$U$.
  Building the candidate pairs takes $\Oh(|U|)$ time.

  Given a candidate pair $\{F_1,F_2\}$, computing $F_1\triangle F_2$ takes
  $\Oh(|F_1|+|F_2|)$ time.
  As the candidate pairs are disjoint, checking all of them takes
  $\Oh(|U| + \sum_{u\in U}\deg_G(u))$ time.
\end{claimproof}

\begin{claim}
  Every attempt succeeds with probability at least $1/2$.
\end{claim}
\begin{claimproof}
Let $F_1,F_2$ lie in the same bucket.
Then $F_1\cap S=F_2\cap S$, and hence
\[
  \abs{F_1 \triangle F_2}
  \le \abs{F_1 \triangle f(S,F_1\cap S)}
  + \abs{F_2 \triangle f(S,F_2\cap S)}.
\]
Therefore, every pair in $\pairs\setminus\pairs_{\mathrm{good}}$ contains a bad element of $\cal F$.
As the candidate pairs are disjoint, the number of such pairs is at most the number of bad elements.

If the buckets are $B_1,\ldots,B_t$, then $t = |\mathcal F\res{\widehat S}| \le |U|/2$ and
\[
  |\pairs| = \sum_{j=1}^t \left\lfloor \frac{|B_j|}{2} \right\rfloor
  \ge \frac{|U|-t}{2}
  \ge \frac{|U|}{4}.
\]
With probability at least $1/2$, at most $|U|/8$ elements of $\cal F$ are bad, and on that event
\[
  |\pairs_{\mathrm{good}}|
  \ge |\pairs|-\frac{|U|}{8}
  \ge \frac{|U|}{8}.
\]
Thus, the attempt succeeds.
\end{claimproof}

The expected number of attempts is at most two, so \cref{clm:one-attempt-time} proves the claimed expected running time when $U$ contains no pair of twins in~$G$.

\medskip

In the general case, construct a graph $G'$ by adding, for every $u\in U$, a new private pendant vertex $x_u$ adjacent only to~$u$.
Then the vertices of~$U$ have pairwise distinct neighborhoods in~$G'$.
We claim that $G'$ has $(c+2,\rho)$-polynomial neighborhood complexity.
Indeed, fix a nonempty set $A\subseteq V(G')$, and put $A_0:=A\cap V(G)$ and $A_1:=A\setminus V(G)$.
The traces of the old vertices are determined by their traces on~$A_0$, except that at most $|A_1|$ of them can acquire a private pendant element.
The new pendant vertices have traces among $\emptyset$ and the singletons $\{u\}$ for $u\in A_0$.
Consequently, the total number of traces on~$A$ is at most
\[
  c|A|^\rho+|A|+1\le (c+2)|A|^\rho.
\]

Also, $|V(G')|\le 2n$, $|U|\ge (\delta/2)|V(G')|$, and $\deg_{G'}(u)=\deg_G(u)+1$ for every $u\in U$.
We can therefore apply the twin-free case to $G'$ and~$U$ with constants $c+2$, $\rho$, and $\delta/2$.
This application has the claimed expected running time.
Finally, deleting the new pendant vertices cannot increase a symmetric difference, so the returned pairs satisfy the asserted bound in~$G$ as well.
\end{proof}

We now present the randomized linear-time algorithm computing an sd-degeneracy sequence.

\csname thmlvalgo-gen\endcsname*
\begin{proof}
Let $c\ge 1$ be such that every graph in $\mathcal C$ has $(c,\rho)$-polynomial neighborhood complexity.
Write $n:=|V(G)|$, $m:=|E(G)|$, and $\alpha:=1-1/\rho$.
We build a~sequence $G_1,G_2,\ldots$ of induced subgraphs of $G$, with $G_1:=G$.
Every $G_i$ still has $(c,\rho)$-polynomial neighborhood complexity.

In every iteration $i$ with $n_i:=|V(G_i)|\ge 3$, let $U_i\subseteq V(G_i)$ be the set of $\lceil n_i/2\rceil$ vertices of $G_i$ with smallest degrees, breaking ties arbitrarily.
By \Cref{lem:sampling_pairs}, applied to $G_i$, $U_i$, and $\delta=1/2$, in expected time
\[
  \Oh_{c,\rho}\left(n_i + \sum_{u\in U_i}\deg_{G_i}(u)\right)
\]
we find a~set $\pairs_i$ of $\Omega_{c,\rho}(n_i)$ disjoint pairs of $\Oh_{c,\rho}(n_i^\alpha)$-near-twins inside~$U_i$.
We order every pair of~$\pairs_i$ arbitrarily, append the ordered pairs to the output sequence, and obtain~$G_{i+1}$ by deleting the first vertex of every pair.
When at most two vertices remain, we append their pair if there are two vertices, and terminate.

\begin{claim}\label{clm:sdd-seq}
The output is an sd-degeneracy sequence of~$G$ of width $\Oh_{c,\rho}(n^\alpha)$.
\end{claim}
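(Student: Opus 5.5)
We verify the three defining conditions of an sd-degeneracy sequence and then bound the width.

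\emph{Structure of the sequence.} In iteration $i$, the pairs of $\pairs_i$ are pairwise disjoint and contained in $U_i\subseteq V(G_i)$. Each pair $(u,v)$ is ordered, and only its first element $u$ is deleted to form $G_{i+1}$. Therefore:
\begin{compactitem}
\item every appended pair consists of two distinct vertices of the current graph;
\item a first element $u$ never reappears in a later pair of the same batch (by disjointness) nor in a later batch (it has been deleted).
\end{compactitem}
Each iteration deletes $|\pairs_i|\ge 1$ vertices, since $\Omega_{c,\rho}(n_i)$ with $n_i\ge 3$ can be taken to be at least one. Hence the process terminates with at most two vertices; if there are two, their pair is appended last. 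Consequently every vertex appears exactly once as a first element, except the two (or one) survivors, and the sequence has length $n-1$. If one vertex remains (for instance when $n=1$, or when the last batch leaves one vertex), the sequence is already complete.

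\emph{Batch removal versus one-at-a-time removal.} The width is measured in $G-\{u_1,\dots,u_{j-1}\}$, whereas the lemma certifies $\sd_{G_i}(u,v)=O(n_i^\alpha)$ in $G_i$. Within a batch, the graph at the time pair $(u,v)$ is processed is $G_i$ minus some earlier first elements of the same batch. Deleting vertices can only shrink $N(u)\triangle N(v)$, so symmetric difference is monotone under taking induced subgraphs containing $u,v$. The bound therefore persists. The same monotonicity handles the final pair, whose symmetric difference is at most $2$.

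\emph{Width bound.} Each $G_i$ is an induced subgraph of $G$. Restricting the traced sets and the tracing vertices can only decrease $\pi$, so $G_i$ has $(c,\rho)$-PNC, and \Cref{lem:sampling_pairs} applies with constants depending only on $c,\rho$. Note that $|U_i|=\lceil n_i/2\rceil\ge\max(n_i/2,2)$ since $n_i\ge 3$. The lemma gives $\sd_{G_i}(u,v)\le C n_i^\alpha\le C n^\alpha$, using $n_i\le n$ and $\alpha\ge 0$. Here $C$ depends only on $c,\rho$, so the width is $\Oh_{c,\rho}(n^\alpha)$.

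The only step needing care is the within-batch monotonicity argument; everything else is bookkeeping.
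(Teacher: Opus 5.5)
Your proposal is correct and follows essentially the same argument as the paper. Both rely on three facts: the pairs within a batch are disjoint, so earlier deletions in the same batch only shrink $N(u)\triangle N(v)$; the bound $n_i\le n$ turns the per-iteration guarantee into an $\Oh_{c,\rho}(n^\alpha)$ width bound; and the final pair has symmetric difference at most $2$. You additionally spell out the bookkeeping for the defining conditions of an sd-degeneracy sequence, which the paper leaves implicit.
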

\begin{claimproof}
Consider a pair $(u,v)$ from a batch $\pairs_i$ when it is reached in the output sequence.
Earlier pairs of the same batch are vertex-disjoint from $(u,v)$, and the graph then present is an induced subgraph of~$G_i$.
Hence, its symmetric difference is no larger than it was in~$G_i$, where $(u,v)$ is a pair of $\Oh_{c,\rho}(n_i^\alpha)$-near-twins.
Since $n_i\le n$, this is $\Oh_{c,\rho}(n^\alpha)$.
If the algorithm terminates with two vertices, their symmetric difference is at most~$2$, and hence at most~$2n^\alpha$.
\end{claimproof}

\begin{claim}\label{clm:expected-time}
  The expected running time is $\Oh_{c,\rho}(n+m)$.
\end{claim}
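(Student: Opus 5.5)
The bound comes from two facts: the number of vertices decays geometrically from one iteration to the next, and the work in iteration $i$ only touches the lower half of the degree distribution of $G_i$. We will show that the expected cost of iteration $i$ is $\Oh_{c,\rho}(n_i + m_i)$ with $m_i \le 2m/n_i$ times something small. More precisely, the aim is the bound $\Oh_{c,\rho}(n_i + \min(m, n_i \cdot 2m/n_i))$. The useful form of it is
\[
  \Oh_{c,\rho}\bigl(n_i + \textstyle\sum_{u\in U_i}\deg_{G_i}(u)\bigr)
  \quad\text{with}\quad
  \textstyle\sum_{u\in U_i}\deg_{G_i}(u) \le m_i ,
\]
where $m_i := |E(G_i)|$. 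Pure monotonicity $m_i \le m$ is not enough, because there may be up to $\log n$ iterations. So we need decay in the edge counts as well.

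\textbf{Geometric decay of vertices.} By \cref{lem:sampling_pairs} with $\delta = 1/2$, the batch $\pairs_i$ has at least $\gamma n_i$ pairs for some constant $\gamma = \gamma(c,\rho) > 0$. Exactly one vertex is deleted per pair, so $n_{i+1} \le (1-\gamma) n_i$. Hence $\sum_i n_i = \Oh_{c,\rho}(n)$.

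\textbf{Decay of the charged edges.} The key point is that every deleted vertex lies in $U_i$, the lower half of $G_i$ by degree. Averaging gives $\sum_{u \in U_i}\deg_{G_i}(u) \le m_i$. On the other hand, each vertex of $U_i$ has degree at most the median degree, which is at most $4m_i/n_i$.

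Rather than tracking $m_i$ directly, we charge the work of iteration $i$ to the edges of $G_i$ incident to $U_i$. Every edge incident to a deleted vertex is deleted and is never charged again. For edges incident to $U_i$ that survive, we argue differently. The vertices of $V(G_i) \setminus U_i$ all have degree at least the median $\mu_i$. So, choosing $U_i$ as the lower half,
\[
  \sum_{u\in U_i}\deg_{G_i}(u) \le \frac{n_i}{2}\mu_i \le \sum_{w \notin U_i}\deg_{G_i}(w) \le 2m_i .
\]
The hardest step is to obtain geometric decay of $m_i$, or at least a summable bound. My first attempt would be the potential $\Phi_i := \sum_{u\in U_i}\deg_{G_i}(u) \le n_i\mu_i/2$, arguing that $\mu_{i+1}$ cannot exceed the degrees in $G_i$ by much. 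I expect this approach to fail, because high-degree vertices persist across iterations.

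Instead I would use an amortisation argument that charges each edge at most a constant number of times in expectation. Fix an edge $xy$ and suppose it is charged in iteration $i$, with $x \in U_i$. A constant fraction of $U_i$ is removed (at least $\gamma n_i$ of the $n_i/2$ vertices), but $x$ itself might never be removed while staying low-degree. So this per-edge argument does not obviously work either.

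The cleanest alternative is to bound $\sum_{u\in U_i}\deg_{G_i}(u) \le n_i\mu_i/2 \le n_i \cdot 2m/n_i' $, where the last step uses $\mu_i \le 2m_i/(n_i/2)$. This gives only $\Oh(m)$ per iteration, which is insufficient. So I would refine it using $m_i \le \binom{n_i}{2}$. Summing $\min(m, n_i^2)$ over $n_i \le (1-\gamma)^{i} n$ gives the following:
\begin{itemize}
  \item iterations with $n_i^2 \ge m$ contribute at most $m$ each, and there are only $\Oh(1)$ of those after the point where $n_i^2$ drops below $m$ geometrically, counting backwards from the crossover;
  \item the remaining iterations contribute a geometric series summing to $\Oh(m)$.
\end{itemize}
This counting is wrong as stated, since there can be $\log(n^2/m)$ iterations with $n_i^2 \ge m$. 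The real fix I would pursue is to bound the charged degree by $n_i$ times the median degree of $G_i$, namely $\mu_i \le n_i$, and combine this with $\mu_i \le 4m/n_i$. The charged quantity is then at most $\min(n_i^2, 4m)$, which leaves the same issue.

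I therefore expect the intended argument to be that the degree of each surviving vertex is bounded by its degree in $G$, together with the observation that a low-degree-half vertex of $G_i$ has degree at most $4m_i/n_i$ while $n_i$ shrinks. Showing that $\sum_i m_i/n_i \cdot n_i$ is summable, via decay of $m_i$ coming from the deletion of a constant fraction of the vertices, is the main obstacle I anticipate.
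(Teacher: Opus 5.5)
\textbf{There is a genuine gap.} You never bound the degree term $\sum_i\sum_{u\in U_i}\deg_{G_i}(u)$, and the route you pursue cannot work. The geometric decay $\sum_i n_i=\Oh_{c,\rho}(n)$ is fine. However, every estimate you then try for the degree term passes through $m_i=|E(G_i)|$ or through $m$ alone: $\sum_{u\in U_i}\deg_{G_i}(u)\le m_i$, $\mu_i\le 4m_i/n_i$, and $\min(n_i^2,4m)$. No such estimate is summable.

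Take $G$ to be a clique on $t\approx\sqrt m$ vertices plus $n-t$ isolated vertices; this graph has linear neighborhood complexity. As long as $n_i>2t$, the lower half $U_i$ consists only of isolated vertices, so the true degree term is $0$, while $m_i=\binom t2$ does not change. Each iteration removes at most about $n_i/4$ vertices, since the pairs are disjoint inside a set of size $\lceil n_i/2\rceil$. Hence there are $\Omega(\log(n/t))$ such iterations, and both $\sum_i m_i$ and $\sum_i\min(n_i^2,4m)$ are $\Omega(m\log(n/t))$; with $m=\Theta(n)$ this is $\Omega(m\log n)$. So seeking decay of $m_i$ is a dead end, not merely an obstacle to be overcome.

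\textbf{The missing idea.} Compare the degrees in $U_i$ with the degree sequence of the \emph{original} graph $G$. You wrote down both needed facts but never combined them: the upper half $V(G_i)\setminus U_i$ has $\lfloor n_i/2\rfloor$ vertices, and each has degree in $G_i$, hence in $G$, at least every degree occurring in $U_i$.

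The paper turns this into a layer-cake count.
\begin{itemize}
\item Write the degree term as $\sum_{r\ge1}|S_r|$, where $S_r=\{(i,u): u\in U_i,\ \deg_{G_i}(u)\ge r\}$.
\item Let $i_r$ be the first iteration whose $U_{i_r}$ contains a vertex of degree at least $r$.
\item Then $|S_r|\le\sum_{i\ge i_r}|U_i|=\Oh_{c,\rho}(n_{i_r})$ by geometric decay.
\item Also $\lfloor n_{i_r}/2\rfloor\le|A_r|$ with $A_r=\{v:\deg_G(v)\ge r\}$, because the entire upper half of $G_{i_r}$ lies in $A_r$.
\item Summing over $r\le n-1$ gives $\Oh_{c,\rho}(\sum_r(|A_r|+1))=\Oh_{c,\rho}(n+m)$.
\end{itemize}

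The same idea in your notation: let $D_1\ge\dots\ge D_n$ be the degrees of $G$. Then $\mu_i\le D_{\lfloor n_i/2\rfloor}$, so iteration $i$ costs at most $\lceil n_i/2\rceil D_{\lfloor n_i/2\rfloor}$. Apart from $\Oh_{c,\rho}(1)$ iterations with $n_i=\Oh_{c,\rho}(1)$, geometric decay lets you charge this to $\sum_j D_j$ over the disjoint windows $\lfloor n_{i+1}/2\rfloor<j\le\lfloor n_i/2\rfloor$.

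You also omit the bookkeeping for computing $U_i$: maintaining degrees under deletions costs $\Oh(n+m)$ in total, and linear-time selection costs $\Oh(n_i)$ per iteration. That part is routine.
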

\begin{claimproof}
We maintain $\deg_{G_i}(v)$ for every current vertex $v$, initialized in $\Oh(n+m)$ time.
When vertices are deleted, we update the degrees of their surviving neighbors; all such updates cost $\Oh(n+m)$ in total.
In iteration $i$, a~linear-time selection algorithm applied to these maintained degrees finds the $\lceil n_i/2\rceil$ vertices of smallest degree in $\Oh(n_i)$ time.

By conditional linearity of expectation, the remaining expected work is bounded by
\begin{align}\label{eq:time}
  \Oh_{c,\rho}\left(
    \sum_i n_i + \sum_i\sum_{u\in U_i}\deg_{G_i}(u)
  \right).
\end{align}

Every iteration removes $\Omega_{c,\rho}(n_i)$ vertices.
Thus, the sequence $(n_i)_i$ decreases geometrically, and
\[
  \sum_i n_i = \Oh_{c,\rho}(n).
\]
For the second term of~\eqref{eq:time}, we have
\[
  \sum_i\sum_{u\in U_i}\deg_{G_i}(u)
  = \sum_{r\ge 1}|S_r|,
  \qquad
  \text{with}~S_r:=\{(i,u)\mid u\in U_i\text{ and }\deg_{G_i}(u)\ge r\}.
\]
For every $r$ with $S_r\neq\emptyset$, let $i_r$ be the smallest index such that $U_{i_r}$ contains a vertex of degree at least~$r$.
The geometric decrease and $|U_i|=\Theta(n_i)$ imply
\[
  |S_r|\le \sum_{i\ge i_r}|U_i|=\Oh_{c,\rho}(|U_{i_r}|).
\]

For every $r\ge 1$, let
\[
  A_r:=\{v\in V(G)\mid \deg_G(v)\ge r\}.
\]
Because $U_{i_r}$ consists of the lower half of the degree order and contains a vertex of degree at least~$r$, every vertex of $V(G_{i_r})\setminus U_{i_r}$ has degree at least~$r$ in~$G_{i_r}$, and hence also in~$G$.
Therefore,
\[
  \left\lfloor \frac{n_{i_r}}{2} \right\rfloor \le |A_r|
  \qquad\text{and consequently}\qquad
  |U_{i_r}|\le |A_r|+1.
\]
Since $S_r=\emptyset$ for $r\ge n$ and $\sum_{r\ge 1}|A_r|=2m$, we obtain
\[
  \sum_i\sum_{u\in U_i}\deg_{G_i}(u)
  = \sum_{r=1}^{n-1}|S_r|
  \le \Oh_{c,\rho}\left(\sum_{r=1}^{n-1}(|A_r|+1)\right)
  = \Oh_{c,\rho}(n+m).
\]
Substituting both estimates into~\eqref{eq:time} proves the claim.
\end{claimproof}
We conclude by~\cref{clm:sdd-seq,clm:expected-time}.
\end{proof}

\section{Finding Small Patterns}

We next design a~randomized linear-time algorithm that returns a~triangle (if one exists) in graphs from a~class of linear neighborhood complexity.
The main idea is to use an sd-degeneracy sequence of the input graph to guide the search.
In the next three (almost-)linear algorithms we assume that the edges of the input graph are preprocessed into a~static dictionary in $\Oh(n+m)$ expected time and $\Oh(n+m)$ space, supporting adjacency queries in $\Oh(1)$ expected time.

\csname triangle-detection\endcsname*

\begin{proof}
  Let $G$ be any input graph.
  By~\cref{thm:LV-algo}, we get an sd-degeneracy sequence $(u_1,v_1), \ldots,$ $(u_{n-1}, v_{n-1})$ of~$G$ of width $d = \Oh_{\mathcal C}(1)$ in randomized time $\Oh_{\mathcal C}(n+m)$.

  For every $i \in [n-1]$, we define $G_i := G - \{u_1, \ldots, u_{i-1}\}$.
  Note that to find a~triangle in~$G = G_1$ (if any), it is enough to return a~triangle of~$G_1$ that contains $u_1$ or obtain the guarantee that \emph{not} all the triangles of~$G_1$ contain $u_1$, and recursively apply this strategy to~$G_2$.
  Indeed, at each step we either find a~triangle in $G_i$ or move to $G_{i+1}$ with the promise that $G_{i+1}$ contains a~triangle if and only if $G_i$ contains one.

  We set \[A_i := N_{G_i}(u_i) \setminus N_{G_i}(v_i)~~\text{and}~~X_i := N_{G_i}(u_i) \cap N_{G_i}(v_i)\]
  for every $i \in [n-1]$.

  For $i$ going from 1 to $n-2$, the algorithm proceeds as follows.
  Compute $A_i$ in time $\Oh(\abs{N_{G_i}(u_i)})$ by filtering out the vertices of $N_{G_i}(u_i)$ that are adjacent to~$v_i$.
  By assumption, $\abs{A_i} \leqslant d$.
  For every $a \in A_i$ and every $b \in N_{G_i}(u_i)$ check if $ab \in E(G_i)$.
  If so, return the triangle $u_i a b$ (break from the for loop, and terminate).
  Even if all the checks are unsuccessful, this takes $\Oh\left((d+1) \cdot \abs{N_{G_i}(u_i)}\right)$ time.
  If no triangle is found, remove $u_i$ from $G_i$ in time $\Oh(\abs{N_{G_i}(u_i)})$ to obtain $G_{i+1}$ (and move to the next iteration of the for loop).
  
  \subparagraph*{Correctness.}
  If there is a~triangle in $G_i$ that contains $u_i$ but the algorithm does not return one at the $i$th iteration of the for loop, it is of the form $u_i x y$ with $\{x,y\} \subseteq X_i$.
  Then $G_{i+1}$ still contains at~least the triangle $v_i x y$.

  \subparagraph*{Running time.}
  Computing the sd-degeneracy sequences takes randomized time $\Oh_{\mathcal C}(n+m)$.
  The rest of the algorithm takes time \[\Oh \left(\sum\limits_{i \in [n-2]} \left(d+1\right) \cdot \left(\abs{N_{G_i}(u_i)}+1\right) \right) = \Oh_{\mathcal C}(n+m).\qedhere\]
\end{proof}

We now build on the scheme of the previous proof to detect a~4-vertex clique.
The randomized almost linear algorithm relies on both an sd-degeneracy sequence \emph{and} a~Welzl order.

\csname k4-detection\endcsname*

\begin{proof}
  We obtain an sd-degeneracy sequence $(u_1,v_1), \ldots,$ $(u_{n-1}, v_{n-1})$ of the input graph $G$ of width $d = \Oh_{\mathcal C}(1)$ in randomized $\Oh_{\mathcal C}(n+m)$ time by~\cref{thm:LV-algo}, or deterministic $\Oh(n^2)$ time by \cref{thm:sdd-seq}.
  For every $i \in [n-1]$, we define $G_i, A_i, X_i$ as in the proof of~\cref{thm:triangle-detection}.

  Our general strategy remains the same.
  Now, for every $a \in A_i$, we do not merely need to know if $Z_{i,a} \neq \emptyset$ but rather whether $G_i[Z_{i,a}]$ contains an edge, where \[Z_{i,a} := N_{G_i}(u_i) \cap N_{G_i}(a).\]
  We start describing the deterministic $\Oh(n^2)$-time algorithm, which naturally uses \cref{thm:sdd-seq} as opening step.

  Let $M$ be an arbitrary adjacency matrix of $G$, on which we compute the data structure of~\cref{thm:lnc-matmult} (associated sparse matrix) in $\Oh_{\mathcal C}(n^2)$ time. 
  For $i$ going from 1 to $n-3$, the algorithm proceeds as follows.
  It computes $A_i$ in time $\Oh(\abs{N_{G_i}(u_i)})$.
  For every $a \in A_i$, it computes the indicator (column) vector $z_{i,a}$ of $Z_{i,a}$ in time $\Oh(n)$.

  We observe that \[z_{i,a}^T M z_{i,a} = 2\abs{E(G_i[Z_{i,a}])},\] and compute the matrix-vector product $y_{i,a} := M z_{i,a}$ in time $\Oh_{\mathcal{C}}(n)$, by~\cref{thm:lnc-matmult}.
  We now check in $\Oh(n)$ time whether there is a~$j \in [n]$ such that both $z_{i,a}[j]$ and $y_{i,a}[j]$ are nonzero.
  If this happens, we know that there is at least one edge in $G_i[Z_{i,a}]$ incident to the $j$th vertex, say $x$, of~$G$.
  We can find the other endpoint of such an edge, say $y$, in $\Oh(n)$ time.
  We return the 4-vertex clique $\{u_i, a, x, y\}$ (and terminate).
  
  If no $K_4$ is found, we remove $u_i$ from $G_i$ in time $\Oh(\abs{N_{G_i}(u_i)})$ to obtain $G_{i+1}$ (and move to the next iteration of the for loop).
  This concludes the algorithm.

  \medskip

  Each iteration takes $\Oh((d+1) n) = \Oh_{\mathcal C}(n)$ time because $\abs{A_i} \leqslant d$.
  The preprocessing of~$M$ takes $\Oh_{\mathcal C}(n^2)$ time.
  Therefore, the overall running time is $\Oh_{\mathcal C}(n^2)$.
  The correctness follows as in~the proof of~\cref{thm:triangle-detection}.

  \medskip

  We now describe the randomized $\Oh_{\mathcal C}(n \log^5 n + m \log n)$-time algorithm, which first calls \cref{thm:LV-algo}.
  In addition to the sd-degeneracy sequence, we will rely on a~Welzl order for~$G$; more specifically, in time $\Oh_{\mathcal C}((n+m) \log n)$ we compute a~vertex ordering $\prec: w_1w_2 \ldots w_n$ such that the neighborhood of every vertex of~$G$ is the union of $\Oh_{\mathcal C}(\log^2 n)$ $\prec$-intervals~\cite{DreierK26}.
  We further compute, for every $v \in V(G)$, the $\prec$-intervals $I_1(v), \ldots, I_{h_v}(v)$ with $h_v = \Oh_{\mathcal C}(\log^2 n)$ such that $N_G(v) = \biguplus_{j \in [h_v]} I_j(v)$; overall this takes time $\Oh(n+m)$.
  We observe that the orderings $u_1 u_2 \ldots$ and $w_1 w_2 \ldots$ are a~priori unrelated. 
  For every $v \in V(G)$, we denote by $\pi(v)$ the integer $i$ such that $v = w_i$.

  We create the following planar point set $P$, with $\abs{P} = 2m$.
  For every $uv \in E(G)$, we add points $(\pi(u),\pi(v))$ and $(\pi(v),\pi(u))$ to~$P$. 
  In $\Oh(m \log n)$ preprocessing time, a~data structure $\mathcal D_P$ can be computed that answers rectangle query (given an axis-parallel rectangle $R$, yield a~point in $R \cap P$) in time $\Oh(\log n)$~\cite[Theorem 2]{Chazelle88}.

  Let \[\widehat{Z}_{i,a} := N_G(u_i) \cap N_G(a),\] and note the $G$ subscripts (so in particular $\widehat{Z}_{i,a} \supseteq Z_{i,a}$).
  We proceed as the deterministic $\Oh_{\mathcal C}(n^2)$-time algorithm, except we decide if $G[\widehat{Z}_{i,a}]$ (not $G_i[Z_{i,a}]$) has an edge with the $\prec$-intervals $I_j(v)$ and $\mathcal D_P$.
  (We do not compute $M$ nor $z_{i,a}$.)
  We compute $\widehat{Z}_{i,a}$ as the intersection of \[\biguplus_{j \in [h_{u_i}]} I_j(u_i)~\text{and}~\biguplus_{j \in [h_a]} I_j(a).\]
  This intersection is itself the union of $\Oh_{\mathcal C}(\log^2 n)$ $\prec$-intervals, say $J_1, \ldots, J_p$, computed in $\Oh_{\mathcal C}(\log^2 n)$ time.
  For every $j, k \in [p]$, we test if $E_G(J_j, J_k)$ is nonempty with the query rectangle $J_j \times J_k$ on~$\mathcal D_P$.
  If so, we return the corresponding 4-vertex clique, and terminate.
  This test takes $\Oh_{\mathcal C}(\log^5 n)$ time in total, hence the claimed $\Oh_{\mathcal C}(n \log^5 n + m \log n)$ running time.
\end{proof}

We can go one step further and find 5-vertex cliques in randomized almost linear time.

\csname k5-detection\endcsname*

\begin{proof}
  The proof exactly follows the second part of that of~\cref{thm:k4-detection}.
  We now need to detect a~triangle of~$G$ between a~triple $J_h, J_j, J_k$ of $\prec$-intervals of $\widehat{Z}_{i,a} = N_G(u_i) \cap N_G(a)$ with $h, j, k \in [p]$.
  Instead of the planar point set $P$, we build the following integral point set $Q$ in $\mathbb Z^4$.
  To simplify the presentation, we assume that $V(G) = [n]$ and that $1, 2, \ldots, n$ is the Welzl order computed in time $\Oh_{\mathcal C}((n+m) \log n)$ such that the neighborhood of every vertex of~$G$ is the union of $\Oh_{\mathcal C}(\log^2 n)$ $\prec$-intervals~\cite{DreierK26}.
  
  For every $uv \in E(G)$, let \[N_G(u) \cap N_G(v) = [\ell_1, r_1] \uplus \ldots \uplus [\ell_s, r_s]\] where each $[\ell_t, r_t]$ is an $\prec$-interval and $s = \Oh_{\mathcal C}(\log^2 n)$, and add the point \[q_{u, v, t} := (u, v, \ell_t,- r_t)\] to~$Q$, for each $t \in [s]$.
  We build in time $\Oh(\abs{Q} \log^3 \abs{Q})=\Oh_{\mathcal C}(m \log^5 n)$ a~4-dimensional orthogonal range data structure~$\mathcal D_Q$ that, given a~query box, returns in time $\Oh(\log^3 \abs{Q})=\Oh_{\mathcal C}(\log^3 n)$ one point of $Q$ contained in the box, if one exists~\cite[Theorem 5.11]{BergCKO08}.

  Now for every triple $J_h$, $J_j$, and $J_k = [\alpha_k, \beta_k]$, we query the box $J_h \times J_j \times (-\infty,\beta_k] \times (-\infty,-\alpha_k]$ on~$\mathcal D_Q$.
      If the query returns a~point $q_{u, v, t} \in Q$, it holds that $u \in J_h$, $v \in J_j$, $uv \in E(G)$, $\ell_t \leqslant \beta_k$, and $r_t \geqslant \alpha_k$.
      In particular, the interval $[\ell_t,r_t]$ of $N_G(u) \cap N_G(v)$ intersects $J_k$.
      Let $z$ be any vertex in $[\ell_t,r_t] \cap J_k$.
      It holds that $u v z$ is a~triangle in $\widehat{Z}_{i,a} \supseteq J_k$, and we return the 5-vertex clique $\{u_i, a, u, v, z\}$.
      If no $K_5$ is found, $u_i$ is removed and the algorithm continues with~$G_{i+1}$ (next iteration).

      Each iteration takes time $\Oh_{\mathcal C}((\log^2 n)^3 \log^3 n)=\Oh_{\mathcal C}(\log^9 n)$, hence the overall running time $\Oh_{\mathcal C}(n \log^9 n + m \log^5 n)$.
\end{proof}  

One should, however, not expect an almost-linear dependence on $n$ uniformly over all clique sizes~$k$.
The Exponential Time Hypothesis (ETH) asserts that there exists a~constant $\lambda>0$ such that \textsc{3-SAT} on $n$ variables cannot be solved in time~$O(2^{\lambda n})$~\cite{Impagliazzo01}.

\begin{theorem}\label{thm:k-clique-lb}
  There is a~class $\mathcal C$ of linear neighborhood complexity such that \textsc{$k$-Clique} cannot be solved in time $f(k)\,n^{o(k/\log k)}$ for any function $f$, unless the ETH fails.
\end{theorem}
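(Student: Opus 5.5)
We will reduce from a problem known to admit no $f(k)\,n^{o(k/\log k)}$ algorithm under the ETH, and perform the reduction inside a class of linear neighborhood complexity. The natural candidate is \textsc{Multicolored Subgraph Isomorphism} (or \textsc{Partitioned Subgraph Isomorphism}) with a pattern graph $H$ of bounded degree, say $3$-regular, with $k$ edges. By Marx's result, it cannot be solved in time $f(k)\,n^{o(k/\log k)}$ under the ETH. An alternative source is a direct encoding of \textsc{3-SAT} via a low-degree expander. We will design a transformation producing a graph $G'$ and a clique size $k' = \Oh(k)$ such that $G'$ has a $k'$-clique iff the instance is positive, with $|V(G')|$ polynomial in the input. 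The class $\mathcal C$ will be the hereditary closure of all graphs produced this way, so the crux is to show that every produced graph, and every induced subgraph of it, has $c$-LNC for an absolute constant $c$.

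The construction I would try first is a sparse-plus-structured encoding. Take the host graph $G_0$ and the pattern $H$. For every edge $e=xy$ of $H$, with colour classes $V_x,V_y$ in $G_0$, create a class $W_e$ containing one vertex per edge of $G_0$ between $V_x$ and $V_y$. Each $W_e$ is independent. Two vertices in classes $W_e,W_f$ are adjacent iff $e,f$ share no endpoint in $H$, or $e,f$ share an endpoint $x$ and the two $G_0$-edges agree on their $V_x$-endpoint. A $k$-clique, with exactly one vertex per class $W_e$, then corresponds to a consistent choice, that is, a colourful copy of $H$. Thus $k'=|E(H)|$.

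The density is wrong as stated, because non-incident classes are completely joined. However, complete joins between classes are harmless for neighborhood complexity, since they are determined by class membership. The structured part between incident classes is an ``equality on the shared endpoint'' relation. Because $H$ has maximum degree $3$, each class interacts non-trivially with at most $4$ others. Within the equality relation between $W_e$ and $W_f$ sharing $x$, the neighborhood of a vertex is the set of vertices with the same $x$-coordinate, which is a disjoint union of cliques-of-twins pattern.

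The key step is bounding the number of traces on a set $A$. The trace of a vertex $w\in W_e$ on $A$ is determined by four pieces of data. The first is the class $e$. The second is the parts of $A$ lying in fully joined classes, which is fixed given $e$. The remaining two are, for each of the two endpoints $x,y$ of $e$, the trace of the equality relation on $A$-vertices from classes incident at $x$ (resp.\ $y$). Each equality trace is either empty or one of the groups of $A$ sharing a given $V_x$-value, giving at most $|A|+1$ options. Naively, combining the two endpoints yields a quadratic count. This is the main obstacle.

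To fix this, I would subdivide the encoding: insert per-endpoint ``selector'' classes so that every vertex's nontrivial adjacency depends on only one coordinate. For instance, replace each edge-class by gadgets where consistency at $x$ is checked via a vertex class $U_x$ of chosen vertices of $V_x$, which is adjacent to $W_e$ by equality. Then each $W_e$-vertex has two equality-type adjacencies, to $U_x$ and $U_y$. Its trace on $A$ is determined by the pair (value in $A\cap U_x$ or none, value in $A\cap U_y$ or none). Pairs realized must be actual $G_0$-edges, and distinct traces require distinct realized pairs among elements of $A$. Here I would instead charge each trace to an $A$-vertex witnessing it. Traces of $w$ that hit both $A\cap U_x$ and $A\cap U_y$ are still potentially quadratic in number, unless $G_0$ is sparse. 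Sparseness is fine: Marx's hardness holds for host graphs of bounded degree as well, since one can take $G_0$ of bounded degree after standard padding. I would then verify that with bounded-degree $G_0$, each $A\cap U_x$ vertex lies in $\Oh(1)$ realized pairs, giving $\Oh(|A|)$ traces. Clique size remains $|V(H)|+|E(H)|=\Oh(k)$, and size is polynomial, completing the ETH lower bound.
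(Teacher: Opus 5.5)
\textbf{The fix fails.} You diagnose the obstacle correctly, but your repair breaks at its last step, because Marx's lower bound does \emph{not} survive restricting the host graph $G_0$ to bounded degree. Suppose $G_0$ has maximum degree $\Delta$ and $H$ is connected. Guess the image of one vertex of $H$ ($n$ choices), then extend along a BFS order of $H$. Every later vertex has at most $\Delta$ candidates, since its image must be adjacent to its parent's image. This solves \textsc{Partitioned Subgraph Isomorphism} in time $n\cdot\Delta^{\Oh(k)}$ (componentwise if $H$ is disconnected), so no $n^{o(k/\log k)}$ lower bound can hold there, and no padding restores one. Hardness genuinely needs dense, arbitrary compatibility relations between colour classes; in 2-CSP language, a large alphabet $\Sigma$ with arbitrary relations $C_e\subseteq\Sigma^2$. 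But for dense $G_0$ your selector construction does not have linear neighborhood complexity. Take $A=U_x\cup U_y$: the vertex of $W_e$ encoding the edge $ab$ of $G_0$ has trace exactly $\{u_a,u_b\}$ on $A$. So there are as many distinct traces as edges between $V_x$ and $V_y$, up to $\abs{A}^2/4$. Either way (dense or bounded-degree $G_0$), you lose one of the two properties you need.

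\textbf{The missing idea.} No vertex should carry two coordinate-dependent relations. Copies of the same tuple should be tied together only by an \emph{identity} relation (a matching or its complement). Such a relation contributes additively rather than multiplicatively to the trace count, because each vertex has a unique special partner.

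For example, split $W_e$ into two copies $W_e^x,W_e^y$ joined by the matching between equal tuples. Let $W_e^x$ see $U_x$ only through the $x$-coordinate, and symmetrically for $W_e^y$. A vertex of $W_e^x$ then falls into one of two cases:
\begin{itemize}
\item its partner is in $A$, giving at most $\abs{A\cap W_e^y}$ traces, one per partner;
\item otherwise its trace is determined by its $x$-coordinate relative to $A\cap U_x$, giving at most $\abs{A\cap U_x}+1$ traces.
\end{itemize}
Only $\Oh(\abs{A})$ classes meet $A$ or are non-trivially related to a class meeting $A$, and every vertex in any other class has trace $A$. Summing gives $\Oh(\abs{A})$ traces, and the clique size becomes $\abs{V(H)}+2\abs{E(H)}$.

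\textbf{How the paper does it.} The paper implements this idea through the cell gadget of~\cite{BonnetDD26}, starting from 2-CSP on $3$-regular bipartite constraint graphs. Each constraint gets five clique blocks, each holding one copy of every allowed pair. The central block is joined to the four border blocks by the complement of the identity matching. Border blocks of consecutive cells around a variable are related only through one coordinate. Linear neighborhood complexity then follows from bounded radius-$2$ merge-width via~\cite{Bonamy25}. Finally, complementation turns the \textsc{Independent Set} formulation $\alpha(G_\Gamma)\ge 5q$ into \textsc{Clique}.
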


\begin{proof}
By \cite[Theorem~1.3]{Marx10} specialized to bipartite 3-regular graphs, as stated in \cite[Theorem~1.4]{Karthik24}, unless the ETH fails, \textsc{2-CSP} with alphabet $\Sigma$ and $q$ constraints cannot be solved in time $g(q)\,|\Sigma|^{o(q/\log q)}$ for any computable function $g$.

Let $\Gamma$ be such an instance, and $H = (X, Y; E)$ be its constraint graph.
Thus $\abs{E}=q$.
For every constraint $e = uv \in E$, where $u \in X$ and $v \in Y$, let $C_e \subseteq \Sigma^2$ be the set of pairs allowed by $e$.
We construct the cell gadget from the proof of~\cite[Theorem~3]{BonnetDD26} with $r=3$ and tile set $C_e$: each of its five \emph{blocks} contains a~copy of~$(a,b)$ for every $(a,b) \in C_e$.
Denote the two border blocks at the ends of its horizontal (resp.~vertical) path by $L_e$ and $R_e$ (resp.~by $U_e$ and $D_e$), and the central block by~$M_e$.
Each block is a~clique, and every copy of~$(a,b)$ in $M_e$ is adjacent to every vertex in $L_e \cup R_e \cup U_e \cup D_e$ except the four copies of~$(a,b)$.

For every $u \in X$, let $e_1, e_2, e_3$ be the edges of~$H$ incident to~$u$.
For every $i \in [3]$, we add edges between $R_{e_i}$ and $L_{e_{{i \bmod 3}+1}}$, making two vertices adjacent precisely when their first coordinates differ.
Similarly, for each variable $v \in Y$, we arbitrarily list the edges of~$H$ incident to~$v$ as $f_1, f_2, f_3$ and, for every $i \in [3]$, add edges between $D_{f_i}$ and $U_{f_{{i \bmod 3}+1}}$, making two vertices adjacent precisely when their second coordinates differ.

Denote the resulting graph by $G_\Gamma$.
By the correctness argument of \cite{BonnetDD26}, $\Gamma$ is satisfiable if and only if $\alpha(G_\Gamma)\geq 5q$, and
\[|V(G_\Gamma)| = 5\sum_{e\in E}|C_e| \leq 5q|\Sigma|^2.\]
Moreover, by the radius-2 merge sequence from \cite{BonnetDD26}, $\mw_2(G_\Gamma) = \mw_{r-1}(G_\Gamma) \leq 4r-3 = 9$.
Observe indeed that the upper bound on the width of the merge sequences does not depend on which pairs of border blocks from distinct cells are connected.
Let $\mathcal C$ consist of the complements of all graphs $G_\Gamma$.
By \cite[Theorem~1.5]{Bonamy25}, the graphs $G_\Gamma$ form a~class of linear neighborhood complexity.
This property is preserved under complementation. 

Now an algorithm for $k$-\textsc{Clique} on $\mathcal C$ running in time $f(k)n^{o(k/\log k)}$, applied to $\overline{G_\Gamma}$ with $k=5q$, would decide $\Gamma$ in time
\[f(5q)\bigl(5q|\Sigma|^2\bigr)^{o(q/\log q)} = g(q)\,|\Sigma|^{o(q/\log q)}.\]
The polynomial time needed to construct $G_\Gamma$ is absorbed in this bound.
This contradicts the ETH lower bound of~\textsc{2-CSP}.
\end{proof}

\paragraph{AI Disclosure.} 
A number of proofs in this work have been developed with the assistance of ChatGPT 5.5 Pro. 
\begin{itemize}[nosep]
  \item The idea to use a~longest-common-extension data structure, critical to \cref{thm:sdd-seq}, has been brought up by ChatGPT.
  \item The details of the bootstrapping process presented in \cref{lem:intv-ds-bootstrapping} and \cref{lem:intv-ds-tiny-universe} were developed with ChatGPT's assistance. (The data structure of \cref{lem:intv-ds-quasilinear} was developed by the authors.)
  \item A number of technical details in the proof of \cref{thm:LV-algo} have been developed with ChatGPT's assistance.
\end{itemize}
The suspected generalizations in~\cref{thm:sdd-seq-gen,thm:LV-algo-gen} were confirmed by GPT-5.6 Sol Pro, which also found the extensions in~\cref{thm:k4-detection,thm:k5-detection} of the triangle-detection algorithm. 
In each case, while ChatGPT suggested pieces of the write-up, the final write-up is due to the authors and has been polished and verified by us.

\newpage

\appendix

\section{Neighborhood complexity and versatile symmetric difference}

We give a~shorter (albeit non-algorithmic) proof for the inclusion of class families that can already be deduced in~\cref{lem:sampling_pairs}. 
In \Cref{thm:lnc-implies-bvsd}, we derive versatile symmetric difference from VC density thanks to the following fundamental result of Haussler.

\begin{theorem}[Haussler's packing lemma~\cite{Haussler95}]\label{thm:Haussler}
  Let $\cal F$ be a set system of VC dimension $d_0$ with $|\mathcal F| > 1$ on a finite domain~$D$, and let $c,\rho\ge 1$ be real numbers such that $\pi_{\cal F}(m) \le c\cdot m^\rho$ for every positive integer~$m$.
  Then $d_0 \le \Oh(\log c+\rho\log \rho)$, and there are distinct $F,F'\in\cal F$ such that
  \[
    |F\triangle F'|
    \le d_0\cdot c^{1/\rho}\cdot\frac{|D|}{|\mathcal F|^{1/\rho}}
    = \Oh_{c,\rho}\left(\frac{|D|}{|\mathcal F|^{1/\rho}}\right).
  \]
\end{theorem}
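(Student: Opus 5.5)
The plan is to derive both parts of the statement from a single packing bound. First I dispose of the bound on $d_0$. If $A\subseteq D$ with $|A|=d_0$ is shattered, then $2^{d_0}=|\mathcal F\res{A}|\le\pi_{\mathcal F}(d_0)\le c\,d_0^\rho$. Taking logarithms gives $d_0\le \log c+\rho\log d_0$. The standard estimate $x\le a+b\log x\Rightarrow x=\Oh(a+b\log b)$ then yields $d_0=\Oh(\log c+\rho\log\rho)$. Since $|\mathcal F|>1$, we also have $d_0\ge1$, which is needed below.

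For the distance bound I argue by contradiction. Set $\delta:=d_0c^{1/\rho}|D|/|\mathcal F|^{1/\rho}$ and suppose $|F\triangle F'|>\delta$ for all distinct $F,F'\in\mathcal F$, i.e.\ $\mathcal F$ is $\delta$-separated. The heart of the matter is Haussler's packing inequality in its shatter-function form. I aim to show that a $\delta$-separated system satisfies $|\mathcal F|\le \mathbb E_{S}\,|\mathcal F\res{S}|$ up to a factor that can be driven below any slack, where $S$ is a random sample of size $m\approx d_0|D|/\delta$. I follow Haussler's argument, with Kupavskii's streamlining where it helps.

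\begin{itemize}
\item Draw a uniformly random $(m+1)$-subset, or sample, $S$ of $D$, and a uniformly random point $x\in S$.
\item Consider the one-inclusion graph on $\mathcal F\res{S}$. By the Haussler shifting lemma, it has at most $d_0\,|\mathcal F\res{S}|$ edges, since VC dimension does not increase under restriction.
\item Double count pairs $(F,x)$ such that the trace $F\cap (S\setminus\{x\})$ is not determined uniquely, i.e.\ traces that split into an edge of the one-inclusion graph in direction $x$.
\item Use $\delta$-separation to lower-bound the expected number of such pairs. Conditionally on $S\setminus\{x\}$, two sets in the same bucket differ in more than $\delta$ points of $D$, so the fresh point $x$ separates them with probability roughly $\delta/|D|$.
\end{itemize}

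Comparing the two counts gives an inequality of the form
\[
  \mathbb E\,|\mathcal F\res{S\setminus\{x\}}|\cdot\Bigl(1-\tfrac{d_0|D|}{\delta(m+1)}\Bigr)\ \lesssim\ \mathbb E\,|\mathcal F\res{S}|-\mathbb E\,|\mathcal F\res{S\setminus\{x\}}|
\]
plus a telescoping step. This finally yields $|\mathcal F|\le \pi_{\mathcal F}(m)\le c\,m^\rho$ with $m=d_0|D|/\delta$. Plugging in the definition of $\delta$ gives $c\,m^\rho=c\,(|\mathcal F|/c)=|\mathcal F|$. The contradiction must therefore come from the inequality being strict. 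Hence the conclusion is stated with $\le$, which exactly matches a strict separation hypothesis $>\delta$.

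The main obstacle is this last step: getting the packing inequality with exactly the constant $1$ in front of $d_0|D|/\delta$ and with no extra multiplicative factor in front of $\pi_{\mathcal F}$. The naive Haussler computation loses a factor of about $2$ (or $e$) and rounding in $m$. My first attempt is to run Haussler's telescoping over all sample sizes $0,1,\dots,m$, so that the per-step losses multiply into $\mathbb E|\mathcal F\res{S}|$ rather than into $|\mathcal F|$. I would sample with replacement, as in \cref{thm:learning}, to make the separation probability exactly $|F\triangle F'|/|D|>\delta/|D|$. I would also choose $m=\lfloor d_0|D|/\delta\rfloor$ and use monotonicity of $\pi_{\mathcal F}$ to absorb the floor. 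If the exact constant still resists, I would fall back on \cref{thm:learning} with $\eps$ small and $s$ chosen so that $\pi_{\mathcal F}(s)<|\mathcal F|$ forces two sets into one bucket with both learned well. That route gives the bound up to a factor $(1+o(1))$, which suffices for the $\Oh_{c,\rho}$ form used in \cref{thm:lnc-implies-bvsd}.
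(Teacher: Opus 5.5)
Your bound on $d_0$ is the paper's argument verbatim. For the distance bound, the paper does not reprove anything. It takes $\delta$ to be the minimum pairwise distance, imports $|\mathcal F|\le c\,(d_0|D|/\delta)^\rho$ from Haussler's packing lemma (via the proof of Lemma~5.14 in Matou\v{s}ek's book), and rearranges.

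Your attempt to rederive this with the exact constant has a genuine gap. You aim for the intermediate inequality that a family whose pairwise distances all exceed $\delta$ satisfies $|\mathcal F|\le\pi_{\mathcal F}(\lfloor d_0|D|/\delta\rfloor)$. This inequality involves only the separation and the VC dimension, and in that generality it is false. Take the singletons $\{1\},\dots,\{n\}$ on $D=[n]$. Here $d_0=1$ and all pairwise distances equal $2>3/2$, yet $\pi_{\mathcal F}(\lfloor 2n/3\rfloor)=\lfloor 2n/3\rfloor+1<n$ for $n\ge 4$. So no refinement of the telescoping, the rounding, or the sampling model can produce it.

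Haussler's double count, done carefully, gives $|\mathcal F|\bigl(1-\frac{2d_0(|D|-s)}{(s+1)\delta}\bigr)\le\pi_{\mathcal F}(s)$. This is vacuous until $s$ is about $2d_0|D|/\delta$. The singleton family shows this threshold is unavoidable for any bound of the form $|\mathcal F|\le\pi_{\mathcal F}(s)$. Converting via $\pi_{\mathcal F}(s)\le cs^\rho$ then costs a constant factor depending on $\rho$.

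Note also that for $\rho=1$, $c=2$ the singleton family makes the displayed inequality tight. It holds there only because $\pi_{\mathcal F}(1)=2$ forces $c\ge 2$. Any proof of the exact constant $d_0c^{1/\rho}$ must therefore use the hypothesis $\pi_{\mathcal F}(m')\le c\,m'^\rho$ at scales other than $m'=d_0|D|/\delta$, and your plan never does.

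Your fallback does not close the gap either, and its loss is not $1+o(1)$. With \cref{thm:learning}, two well-learned sets sharing a bucket are at distance less than $2d_0|D|/(\eps s)$, and the pigeonhole step needs $cs^\rho<(1-\eps)|\mathcal F|$. Together these lose the factor $2/\bigl(\eps(1-\eps)^{1/\rho}\bigr)$. This factor exceeds $2$ for every $\eps\in(0,1)$, blows up as $\eps\to0$, and is at best $8$ when $\rho=1$.

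That argument, or the Haussler computation above, does prove $|F\triangle F'|=\Oh_{c,\rho}(|D|/|\mathcal F|^{1/\rho})$, which is all that \cref{thm:lnc-implies-bvsd} needs. It does not prove the displayed bound with constant $d_0c^{1/\rho}$. For that you have to rely, as the paper does, on the explicit form of the packing lemma taken from the literature.
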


\begin{proof}
  Since $|\cal F|>1$, we have $d_0\ge 1$.
  Hence,
  \[
    2^{d_0} \le \pi_{\cal F}(d_0)\le c\cdot d_0^\rho,
  \]
  which implies that $d_0\le \Oh(\log c+\rho\log\rho)$.

  Let
  \[
    \delta:=\min\{|F \triangle F'| : F,F' \in \mathcal F, F \neq F'\}.
  \]
  By Theorem 1 of~\cite{Haussler95}, more precisely by the proof\footnote{The constant $C_2$ on page 158 of \cite{matousek1999geometric} is $c\cdot d_0^\rho$ in our notation, and the last inequality on page 159, for $\cal P=\cal F$, gives $|\cal F|=|\mathcal P|\le c\cdot(d_0|D|/\delta)^\rho$.} of \cite[Lemma 5.14]{matousek1999geometric},
  \[
    |\mathcal F| \le c\cdot\left(\frac{d_0|D|}{\delta}\right)^\rho.
  \]
  Rearranging gives
  \[
    \delta\le d_0\cdot c^{1/\rho}\cdot\frac{|D|}{|\mathcal F|^{1/\rho}},
  \]
  as required.
\end{proof}

We can now prove the main result of this section.
Note that a~stronger algorithmic result is proved in \Cref{lem:sampling_pairs}.

\begin{theorem}\label{thm:lnc-implies-bvsd}
  For every fixed pair of real numbers $c,\rho\ge 1$, there is a constant $d := d(c,\rho)$ such that every $n$-vertex graph $G$ with $(c,\rho)$-polynomial neighborhood complexity contains at least $\lfloor n/8\rfloor$ disjoint pairs of vertices $u,v$ satisfying $\sd_G(u,v)\le d n^{1-1/\rho}$.
 
  Consequently, every hereditary graph class of VC density~$\rho$ has versatile symmetric difference of exponent $1-1/\rho$.
  In particular, every class of linear neighborhood complexity has bounded versatile symmetric difference.
\end{theorem}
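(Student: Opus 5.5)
The plan is to extract pairs greedily, applying Haussler's packing lemma (\cref{thm:Haussler}) over and over to what remains of the vertex set, while keeping the domain equal to all of $V(G)$. We build disjoint pairs $\{u_1,v_1\},\{u_2,v_2\},\ldots$ one at a time. At step $j$ let $W_j := V(G)\setminus\bigcup_{i<j}\{u_i,v_i\}$, and we continue as long as $|W_j| \ge n/2$. Since every step removes exactly two vertices, this lasts for at least $\lfloor n/4\rfloor \ge \lfloor n/8\rfloor$ steps (for $n\ge 2$; the case $n\le 1$ is trivial). Note that we work in $G$ itself and never pass to an induced subgraph, because the symmetric difference has to be measured in $G$.

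At each step we look for two vertices of $W_j$ with small symmetric difference. If $W_j$ contains two twins $u,v$ of $G$, we take them as the next pair, since $\sd_G(u,v)=0$. Otherwise the set system $\mathcal F_j := \{N_G(w) : w\in W_j\}$ on the domain $D=V(G)$ satisfies $|\mathcal F_j| = |W_j| \ge n/2 \ge 2$, so $|\mathcal F_j|>1$. Its shatter function obeys $\pi_{\mathcal F_j}(m)\le \pi_G(m)\le c m^\rho$. \Cref{thm:Haussler} then gives distinct $u,v\in W_j$ with
\[
\sd_G(u,v) = |N_G(u)\triangle N_G(v)| \le d_0 c^{1/\rho}\frac{n}{(n/2)^{1/\rho}} \le d_0 (2c)^{1/\rho} n^{1-1/\rho}.
\]
Here $d_0 = \Oh(\log c+\rho\log\rho)$, which also follows from \cref{thm:Haussler}. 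So we may set $d(c,\rho) := d_0\,(2c)^{1/\rho}$, with $d_0$ replaced by its upper bound $\Oh(\log c+\rho\log\rho)$. This constant depends only on $c$ and $\rho$.

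Two subtle points need care. First, the constant must not degrade as $W_j$ shrinks. This is exactly why we keep the domain equal to $V(G)$ and only ever lower-bound $|\mathcal F_j|$ by $n/2$, instead of recursing on induced subgraphs where the neighborhoods themselves would change. Second, the packing lemma needs the sets to be distinct, and the twin case disposes of this. I expect neither point to be a real obstacle, so the whole proof is a straightforward greedy argument.

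For the consequences: a hereditary class of VC density $\rho$ consists of graphs with $(c,\rho)$-PNC for a single $c$. Taking $d_{\mathcal C} := \max(8, d(c,\rho))$ gives at least $\lfloor n/8\rfloor \ge \lfloor n/d_{\mathcal C}\rfloor$ disjoint pairs with $\sd \le d_{\mathcal C} n^{1-1/\rho}$. This is versatile symmetric difference of exponent $1-1/\rho$. Setting $\rho=1$ gives exponent $0$, which is bounded versatile symmetric difference for classes of linear neighborhood complexity.
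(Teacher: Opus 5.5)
Your proof is correct and follows essentially the same route as the paper. The paper also repeatedly applies Haussler's packing lemma to the neighborhoods of the surviving vertices over the fixed domain $V(G)$, with the family size bounded below by a constant fraction of $n$. The only difference is that you dispose of twins step by step, whereas the paper does a single global case split on the number of twin classes. Your version gives slightly better numbers, $\lfloor n/4\rfloor$ pairs and constant $d_0(2c)^{1/\rho}$ instead of $\lfloor n/8\rfloor$ and $d_0(4c)^{1/\rho}$. For the bounded case, round $d_{\mathcal C}$ up to an integer, since bounded versatile symmetric difference is defined with an integer constant.
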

\begin{proof}
Fix an $n$-vertex graph $G$ with $(c,\rho)$-polynomial neighborhood complexity.
Let $U\subseteq V(G)$ be a largest subset of vertices without twins in~$G$, that is, $N_G(u)\neq N_G(u')$ for any distinct $u,u'\in U$.
For each $u\in U$, let $C(u)$ be the set of vertices in $V(G)$ that are twins of~$u$.
Then $\{C(u):u\in U\}$ forms a partition of $V(G)$.
We split the argument depending on whether $|U|\le n/2$.

\medskip

\noindent{\bf Case $\bm{|U|\le n/2}$:}
For every $u\in U$, pairing vertices arbitrarily inside $C(u)$ gives $\floor{|C(u)|/2}$ disjoint pairs of twins.
Thus, the total number of pairs is at least
\[
  \sum_{u\in U} \left\lfloor \frac{|C(u)|}{2} \right\rfloor
  \ge \frac{1}{2}\sum_{u\in U}(|C(u)|-1)
  = \frac{n-|U|}{2}
  \ge \frac{n}{4}.
\]
In particular, this gives at least $\lfloor n/8\rfloor$ pairs satisfying the claimed bound.

\medskip

\noindent{\bf Case $\bm{|U|>n/2}$:}
Consider the set system of neighborhoods
\[
  \mathcal F_1:=\{N_G(u)\mid u\in U\}
\]
over the domain $V(G)$.
Its members are distinct and $|\mathcal F_1|=|U|>n/2$.
Every subsystem $\mathcal F'\subseteq\mathcal F_1$ satisfies
\[
  \pi_{\mathcal F'}(m)\le \pi_G(m)\le c\cdot m^\rho
\]
for every positive integer~$m$.
By \Cref{thm:Haussler}, there is a~constant $k := k(c,\rho)$ such that every subsystem $\mathcal F'$ with at least two members contains distinct $F,F'$ with
\[
  |F\triangle F'|\le k\cdot\frac{n}{|\mathcal F'|^{1/\rho}}.
\]

The neighborhood $N_G(u)$ will be referred to as $F^u$ when considered as a member of the set system.
Starting with $\mathcal F_1$, repeat $\lfloor n/8\rfloor$ times: choose such a pair $F^{u_i},F^{v_i}$ and remove both sets to obtain $\mathcal F_{i+1}$.
Before every choice, fewer than $n/4$ sets have been removed, so the current subsystem has more than $n/4$ members and the preceding application of \Cref{thm:Haussler} applies.
The corresponding vertex pairs $\{u_i,v_i\}$ are disjoint, and every chosen pair satisfies
\[
  |N_G(u_i)\triangle N_G(v_i)|
  \le k\cdot\frac{n}{|\mathcal F_i|^{1/\rho}}
  < k\cdot 4^{1/\rho}n^{1-1/\rho}.
\]

Taking $K:=k\cdot4^{1/\rho}$ proves the first assertion.
For the consequence, if $\mathcal C$ has VC density~$\rho$, choose a constant $c$ witnessing this fact and apply the first assertion to every graph in~$\mathcal C$.
With $d:=\max(8,\lceil K\rceil)$, each $n$-vertex graph in $\mathcal C$ has at least $\lfloor n/d\rfloor$ disjoint pairs with symmetric difference at most $dn^{1-1/\rho}$.
Thus, $\mathcal C$ has versatile symmetric difference of exponent $1-1/\rho$.
\end{proof}

\end{document}